\documentclass[10pt,a4paper]{report}

\usepackage{fancyhdr}
\usepackage{ifthen}
\usepackage{enumerate}
\usepackage{amssymb}
\usepackage[mathscr]{eucal}
\usepackage[errorshow]{tracefnt}
\usepackage{amsmath}
\usepackage{amssymb}
\usepackage{array}
\usepackage{amsthm}
\usepackage{eucal}
\usepackage{epsf}
\usepackage{epsfig}
\usepackage{pgf}
\usepackage[apple mac]{inputenc}
\usepackage{picins}

\usepackage{multirow}

\usepackage{bbold}
\usepackage{wrapfig}
\usepackage{slashed}	
\usepackage[small]{caption}

%%%%%%%%%%%%%%%%%%%%%%%%%%%%%%%%%%%%%%%%%%%%%%%%%%%%%%%%%%%%%%%%%%%%%%%%%%%%%%%  mm

% Dirac notation
%%%%%%%%%%%%%%%%%%%%%%%%%%%%%%%%%%%%%%%%%%%%%%%%%%%%%%%%%%%%%%%%%%%%%%%%%%%%%%%

%
%       Document
%
%%%%%%%%%%%%%%%%%%%%%%%%%%%%%%%%%%%%%%%%%%%%%%%%%%%%%%%%%%%%%%%%%%%%%%%%%%%%%%%

\RequirePackage{hyperref}

\numberwithin{equation}{section}

\addtolength{\textwidth}{1.7cm}
\addtolength{\textheight}{1.7cm}
\addtolength{\hoffset}{-0.8cm}
\addtolength{\voffset}{-0.85cm}

\def\BWworldsheet{\Sigma_g }
\def\BWmetric{{\cal M}_g}
\def\BWspacetime{{\cal E}}
\def\BWnmod{{\kappa}}

\def\BWIZ{\mathbb{Z}}
\def\BWIR{\mathbb{R}}
\def\BWIC{\mathbb{C}}
\def\BWID{\mathbb{D}}
\def\BWIRP{\mathbb{RP}}
\def\BWIMS{\mathbb{MS}}
\def\BWIKB{\mathbb{KB}}

\def\BWNmod{{N}}
\def\BWmodPara{\alpha}

\newtheorem{BWdef}{Definition}
\newtheorem{BWtheorem}{Theorem}
\newtheorem{BWex}{Example}
\newtheorem{BWexerc}{Exercise}
\newtheorem{BWprop}{Proposition}

\newcommand{\BWD}{{\cal D}}
\newcommand{\BWgauge}{{\rm Diff} \times {\rm Weyl}}
\newcommand{\BWe}{{\rm e}}
\newcommand{\BWi}{{\rm i}}
\newcommand{\BWd}{{\rm d}}

\hyphenation{un-com-pac-ti-fi-ed ste-reo-gra-phic con-ti-nu-ous}

\begin{document}

 \pagestyle{empty}
% \vskip-10pt
% \hfill {\tt ?????????}

\begin{center}

\vspace*{2cm}

\noindent
{\LARGE\textsf{\textbf{Lectures on Scattering Amplitudes in String Theory}}}
\vskip 2truecm

%%%

\begin{center}
{\large \textsf{\textbf{Wieland Staessens$^\spadesuit$\footnote{aspirant FWO.} and Bert Vercnocke$^\heartsuit$$^{\diamondsuit }$ }}} \\
\vskip 1truecm
        {\it $\spadesuit$  {Theoretische Natuurkunde, \\
     Vrije Universiteit Brussel \& International Solvay Institutes, \\
     VUB-campus Pleinlaan 2, B-1050, Brussel, Belgium} \\
  $\heartsuit$ {Institut de Physique Th\'{e}orique,\\
CEA/Saclay, CNRS-URA 2306,\\
Orme des Merisiers, F-91191 Gif sur Yvette, France}\\
 $\diamondsuit$ {Afdeling Theoretische Fysica,\\
     Katholieke Universiteit Leuven, \\
     Celestijnenlaan 200D bus 2415, B-3001, Heverlee, Belgium} \\
[3mm]e-mail:} {\tt Wieland.Staessens@vub.ac.be, Bert.Vercnocke@cea.fr} \\
\end{center}
\vskip .5 cm

\small{Based on lectures given by the authors at the Fifth International Modave Summer School on Mathematical Physics, held in Modave, Belgium, August 2009.}

\vskip 1 cm
\centerline{\sffamily\bfseries Abstract}
\end{center}

\noindent In these lecture notes, we take a closer look at the
calculation of scattering amplitudes for the bosonic string. It is
believed that string theories form the UV completions of
(super)gravity theories. Support for this claim can be found in
the (on-shell) scattering amplitudes of strings. On the other hand,
studying these string scattering amplitudes opens a window on the UV
behavior of the string theories themselves. In these short set of
lectures, we discuss the two-dimensional Polyakov path integral for the string, 
and its gauge symmetries, the connection to Riemann surfaces and how to 
obtain some of the simplest string scattering amplitudes. We end with some 
comments on more advanced topics. For simplicity we  limit ourselves to bosonic
open string theory in 26 dimensions.

%  figures: 8 +3 +2 +2 = 15
% pages: 70 + appendix

\newpage
\chapter*{Preface}

These lecture notes form the written version of the lectures given in August 2009 at the fifth Modave Summer School in Belgium. The Modave Summer School in Mathematical Physics is a school organised by PhD students for PhD students and young postdocs. The main intention of the school is to bring together passionate young  researchers and provide them with a platform to teach each other different subjects in theoretical and mathematical physics. 

As young, enthusiastic researchers ourselves, we had the audacious plan to prepare a series of lectures about scattering amplitudes in string theory. As most of the work on scattering amplitudes in string theory has been done in the '70s and '80s, this research remains often unappreciated by and mysterious for people new to the field, despite the fact that one can extract a lot of important and interesting information (such as UV properties, analyticity, unitarity, low-energy behavior) of the string theory at hand. Moreover, the study of scattering amplitudes reveals very nice connections with the mathematical theory of Riemann surfaces, since the residual conformal symmetry on the world-sheet can be recast into a complex structure. 

With these lecture notes we hope to give a brief, yet comprehensible introduction to scattering amplitudes in string theory. The main focus of these lecture notes will be on string scattering amplitudes on genus 0 (sphere) and genus 1 (torus) surfaces. It is not our intention to give a full, consistent introduction to string theory. We therefore already presume a basic knowledge of string theory. For a complete introduction to string theory we refer to the many good text books and reviews that appeared in the last decades, for instance the Green-Schwarz-Witten volumes \cite{Green:1987sp,Green:1987mn}, Polchinski's books \cite{Polchinski:1998rq,Polchinski:1998rr}, Becker-Becker-Schwarz' recent book \cite{Becker:2007zj}  containing recent advances, and many others. These lecture notes should be considered as a modest attempt of two young and enthusiastic researchers to give an introduction to  scattering amplitudes in string theory. In a sense one can compare our efforts to a small orchestra performing a symphony. We do not claim to have rewritten Beethoven's ninth, but we give here our own interpretation of it.   \\

\vspace*{40pt}
We hope you enjoy our interpretation,\\

{\raggedleft \noindent Bert Vercnocke\\
Wieland Staessens\\
\vspace{12pt}
\noindent September 2009\\}

\pagestyle{plain}
\tableofcontents
%%%%%%%%%%%%%%%%%%%%%%%%%%%%%%%%%%%%%%%%%%%%%%%%%%%

\chapter{Introduction\label{BWc:Introduction}}

\textit{Ordinarily, in celebrated quantum field theories such as the standard model, we study point particles that have interactions dictated by a certain Lagrangian. A (quantum mechanical) theory of strings replaces point particles by tiny vibrating strings. At large length scales (low energies for the string), the string effectively looks like a point particle. The reason we do string theory, is because it has many nice features. It can incoroporate both the standard model and quantum gravity, while lacking the disturbing divergences of many other attempts at quantizing gravity. Crucial for the lack of divergences are the local symmetries of (classical) string theory. It is not a priori clear that these (classical) symmetries remain valid upon quantization. The requirement that the symmetries remain valid at the quantum level can be translated in anomaly cancellation conditions. And these conditions constrain for instance the number of dimensions in which the quantum string lives or the type of groups and group representations entering in string theory. Not only anomaly cancellation but also the chiral spectrum of the standard model impose tight constraints on the possible scenarios for a quantum gravity which incorporates the standard model. String theory offers a framework in which all these conditions can be satisfied. With these considerations in mind, we would like to study scattering amplitudes in string theory and investigate which information can be extracted out of the amplitudes. In this chapter, we give a lightning review of the building blocks of (bosonic) string theory (Lagrangian, spectrum of states, operators, S-matrix) needed to discuss amplitudes for interactions between various string states.}

\section{The String and its interactions}

\subsection{What is a String?}
A point particle traces out a one-dimensional worldline in spacetime, parametrized by a single real function, $\tau$, the particle's proper time. When a string propagates in a $D$-dimensional spacetime, it sweeps out a specific two-dimensional area. As the string is modeled as a continuous one-dimensional extended object of length $\ell$,\footnote{We shall choose $\ell$ to be $2 \pi$ without loss of generality.} we expect to parameterize it by two coordinates: $\sigma$ describing a point on the string and $\tau$ denoting the eigen-time of the string, see figure \ref{BWfig:String_Worldsheet}. 
%%%%%%%%%%%%%%%%%%%%%%%%%%%%%%%%%%
\begin{figure}[ht!]
% \begin{wrapfigure}{R}{0.6\textwidth}
  \centering
% \vspace{-.2\textheight}
%
\begin{picture}(0,0)
\put(-3,88){$\tau$}
\put(70,-5){$\sigma$}
\put(110,60){$X^\mu(\tau,\sigma)$}
\end{picture}
\includegraphics[height=0.25\textwidth]{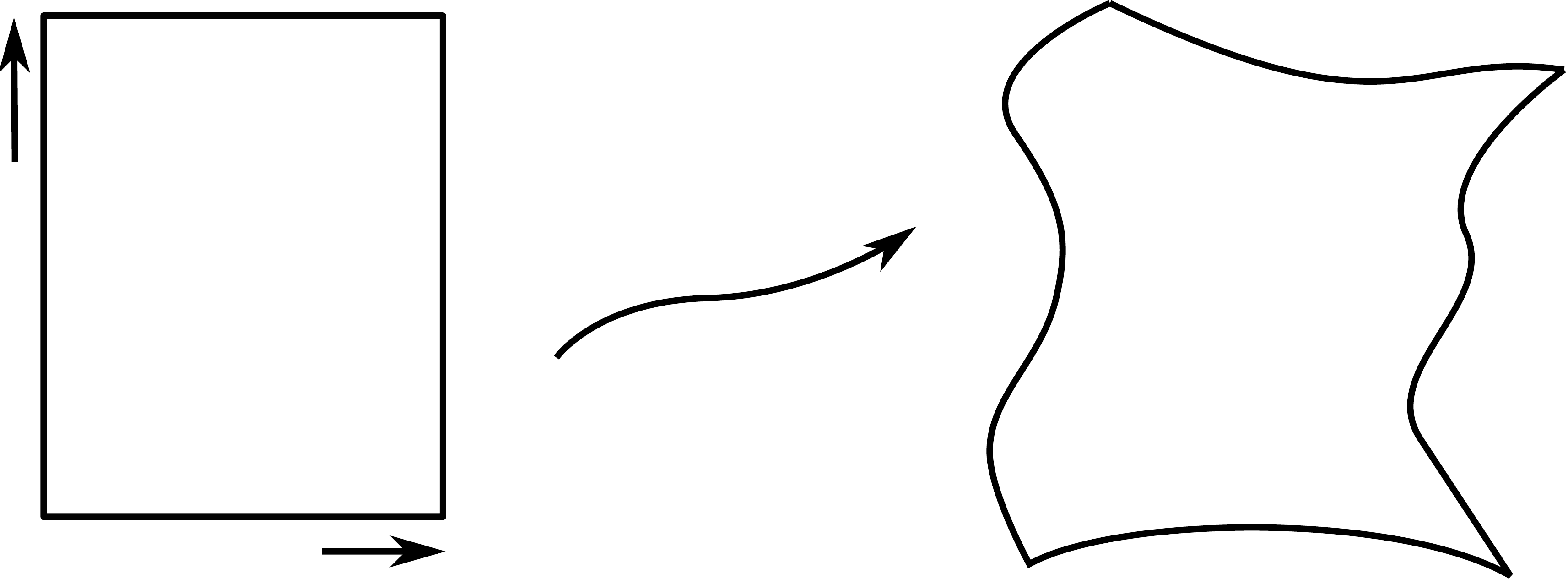}
% see thesis, use that figure here.
  \caption{A string worldsheet with coordinates $(\sigma^0,\sigma^1)=(\tau,\sigma)$ as an embedding $X^\mu(\tau,\sigma)$ into $D$-dimensional spacetime.\label{BWfig:String_Worldsheet}}
\label{probebr}
% \end{wrapfigure}
\end{figure}
%%%%%%%%%%%%%%%%%%%%%%%%%%%%%%%%%%%
Put differently, we model the area swept out by a
string as a two-dimensional manifold and call it the string
worldsheet $\BWworldsheet$. To find the minimal area swept out by
the string, we have to extremise the following action,
\begin{eqnarray}
{\cal S} \sim \int_{\BWworldsheet} d{\cal A},
\end{eqnarray}
where $d{\cal A}$ describes an infinitesimal area. This type of
action, first proposed by Nambu and Goto, is clearly analogous to
the treatment of a point particle propagating in spacetime.\\
The Nambu-Goto-perspective is a good starting point to describe the
motion of the string, but sooner or later one stumbles upon the
difficulties of formulating its quantum version. These issues are
solved by treating the string worldsheet as a dynamical quantity
itself. To this end, we introduce general worldsheet coordinates
$(\sigma^0, \sigma^1)$ (for instance a choice of $(\sigma^0=\tau, \sigma^1=\sigma)$ with $\tau$ the eigentime and $\sigma$ a spatial coordinate along the string) and a worldsheet metric $g_{ab}$ on
$\BWworldsheet$. The worldsheet is embedded into spacetime with
spacetime metric $G_{\mu \nu} (X)$ by the embedding functions
$X^\mu(\sigma, \tau),\, \mu=1\ldots D$. Note that the components of the worldsheet metric, being symmetric, give three real functions defined on the worldsheet. These are auxiliary functions, which should not appear in the physical description of the minimal string worldsheet. More on this below.

An alternative, (classically) equivalent action  was proposed by Polyakov,
\begin{eqnarray}
{\cal S}_P = \frac{1}{4\pi \alpha'} \int_{\BWworldsheet} d^2 \sigma
 \sqrt{g} g^{ab}
\partial_a X^\mu \partial_b X^\nu G_{\mu \nu}(X),\label{BWeq:Polaction}
\end{eqnarray}
where $g = |det(g_{ab})|$. To avoid issues with boundary
contributions we limit ourselves to closed
strings\footnote{This amounts to an identification: $\sigma^1 \sim
\sigma^1 + 2\pi$}, and for simplicity we consider a flat
spacetime $G_{\mu \nu} = \eta_{\mu \nu}$.  The Polyakov action is invariant under Lorentz transformations \emph{in $D$-dimensional  spacetime}  ($X^\mu \to \Lambda^\mu{}_\nu X^\nu + a^\mu$, with $\Lambda$ satisfying $\Lambda\cdot\eta\cdot \Lambda^T = \eta$ and $a^\mu$ a constant translation), which is what we want in order for string theory to be a candidate theory describing real-world physics. The peculiar properties of this type of action lie in the invariance of the action under the following transformations \emph{on the worldsheet}:
\begin{itemize}
\item[(1)] Two-dimensional diffeomorphisms: 
\begin{equation}
\sigma^i
\rightarrow \tilde \sigma^i(\sigma)\,, \quad g_{ab} \to \frac{\partial \tilde \sigma^c}{\partial \sigma^a}\frac{\partial \tilde \sigma^d}{\partial \sigma^b} g_{cd} .
\end{equation}
\item[(2)] Weyl rescalings of the metric: 
\begin{equation}
g_{ab}(\sigma)
\rightarrow \Lambda(\sigma) g_{ab}(\sigma).
\end{equation}
\end{itemize}

The invariance of the action under these symmetries can be exploited to fix the three auxiliary functions we introduced through the worldsheet metric components $g_{ab}$. In fact, the exact way of doing this (which has several subtleties), makes up a large part of chapter \ref{BWc:Fixing}.  In short, we can choose a fixed metric to get rid of (most of) the gauge freedom (Weyl rescalings and diffeomorphisms), thereby leaving only an invariance under \emph{conformal transformations}, a combined action of the Weyl invariance introduced above with part of the diffeomorphism group, that leaves the metric invariant. 
% In the rest of this text, we go to natural complex coordinates $z = \sigma + \BWi t, \bar z = \sigma - \BWi t$ (following a Wick rotation to go to Euclidean signature $\tau = it$, see next chapter). With a fixed metric $\hat g_{ab} = \BWe^{\phi(z,\bar z)}\delta_{ab}$, we then write the Polyakov action as (note that the dependence on the Weyl factor $\phi$ drops out):
% \begin{eqnarray}
% {\cal S}_P = \frac{1}{2\pi \alpha'} \int_{\BWworldsheet} \BWd z \BWd \bar z\,G_{\mu \nu}(X)\,
% \partial X^\mu \bar\partial X^\nu ,
% \end{eqnarray}
% where $\partial = \partial/\partial z$ and $\bar \partial = \partial/\partial \bar z$. We go into this in detail in the next two chapters.

\subsubsection{Quantization}
We would like to know what an extremum principle for the string worldsheet implies for the movements and interactions of quantized strings. As a first step, we could go to the Hamiltonian formalism and use canonical quantization to study the system \eqref{BWeq:Polaction} of the vibrations of one string quantum mechanically. In most textbooks, this is the starting point, see e.g.\ Polchinski\cite{Polchinski:1998rq}, or Green-Schwarz-Witten \cite{Green:1987sp}. One finds that the excitations of the string are characterized by energy levels of a set of harmonic oscillators (roughly one for each dimension the string can move in)\footnote{To be correct, the number of independent oscillator modes is the number of spacetime dimensions minus two. You can understand this as the string vibrates only in transverse directions (transverse to the string worldvolume), but not longitudinally.}. For the \emph{closed string}, the lowest energy levels are given in table \ref{tab:LowestModes}. 

\begin{table}[ht!]
\centering
% $$
% \left\{ 
% \begin{array}{lll}
% m^2=-4/\alpha'& \text{Tachyon} &\\
% &&\\
% m^2 = 0& \text{Graviton}& \hat e_{\mu\nu} |0;k\rangle\\
% &\text{Antisymmetric two-tensor}&\hat B_{\mu\nu}\rangle\\
% &\text{Scalar}&\hat \phi |0;k\rangle\\
% &&\\
% m^2=4 n/\alpha'
% \end{array}
%  \right.\nonumber
% $$
\begin{tabular}{|lrl|}
\hline
\textbf{Energy Level}& \textbf{Excitation Mode} & \,\,\,\textbf{Name}\\
\hline
\hline
\multirow{2}{*}{$m^2=-4/\alpha'$}& \multirow{2}{*}{$|0;k\rangle$\qquad}\qquad&\multirow{2}{*}{\text{\hspace{1.5mm}Tachyon}} \\
&&\\
\hline
\multirow{4}{*}{$m^2 = 0$}& \multirow{4}{*}{$\left\{\begin{array}{r}\hat G_{(mn)} |0;k\rangle\\\hat B_{[mn]}|0;k\rangle \\\hat \phi |0;k\rangle\end{array}\right.$} &\multirow{4}{*}{$\left.\begin{array}{l}\text{Graviton}\\\text{Antisymmetric two-tensor}\\\text{Scalar}\end{array}\right.$}\\
% &\text{Antisymmetric two-tensor}&$\hat B_{[\mu\nu]}|0;k\rangle$\\
% &\text{Scalar}&$\hat \phi |0;k\rangle$\\
&&\\
&&\\
&&\\
\hline
\multirow{2}{*}{$m^2=4 n/\alpha'$}&&\\
&&\\
\hline
\end{tabular}

\caption{The lowest energy modes of the bosonic string. Left, the value for the mass squared $m^2 = -k_\mu k^\mu$ ($k^\mu$ is the $D$-dimensional momentum of the string), then the lowest excitations and their common names. The excitations with $m^2=0$ correspond to a two-index tensor w.r.t.\ $D$-dimensional spacetime. The indices $m,n$ refer to the directions transverse to the string. The excitations are written in an orthogonal decomposition of the two-tensor: $G_{(mn)}$ is the symmetric, traceless part, $B_{[mn]}$ the antisymmetric part and $\phi$ the trace. In the lower line, $n\in \mathbb{N}$ gives the values of the higher mass modes.}\label{tab:LowestModes} 
\end{table}

The lowest energy mode is written down as $|0;k\rangle$, denoting that is carries spacetime momentum $(k^\mu$). It describes a string moving around in spacetime, with no internal excitation (vibrations). Think of a guitar string that is not being plucked, but gets thrown around the room. The higher energy excitations describe strings with momentum and transverse excitations.  From the $D$-dimensional Lorentz group point of view, the modes with zero mass ($m^2=0$) carry the degrees of freedom for a spin 2-particle (called graviton), a spin 1 antisymmetric field (called $B$-field) and a scalar (called dilaton). The presence of the massless spin 2-field hints that maybe there is a particle in the string spectrum mediating the gravitational force. In chapter \ref{BWc:Conclusions}, we mention how indeed bosonic string theory reduces to an ordinary $D$-dimensional gravity theory (general relativity + other fields). Note that the lowest energy mode has negative mass squared $(m^2 = -k^\mu k_\mu <0)$, meaning this is a tachyonic mode. This is taken as a hint that the bosonic string is unstable. It is actually a bad idea to perform perturbative string theory in the bosonic string picture around Minkowski spacetime, as it  cannot be a true vacuum. Going to supersymmetric string theory (``superstring theory'') can solve this problem. We do not go in further detail, but stick to bosonic string theory for simplicity.

\subsection{How does a string interact?}

We can consider other ways of studying the quantization of the string in order to get the form of stringy interactions, The key thing to note is that the interactions are already included in the worldsheet description!\footnote{Cf.\ a treatment of QFT for point particles. Normally we always study the quantization of several fields, but in principle we could also try to quantize the system of point coordinates for a (bosonic) particle (i.e.\ generalizations of the action $\int d \tau \dot \eta_{\mu\nu} x^\mu x^\nu$, describing the length of a particle worldline). See for instance \cite{Strassler:1992zr} and references therein.}  In fact, the Polyakov action \eqref{BWeq:Polaction} is all we need. See figure \ref{fig:Stringy_Interactions}.
%%%%%%%%%%%%%%%%%%%%%%%%%%%%%%%%%%%%%%%%%%%%%%%%%
\begin{figure}[ht!]
\centering
\includegraphics[height=.15\textheight]{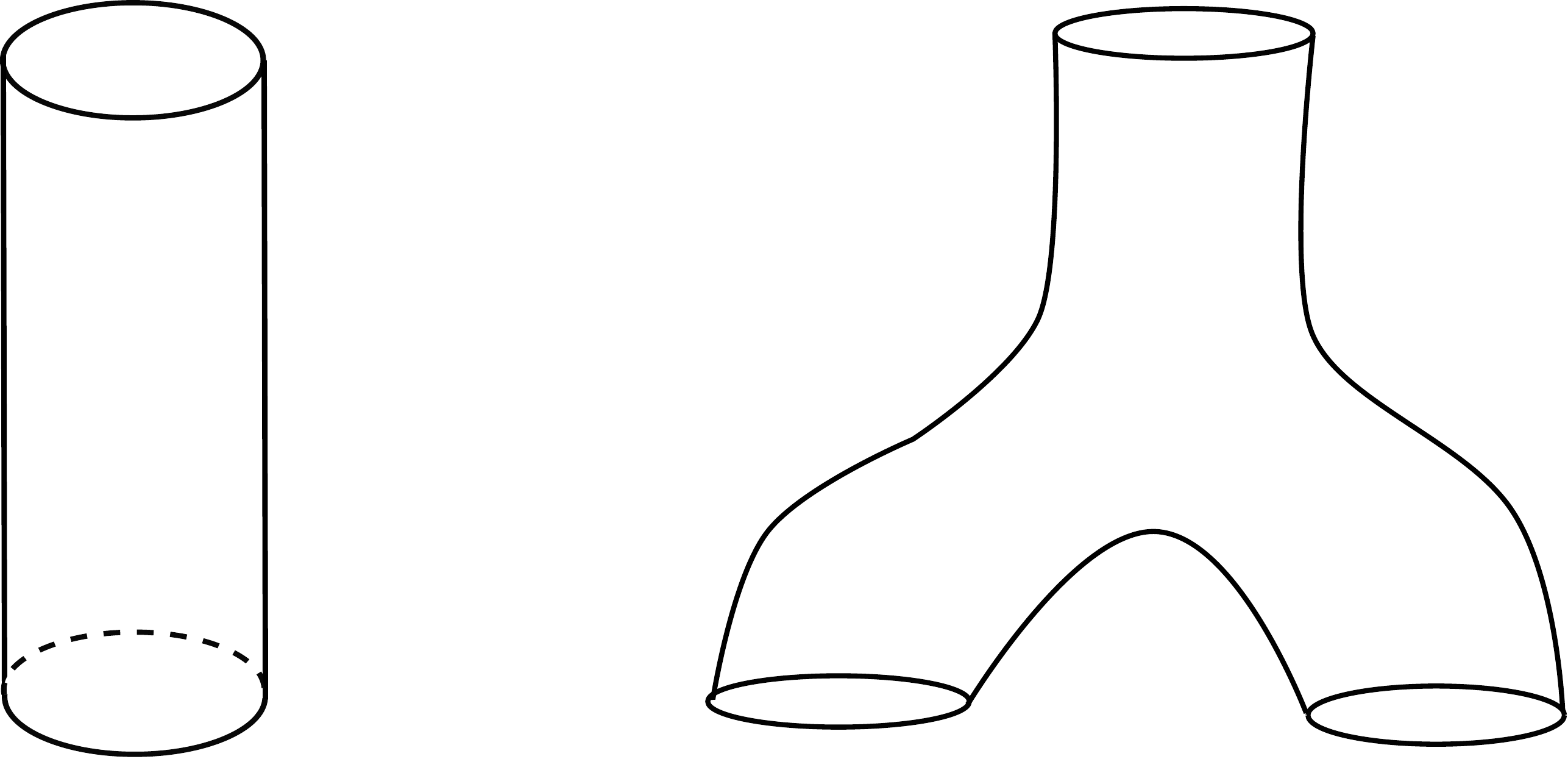}
\caption{Possible worldsheets for the closed string. Interactions are included and can be interpreted from the form of the worldsheet. The cylinder on the left can be interpreted as a closed string moving around in spacetime, while the second picture shows two strings joining to form another string. Note that we cannot localize the interaction at a certain point in a Lorentz invariant manner, see figure \ref{fig:Stringy_Interactions_2}. \label{fig:Stringy_Interactions}}
\end{figure}
%%%%%%%%%%%%%%%%%%%%%%%%%%%%%%%%%%%%%%%%%%%%%%%%%%

%%%%%%%%%%%%%%%%%%%%%%%%%%%%%%%%%%%%%%%%%%%%%%%%%%
\begin{figure}[ht!]
\centering
%%%%%%%
\begin{picture}(0,0)
\put(130,70){$\tau = \tau_0$}
\put(335,90){$\tau' = \tau'_0$}
\end{picture} 
%%%%%%
\includegraphics[height=.15\textheight]{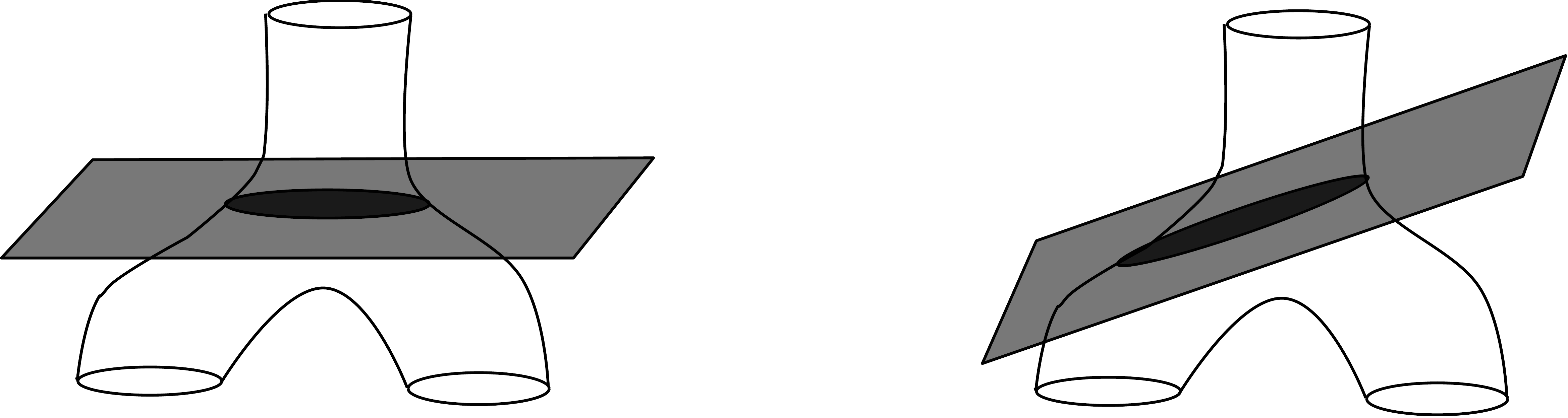}
 \caption{If we would try to localize the interaction a certain time $\tau_0$ (left), a Lorentz boost of the worldsheet coordinates $(\sigma,\tau)$ to $(\sigma',\tau')$ would disagree on the event where the interaction takes place, putting it in a different time slicing at $\tau'_0$.
\label{fig:Stringy_Interactions_2}}
\end{figure}
%%%%%%%%%%%%%%%%%%%%%%%%%%%%%%%%%%%%%%%%%%%%%%%%%%

So far so good, ``the interactions are included'', let's go ahead and calculate something! But how? And what? We base the discussion on the intuitive reasoning in \cite{Tong:2009np}, section 6. At first sight, we could try to find the amplitude associated to the process where a certain ``in''-state evolves to a certain ``out''-state.  Therefore, we should specify the in and out states (see figure \ref{fig:Stringy_Interactions_EndStates}). We therefore expect a string amplitude to depend on the out and in states:
\begin{equation}
 \langle out | in\rangle = \langle \hat A_{out}(x_i) \hat A_{in}(x_j) \rangle\,,\label{eq:string_observable}
\end{equation}

where $\hat A_{out}$ and $\hat A_{in}$  are (possibly composite) operators describing the in- and out states, depending on the positions $x_i, \, i= 1\ldots n_{in},\,x_j, \, j= 1\ldots n_{out} $ of the in- and out states. (A Fourier transformation would give the momentum dependence of the amplitude as suggested by figure \ref{fig:Stringy_Interactions_EndStates}).

In ordinary QFT, we can choose the in- and out states at will. We can then calculate the amplitude between any two states we desire. (Mostly, these states are taken at times $t_{in} =-\infty$ and $t_{out}=+\infty$ and are organised in the so-called S-matrix.) However, for string theory, this is not a viable method. Strings describe, as we will later see, a theory of gravity. Such a theory is invariant under general coordinate transformations. Only observables that obey this invariance make sense (the observables should be gauge invariant). But observables as in \eqref{eq:string_observable} are not invariant under spacetime diffeomorphisms. For this reason, one takes the in- and out going states in string amplitudes off to infinity. 
%%%%%%%%%%%%%%%%%%%%%%%%%%%%%%%%%%%%%%%%%%%%%%%%%%
\begin{wrapfigure}{r}{.3\textwidth}
\centering
\begin{picture}(0,0)
\put(5,95){$k_1$}
\put(70,100){$k_2$}
\put(5,-8){$k_3$}
\put(87,-10){$k_4$}
\end{picture}
\includegraphics[height=.15\textheight]{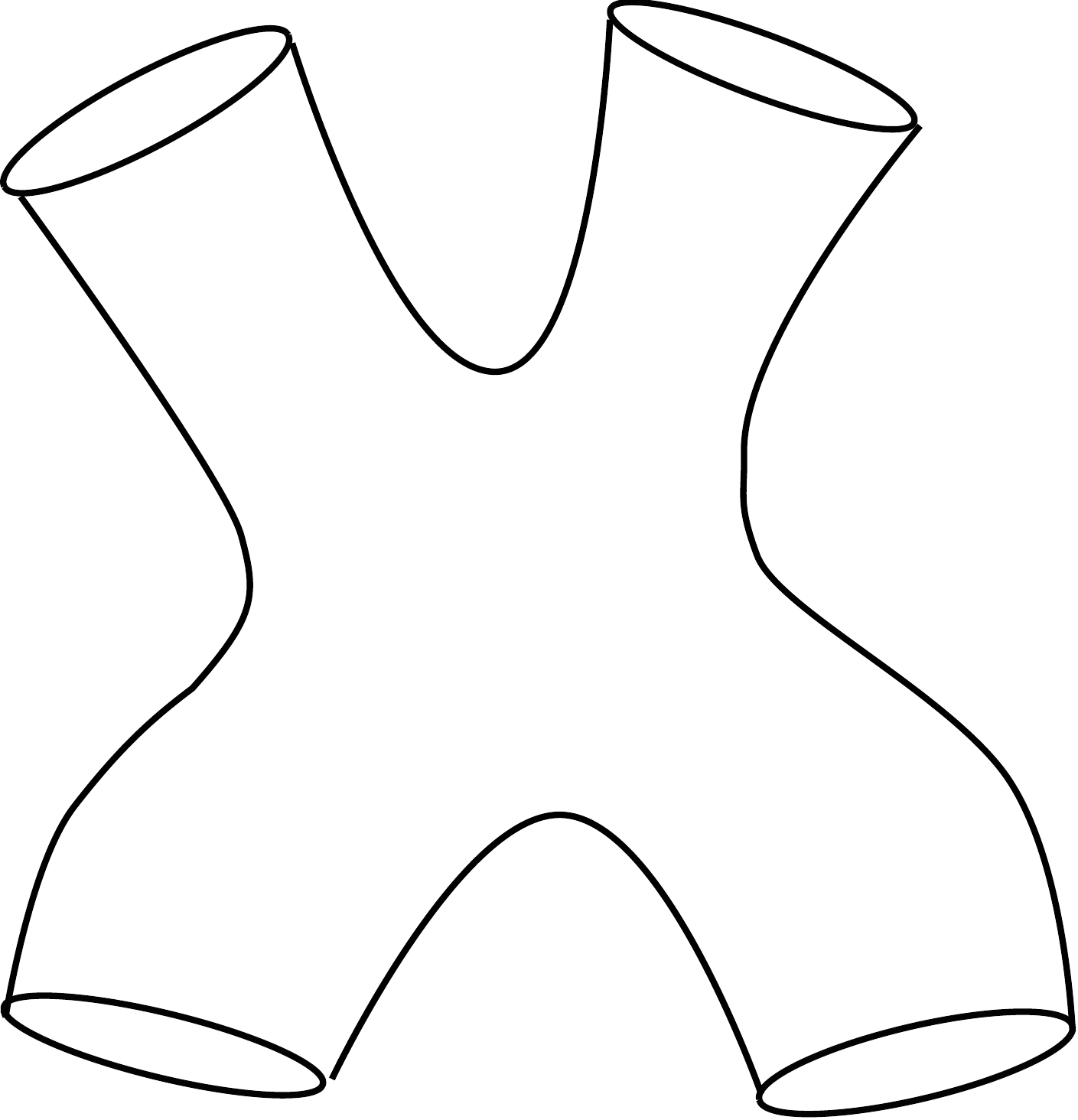}
 \caption{Before we can find the result for a string amplitude, we should specify the details of the in- and out states. E.g.\ we expect the result to depend on the momentum $ k_1, k_2, k_3, k_4$ of the incoming strings.}
\label{fig:Stringy_Interactions_EndStates}
\end{wrapfigure}
%%%%%%%%%%%%%%%%%%%%%%%%%%%%%%%%%%%%%%%%%%%%%%%%%%%%%%%%%%%%%%%%%%%%%%%%%%%%%%%%%%%%%%%%%%%%%%%%%%%%

This gives sensible results, that are in accordance with a theory of gravity, since the gauge transformations of gravity (as a gauge theory) are exactly those general coordinate transformations ($D$-dimensional diffeomorphisms) that die off at infinity. Of course this forms a restriction. Such amplitudes with the string sources at infinity, are scattering amplitudes, or S-matrix elements.

When the in- and outgoing string states are taken to infinity, the string diagrams as in figs.\ \ref{fig:Stringy_Interactions}--\ref{fig:Stringy_Interactions_EndStates} have legs that stretch all the way to infinity in spacetime. We can map those string states at infinity to points (``insertion points'') on the worldsheet by a conformal transformation (remember that the string action has conformal invariance -- more on conformal transformations in the next chapter.) I.e.\ the infinitely long legs of the diagram can be rescaled away, see figures \ref{fig:Stringy_Interactions_Punctures} and \ref{fig:Sphere_Punctures}. Moreover, it follows that the considered states must be on the mass-shell (``on-shell'' states, with $m_i^2 = -k_i^\mu k_{i\mu}\,\forall i$). This can be seen in different ways. First, in a gravity theory (such as string theory), off-shell amplitudes make no sense for reasons explained above. Second, the procedure that puts the string states at certain points/insertion points on the worldsheet is a limiting procedure, that exactly puts the states on-shell. 
% , see the discussion about vertex operators in section \ref{BWss:VertexOperators} below.
%%%%%%%%%%%%%%%%%%%%%%%%%%%%%%%%%%%%%%%%%%%%%%%%%%%
\begin{figure}[ht!]
\centering
\includegraphics[height=.17\textheight]{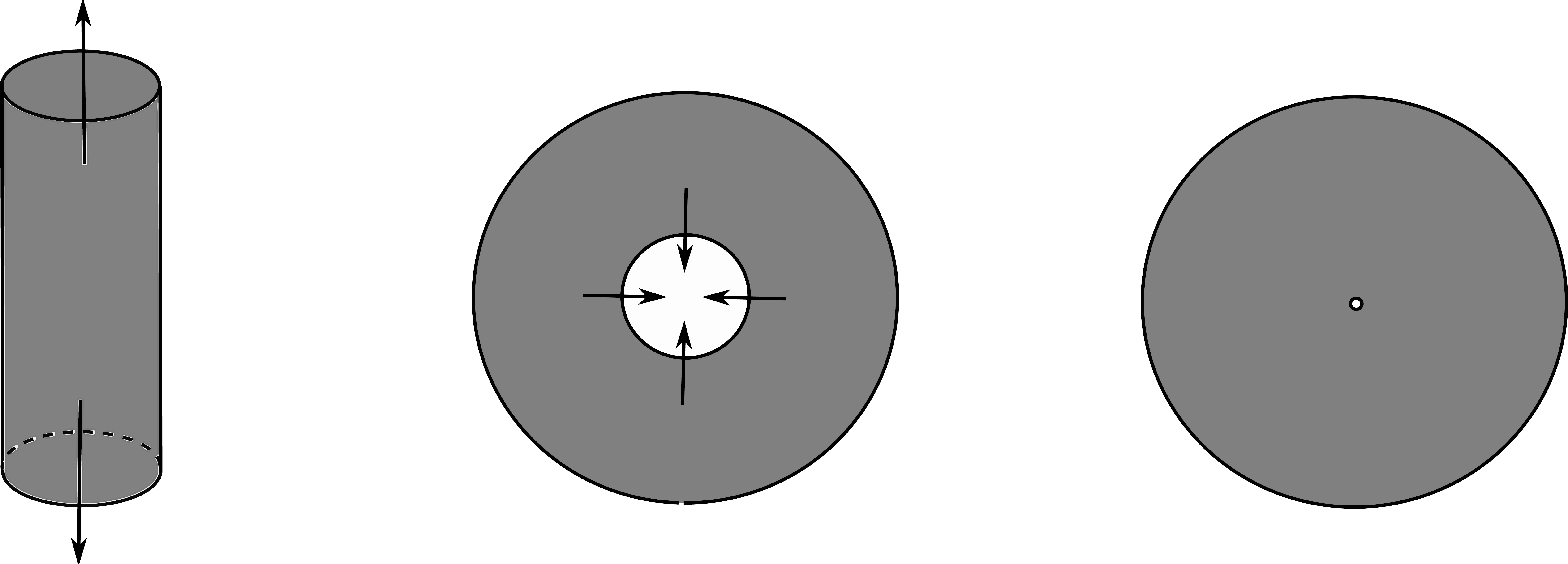}
\caption{By a conformal transformation, a cylinder can be mapped to an annulus, or even, by taking the inner ring of the annulus smaller and smaller, to a disk with a puncture.}
\label{fig:Stringy_Interactions_Punctures}
\end{figure}
%%%%%%%%%%%%%%%%%%%%%%%%%%%%%%%%%%%%%%%%%%%%%%%%%%%

\subsubsection{String S-matrix}

We are now ready to express the ``string S-matrix'' as a path integral, describing scattering between states coming in from infinity. Above, we showed that any element of the string S-matrix is associated to some two-dimensional worldsheet with insertion points. There exists a state-operator mapping, that associates to every state a certain operator, localized at such an insertion point. For obvious reasons, these operators are known as vertex operators. An element of the string S-matrix, with vertex operator insertions $V_1\ldots V_n$  can then be written as a path integral (hats for operators, no hats for corresponding functions)
\begin{equation}
 \langle \hat V_1 \ldots \hat V_n\rangle = \int \BWD X \BWD g \,e^{-S_P[X,g]} V_1(p_1)\cdot \ldots \cdot V_n(p_n)\,,\label{eq:Intro_Smatrix1}
\end{equation}
where the integration is over the fields in the Polyakov action $S_P$, namely the embedding coordinates $X$ and the metric $g$. You may find this strange. This 1+1 dimensional path integral promises to give a $D$-dimensional S-matrix, where $D$ is the dimensionality of spacetime! How physics in $D$-dimensions arises, is the subject of the chapters to come.

\subsection{Summing over loops}
So far, you may have the impression that a scattering amplitude in the string S-matrix, corresponds to a worldsheet with the topology of a sphere, with a number of insertion points with associated vertex operator insertions in the path integral, as in figure \ref{fig:Sphere_Punctures}.

%%%%%%%%%%%%%%%%%%%%%%%%%%%%%%%%%%%%%
% \begin{wrapfigure}{R}{.4\textwidth}
\begin{figure}
\centering 
\includegraphics[height=.15\textheight]{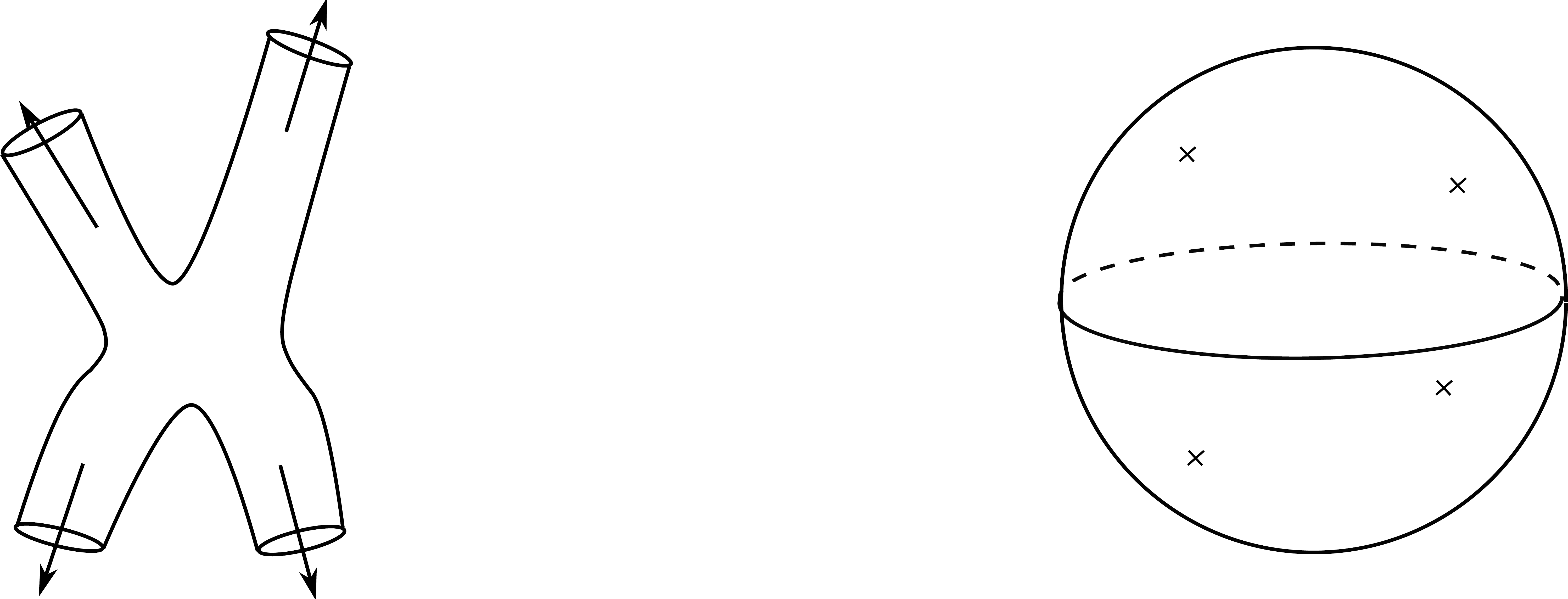}
\caption{By taking the legs to infinity, we can map tree level amplitudes (no loops) with $n$ in and out states, to a sphere with $n$ insertion points (denoted as crosses), presenting the same topology.\label{fig:Sphere_Punctures}}
% \end{wrapfigure}
\end{figure}
%%%%%%%%%%%%%%%%%%%%%%%%%%%%%%%%%%%%%
However, we should also include worldsheets with other topologies. The classification of possible worldsheet topologies is the subject of the next chapter. The conformal symmetry of the Polyakov action is used to come to the notion of worldsheets as Riemann surfaces and their classification (after a Wick rotation on the worldsheet, that changes the Lorentzian signature into a Euclidean one). For instance, we can take a look at worldsheets with any number of handles. No handles means the worldsheet has the topology of  a sphere, one handle the topology of a torus and so on. The number characterizing the number of handles is called the \textit{genus} of a surface. See figure \ref{fig:DifferentGenus}.
%%%%%%%%%%%%%%%%%%%%%%%%%%%%%%%%%%%%%
\begin{figure}[ht!]
\centering
\includegraphics[width=.9\textwidth]{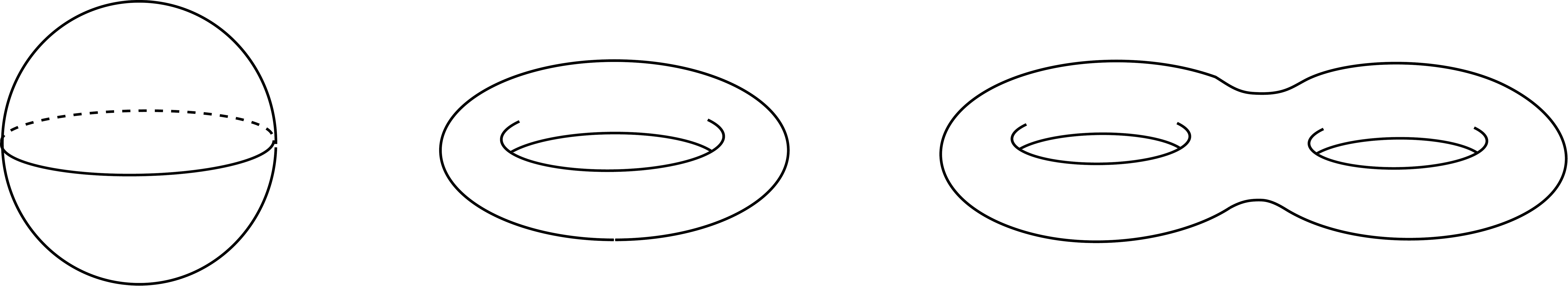}
 \caption{Riemann surfaces of genus 0,1,2.\label{fig:DifferentGenus}}
\end{figure}
%%%%%%%%%%%%%%%%%%%%%%%%%%%%%%%%%%%%%

It turns out that the path integral \ref{eq:Intro_Smatrix1} can be organised by increasing genus, much like the role the number of loops plays in ordinary QFT. Genus zero corresponds to no loops, genus one can be seen as one (closed) string going around in a loop etc. In order to see that there is also an (increasing) factor of a ``string coupling constant'' associated to the increasing genus, let us go back to the Polyakov action. The Polyakov action is invariant under worldsheet diffeomorphism and Weyl rescalings of the worldsheet metric. It makes sense to consider the most general action with those symmetries:
\begin{eqnarray}
{\cal S}_P = \frac{1}{4\pi \alpha'} \int_{\BWworldsheet} d^2 \sigma
\sqrt{g} g^{ab}
\partial_a X^\mu \partial_b X^\nu G_{\mu \nu}(X) +
\frac{\lambda}{4\pi} \int_{\BWworldsheet} d^2 \sigma \sqrt{g} R,\label{eq:String_Action}
\end{eqnarray}
where $\lambda$ is a constant and $R$ the Ricci scalar of the worldsheet metric $g_{ab}$. By virtue of the world symmetries we can eliminate the three degrees of freedom in the worldsheet metric. This means that $g_{ab}$ does not represent a dynamical gravity theory in two dimensions and the last term in the action is a purely topological term, which can be related to the number of handles g in the worldsheet. Theferfore, we need the Euler number $\chi(M)$ of a two-dimensional surface $M$. For a surface without boundary (appropriate for the description of worldsheets in \textit{closed} bosonic string theory), it is related to the second integral in the adjusted string action \eqref{eq:String_Action} and to the genus of $M$ as: (see section \ref{BWs:WSasRS} for more info)
\begin{equation}
 \chi (M) = \frac{1}{4\pi} \int_M d^2\sigma \sqrt{g}\,,\qquad \chi = 2 - 2 g \,.
\end{equation}
We see the extra term in the action weighs amplitudes in the S-matrix \eqref{eq:Intro_Smatrix1} with a factor of $(e^{-\lambda})^\chi$. We see that the constant $e^{\lambda}$ acts effectively as a string coupling constant, we call $g_s$
\begin{equation}
g_s = e^{\lambda} 
\end{equation}
and we can arrange an amplitude in a sum over worldsheets with different topologies:
\begin{equation}
 \langle \hat V_1 \ldots \hat V_n\rangle = \sum_{topologies} g_s^{-\chi}\int \BWD X \BWD g \,e^{-S_P[X,g]} V_1(p_1) \ldots V_n(p_n)\,,\label{eq:Intro_Smatrix}
\end{equation}
We will later see that the string coupling constant $g_s$ is very natural in string theory, it can be related to the dilaton, the scalar in the massless string spectrum (see table \ref{tab:LowestModes}). Note that the coupling $g_s$ introduced here corresponds to the \emph{open} string coupling constant. The coupling corresponding to adding a closed string to a process is $g_c = g_s^2$.

In conclusion, we see the S-matrix is organised in terms of increasing genus, as an expansion in powers of the string coupling $g_s$, see  figure \ref{BWfig:Loops_Comparison_PointParticle}.
%%%%%%%%%%%%%%%%%%%%%%%%%%%%%%%%%%%%%%%%
\begin{figure}[ht!]
% \centering
%
\begin{picture}(20,0)
\put(100,30){\Huge $+$}
\put(210,30){\Huge $+$}
\put(315,30){\Huge $+\,\,\, \ldots$}
\end{picture}
\includegraphics[height=.13\textheight]{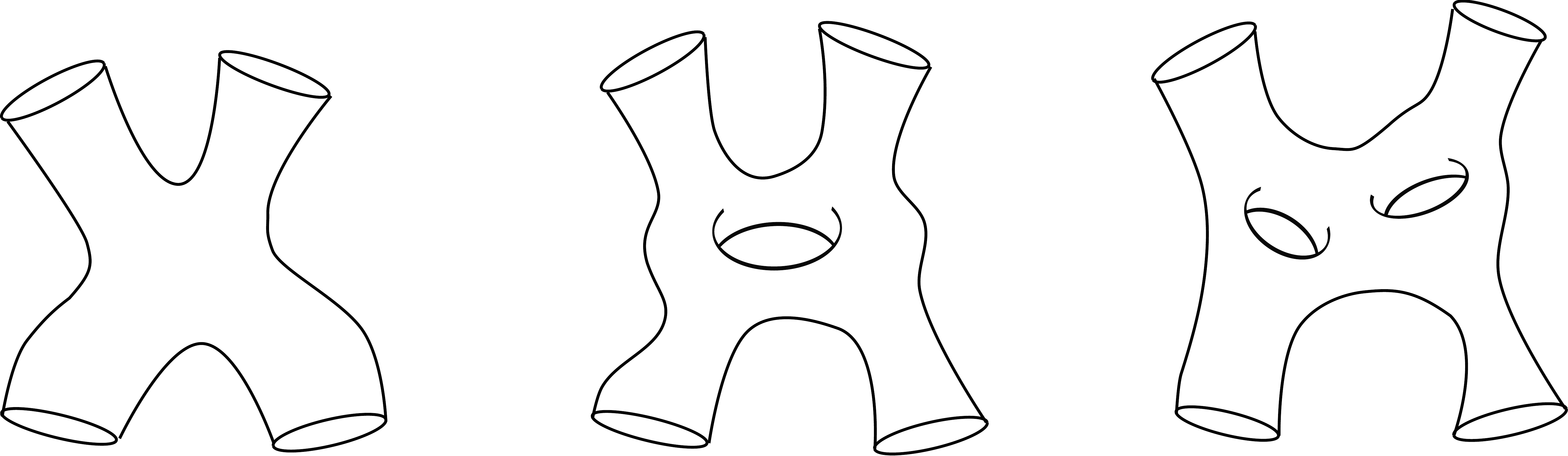}
\caption{One of the many simplifying consequences of using string theory. For a closed (oriented) string theory, there is exactly one diagram (one worldsheet topology) at each loop, while a QFT can have many different types of Feynman diagrams, that increase drastically with the number of loops. (For open string and unoriented string theories, there are more than one possibilities at each loop level, but still far less than in the standard model and the like.) This is in contrast with ordinary QFT, where the number of Feynman diagrams typically grows factorially with the number of loops/number of external legs.\label{BWfig:Loops_Comparison_PointParticle}}
\end{figure}
%%%%%%%%%%%%%%%%%%%%%%%%%%%%%%%%%%%%%%%%

\subsection{Vertex operators.}\label{BWss:VertexOperators}
We have not mentioned what form the vertex operators, describing the string states in the amplitude, would take. In fact, it is not hard to guess their form. For a pedagogic treatment, see the introduction of \cite{Green:1987sp}. We follow that reference. 

Since the only functions we have available on the worldsheet are the embedding coordinates $X^\mu(\sigma)$ and the worldsheet metric $g_{ab}(\sigma)$, we can easily guess the following form of the operators associated to the lowest mass modes of table \ref{tab:LowestModes}, by demanding the correct correspondence of the quantum numbers of each state/operator. We denote the candidate vertex operator as ${\cal V}(\sigma)$. For example, the tachyon is a scalar field, so we guess the easiest possibility, namely the identity operator. In the same way, we guess for the graviton, a symmetric two-tensor, ${\cal V}_{\rm grav} = s_{\mu\nu}g^{ab} \partial_a X^\mu \partial_b X^\nu$, where $s_{\mu\nu}$ is a symmetric, constant tensor. For the antisymmetric two-tensor we propose a contraction with the totally anti-symmetric symbol $\epsilon_{ab}$ on the worldsheet as ${\cal V}_{\rm antisymm}=a_{\mu\nu}\epsilon^{ab} \partial_a X^\mu \partial_b X^\nu$, and $a_{\mu\nu}$ is an antisymmetric constant tensor. The dilaton is another scalar field as far as spacetime symmetries are concerned. Since the unity operator is already taken by the tachyon, we make the next simple guess ${\cal V}_{dil} = \phi \alpha' R$, where $\phi$ is a constant and $R$ the Ricci scalar of the worldsheet. %g^{ab} G^{\mu\nu}\partial_a \partial_\mu X \partial_b \partial_\nu X$. 
For an overview and some more details, see table \ref{tab:LowestMass_Vertex}. 
\begin{table}[h t!]
\centering
\begin{tabular}{|l|cl|}
\hline
&$\mathbf{m}^2$&{\bf Operator} $\mathbf {\cal V}(\sigma)$\\
\hline\hline
tachyon&$-4/\alpha'$&$\BWe^{\BWi k_\mu X^\mu}$\\
\hline
graviton&0&$g_s s_{\mu\nu}g^{ab} \partial_a X^\mu \partial_b X^\nu \BWe^{\BWi k_\mu X^\mu}$\\
antisymm. tensor&0&$g_s a_{\mu\nu}\epsilon^{ab} \partial_a X^\mu \partial_b X^\nu\BWe^{\BWi k_\mu X^\mu}$\\
dilaton&0&$\phi \alpha' R \BWe^{\BWi k_\mu X^\mu}$\\
% \hline
% \ldots&$4n/\alpha'$&\ldots&\ldots\\
\hline
\end{tabular}
\caption{Lowest lying string states and the first guess for their associated vertex operator ${\cal V}(\sigma)$. The factor $\BWe^{\BWi k_\mu X^\mu}$ should be included to give these operators the same behaviour under spacetime translations $X^\mu \to X^\mu + a^\mu$ as the corresponding states, e.g.\ for the tachyon $|0;k\rangle \to \BWe^{\BWi k_\mu a^\mu} |0;k\rangle$ (since momentum space and coordinate space are related by a momentum translation). We have also added a factor of $g_s$, since an extra string is created in the process corresponding to adding the vertex operator. If you do not like this factor, think of it as a way to fix the normalization of the vertex operators.\label{tab:LowestMass_Vertex}}
\end{table}

We have only considered the Lorentz numbers of the vertex operators. In order to ensure invariance under  diffeomorphisms on the worldsheet, a vertex operator should take the following form:
\begin{equation}
 V = \int \BWd^2 \sigma \sqrt{g} {\cal V(\sigma)}\,.\label{BWeq:VertexOp}
\end{equation}
% It is now clear why we need to consider on-shell amplitudes.

Note that the string coupling constant $g_s = \BWe^\lambda$ introduced in eq. \eqref{eq:String_Action}, is actually determined by the vacuum expectation value of the dilaton. This can be seen from the form of the vertex operators. Say we would put a string in curved spacetime. Its coupling to all $m^2=0$ states (caused by interactions with a background of \textit{other} strings) would then take the form:
\begin{equation}
 S_P' = \frac1{4\pi\alpha'}\int_{\BWworldsheet} \BWd^2 \sigma\sqrt{g} \left(G_{\mu\nu}(X) g_{ab} \partial^a X^\mu\partial^b X^\nu + B_{\mu\nu}(X) \epsilon_{ab} \partial^a X^\mu\partial^b X^\nu + \alpha' \Phi R\right)\,,
\end{equation}
where $G_{\mu\nu}X)$ is the spacetime metric, $B_{\mu\nu}(X)$ an antisymmetric tensor in spacetime and $\Phi(X)$ a spacetime scalar. These exponentials should be seen as coherents states of gravitons, antisymmetric two-tensor fields and dilatons. This is justified by looking at the first terms in the Taylor series when expanding around the Minkowski vacuum $G_{\mu\nu}=\eta_{\mu\nu}, B_{\mu\nu}=0,\Phi=cst$, for instance for the spacetime metric 
\begin{equation}
 \frac1{4\pi\alpha'}\int_{\BWworldsheet} \BWd^2 \sigma\sqrt{g} G_{\mu\nu}(X) g_{ab} \partial^a X^\mu\partial^b X^\nu = S_P + \frac1{4\pi\alpha'}\int_{\BWworldsheet} \BWd^2 \sigma\sqrt{g} G^{(1)}_{\mu\nu}(X) g_{ab} \partial^a X^\mu\partial^b X^\nu
\end{equation}
where the first order term $G^{(1)}_{\mu\nu}$ is exactly interpreted as the vertex operator for the graviton state $G_{\mu\nu}^{(1)}=-4\pi g_s \alpha's_{\mu\nu} e^{\BWi k X}$.

Finally, we see that the extra term with the scalar curvature on the worldsheet, introduced in \eqref{eq:String_Action} and responsible for the organisation of the path integral in increasing loop number, is understood as a constant dilaton background (i.e.\ it arises as the constant term $\phi$ in $\Phi(X) = \phi + \dots$).\footnote{To be precise, the actual dilaton contains both $\Phi$ and the diagonal part of $G_{\mu\nu}$, see \cite{Polchinski:1998rq}.} Or in other words, by the vev of the dilaton. This is one of the many hard aspects of string theory. The coupling inherent to a perturbative treatment, is in principle determined by the dynamics of the strings in their own background. Although very nice from conceptual viewpoint (we do not have a ``free parameter'' as the couplings in the standard model), it is very tricky to calculate it from the coupled string dynamics.

\subsection{Other issues and glimpse forward}
A main problem one encounters in evaluating the path integral for string scattering amplitudes, is that due to the worldsheet symmetries of the Polyakov action (Weyl transformations and diffeomeorphisms), the path integral counts physically equivalents states multiple times. In fact, this overcounting renders the path integral infinite. There exist many equivalent methods to deal with this problem. For instance, one can choose to  fix these gauge symmetries in the Polyakov path integral explicitly, or one can couple the string theory to an appropriate ghost system and use the BRST-formalism to deal with the overcounting problem.
%The in-and out-states representing the scattering strings in the process, can be represented by Vertex-operators inserted at certain points on the string worldsheet or by wave-functions on the punctured string worldsheet. 
One can also use the conformal symmetry of the worldsheet more explicitly and construct the amplitudes using holomorphic properties. In these set of lectures we want to focus on the nice relation between the residual conformal symmetry on the string worldsheet and the mathematical theory of Riemann surfaces. We also prefer to emphasize the problem of overcounting due to the large amount of symmetries on the worldsheet by gauge fixing the path integral explicitly. The in-and out-states representing the scattering strings in the process will be represented by Vertex-operators inserted at certain points on the string worldsheet.

Let us first focus  on the relationship between the conformal symmetry on the worldsheet and the mathematical theory of Riemann surfaces, the main theme of chapter \ref{BWc:Riemann}. We have already seen that the Polyakov-action contains two types of local symmetry: two-dimensional diffeomorphisms and Weyl-rescalings of the metric. These two types of symmetry are related, as there exist diffeomorphisms which can be seen as Weyl-rescalings as far as the metric is concerned. Those diffeomorphisms that affect the metric ony by a rescaling, are called \emph{conformal transformations}. (The name comes from the fact that these diffeomorphisms, in their infinitesimal form, leave the \emph{angles} between vectors unaffected.) Moreover, one can use the diffeomorphisms and Weyl-rescalings to gauge fix the worldsheet metric, but the conformal transformations will form a residual symmetry which remains present. \footnote{In fact, the transformations leaving the fixed metric invariant, are each combinations of a conformal transformation and a Weyl transformation that rescales the metric to its original form.}

If we do gauge fix the worldsheet metric to the standard two-dimensional Minkowski-metric, we can introduce complex coordinates on the worldsheet,\footnote{This involves a Wick rotation $\sigma^0 \to \BWi \sigma^0$, which is explained in chapter 2. The rationale for such a Euclideanisation of the worldsheet is  twofod. On the one hand, it takes us to the well-known realm of Riemann surfaces instead of Lorentzian worldsheets, and on the other hand it makes the path integrals convergent.}
\begin{eqnarray}
z\equiv \frac{1}{\sqrt{2}} \left( \sigma^1 + \BWi \sigma^0\right), \quad \bar z \equiv \frac{1}{\sqrt{2}} \left( \sigma^1 - \BWi \sigma^0\right).
\end{eqnarray}
We add the additional constraint $z^* = \bar z$. With the choice of unit metric on the worldsheet ($\BWd s^2 = \BWd z\BWd \bar z$ in complex coordinates), this allows us to rewrite the Polyakov-action eq.~(\ref{BWeq:Polaction}) in terms of complex worldsheet coordinates,
\begin{eqnarray}
{\cal S}_P = \frac{1}{2\pi \alpha'} \int_{\BWworldsheet} d^2z\,
\partial X^\mu \bar\partial X^\nu G_{\mu \nu}(X).\label{BWeq:gaugefixedPol}
\end{eqnarray}
In this form of the action\footnote{In the remainder of these notes we assume that the string moves in a Minkowski spacetime, so that we can replace $G_{\mu \nu}(X)$ by $diag(-,+,\cdots,+)$.} the conformal transformations correspond to holomorphic transformations and the worldsheet can be endowed with the structure of a Riemann surface. In chapter \ref{BWc:Riemann} we see that Riemann surfaces can be classified according to their universal covering surface and the fundamental group $\pi_1(\BWworldsheet)$. The second part of chapter \ref{BWc:Riemann} gives an overview of the properties of Riemann surfaces that are are crucial for our further dissertation, such as the automorphism group and the moduli space of Riemann surfaces. The automorphsim group denotes the subset of $\BWgauge$ that does not affect the metric. The moduli parametrize the different choices of inequivalent metrics, not related by the combination of a diffeomorphism and a Weyl-transformation, one can place on most Riemann surfaces.
%Both the automorphism group as the moduli of Riemann surfaces are discussed in chapter \ref{BWc:Riemann}.

The diffeomorphism and Weyl-rescaling invariance lie at the heart of the overcounting problem in the Polyakov-action, which is the subject of chapter \ref{BWc:Fixing}. We consider first an easy toy model to see how gauge fixing works in practice. Keeping this toy model in the back of our mind, we discuss the gauge fixing of the Polyakov path integral. Fixing the Polyakov path integral basically comes down to integrating over the diffeomorphisms, the Weyl-rescalings and the moduli instead of integrating over the metric. The Jacobian corresponding to this ``change of coordinates" should be calculated explicitly. By using the moduli explicitly we have already eliminated the subset of diffeomorphisms with Weyl-rescalings (i.e. the diffeomorphisms that can also be seen as Weyl-rescalings). Only the conformal Killing symmetry group should thus be moded out. When there is a sufficient amount of vertex-operators we can use the conformal Killing symmetries to fix the positions of (some of) the vertex-operators. 

Once we know how to perform the integration of the metric in the Polyakov path integral, we are still left with the integration of the string embedding functions $X^\mu$. Also this integration depends on the metric of the worldsheet and the genus of the worldsheet, as we  see in chapter \ref{BWc:Amplitudes}. In this chapter we  work out some explicit examples of string scattering amplitudes for the sphere (genus 0) and the torus (genus 1). For the sphere we  limit ourselves to amplitudes with one, two, three and four tachyon operator insertions. For the torus we  focus on the partition function, which will explicitly exhibit the modulus parameter of the torus. 

In the last chapter we use the toolbox put together in the previous chapters to discuss higher genus amplitudes. We also have a brief look at supersymmetric string theories, where we have similar problems of gauge fixing. Afterwards, we take the low-energy limit of the scattering amplitudes (i.e. $\alpha ' \rightarrow 0$) and argue that the resulting amplitudes can be obtained from gravity theories. We also briefly discuss the Type II and Type Heterotic superstring theory and consider the bosonic part of their low-energy supergravity action. Also here the relationship between superstring theory and supergravity can be made clear looking at the scattering amplitudes, but this would take us to far from the main objectives of these lectures. It is known that many supergravity theories cannot be considered as UV-finite, but it is believed that superstring theories form the UV-finite limit of supergravity theories. This encourages us to look at the UV-behavior of string theories in the last part of chapter \ref{BWc:Conclusions}. \\
%We first notice that the divergences in bosonic string theory are due to tachyonic modes. In superstring theories the tachyon is eliminated and we do not expect to see this divergence. We can therefore interpret this tachyonic divergence in bosonic string theory as a signal that Minkowski spacetime does not solve the string equations of motion and cannot serve as a consistent background for bosonic string theory. Also the dilaton-tadpole does not vanish in bosonic string theory  

\emph{NOTE:} we have already encountered words like conformal and holomorphic many times. These terms are very much related to conformal field theory. For instance, the two-dimensional Polyakov-action is an excellent example of a two-dimensional conformal field theory. In these lecture notes we do not use the machinery of conformal field theory. Instead, we refer to the lectures of Raphael Benichou at this Modave School for an introduction to conformal field theory, or to the bible of conformal field theories \cite{DiFrancesco:1997nk} for a concise treatment. A treatment of scattering amplitudes using conformal field theory can be found in e.g. \cite{Polchinski:1998rq}.

%Leave to Wieland. Keep very short, we were to ambitious before! Maybe say how we are going to attack tree level (completely for tachyon), one lo op (not completely, only the  partition function), and higher loops (in words).
%\subsubsection{Overcounting (subject of section 3)}
%\subsubsection{Conformal field theory}
%Refer to Raphael's piece.
%Normal ordering. COmplex geometry.

\chapter{A pedestrian's guide to Riemann Surfaces\label{BWc:Riemann}}

\textit{In the introductory chapter \ref{BWc:Introduction} we have introduced the worldsheet of the string as the shape swept out by a string when propagating in spacetime. The two-dimensional field theory living on the worldsheet exhibits two types of local symmetry, i.e.\ diffeomorphism invariance and Weyl transformations. In this chapter we  start by giving a quick review of some basic properties of a two-dimensional manifold (endowed with a metric). The additional conformal symmetry (diffeomorphisms which leave the metric invariant up to a Weyl rescaling) on an oriented worldsheet induces the structure of a Riemann surface. The conformal symmetry thus invites us to have a closer look at Riemann surfaces, including their classification. A proper understanding of the characteristics of Riemann surfaces, such as conformal Killing vectors, moduli, etc is indispensable when we construct the correct string S-matrix. This chapter is based on \cite{Hitchin:2004b3c, Olson:200un, Lee:1997, Farkas:1980, Miranda:1995}.}

\section{The String worldsheet as a Riemann Surface\label{BWs:WSasRS}}
\subsection{The worldsheet as a surface}
When a string propagates in spacetime, it sweeps out a two-dimensional surface in
spacetime, which we called the worldsheet $\BWworldsheet $ of the
string. This shape can be described by the notion of a (smooth)
manifold\footnote{For a proper mathematical definition of the
concepts we use in this chapter, we refer to 
\cite{Nakahara:2003nw}.}. We can stitch patches on this shape that look locally
like $\BWIR^2$ and thus with a point in such a patch we can assign a
pair of (real) coordinates $(\tau, \sigma)$. From a physical point
of view $\tau$ is interpreted as the eigen-time of the string,
$\sigma$ as the eigen-length of the string. However, in most cases
we need several different patches to cover the entire shape. On the
overlap between two patches we can assign two different pairs of
coordinates to one single point. The transition from one patch to
another should make the patches compatible with each other, which is
expressed mathematically by diffeomorphic transition functions. Take
e.g. $(\tau, \sigma)$ in the first patch and $(\sigma^0, \sigma^1)$
in the second patch, then $\sigma^0 (\tau, \sigma)$ and $\sigma^1 (
\tau, \sigma)$ should be diffeomorphic functions. The group of
diffeomorphisms of the worldsheet $\BWworldsheet $ will be called
Diff($\BWworldsheet $). A two-dimensional manifold will be called a {\bf (topological) surface}\footnote{We
prefer to add the word topological to the definition to emphasize
the mathematical use of the word. However in the remainder we shall
use the word surface.}.
%We now introduce the concept (topological) surface\footnote{We prefer to add the word topological to the definition to emphasize the mathematical use of the word. However in the remainder we shall use the word surface.} and it will immediately become clear that the worldsheet $\BWworldsheet$ is a (topological) surface.
%\begin{BWdef}
%A (topological) {\bf surface} is a two-dimensional manifold.
%\end{BWdef}
\noindent Let us first have a look at some famous examples of surfaces.
\begin{BWex}
The most obvious example is the plane $\BWIR^2$ itself, where the
identity function forms the diffeomorphic transition function.
\end{BWex}
\begin{BWex}
The sphere $S^2$. To see that $S^2$ is indeed a surface, we use the
stereographic projection from the north pole: $(x, y, z) \mapsto  (\frac{x}{1-z}, \frac{y}{1-z})$. A full atlas also contains the stereographic projection from the south pole.
\begin{figure}[h]
\begin{center}
\includegraphics[width=0.6\textwidth]{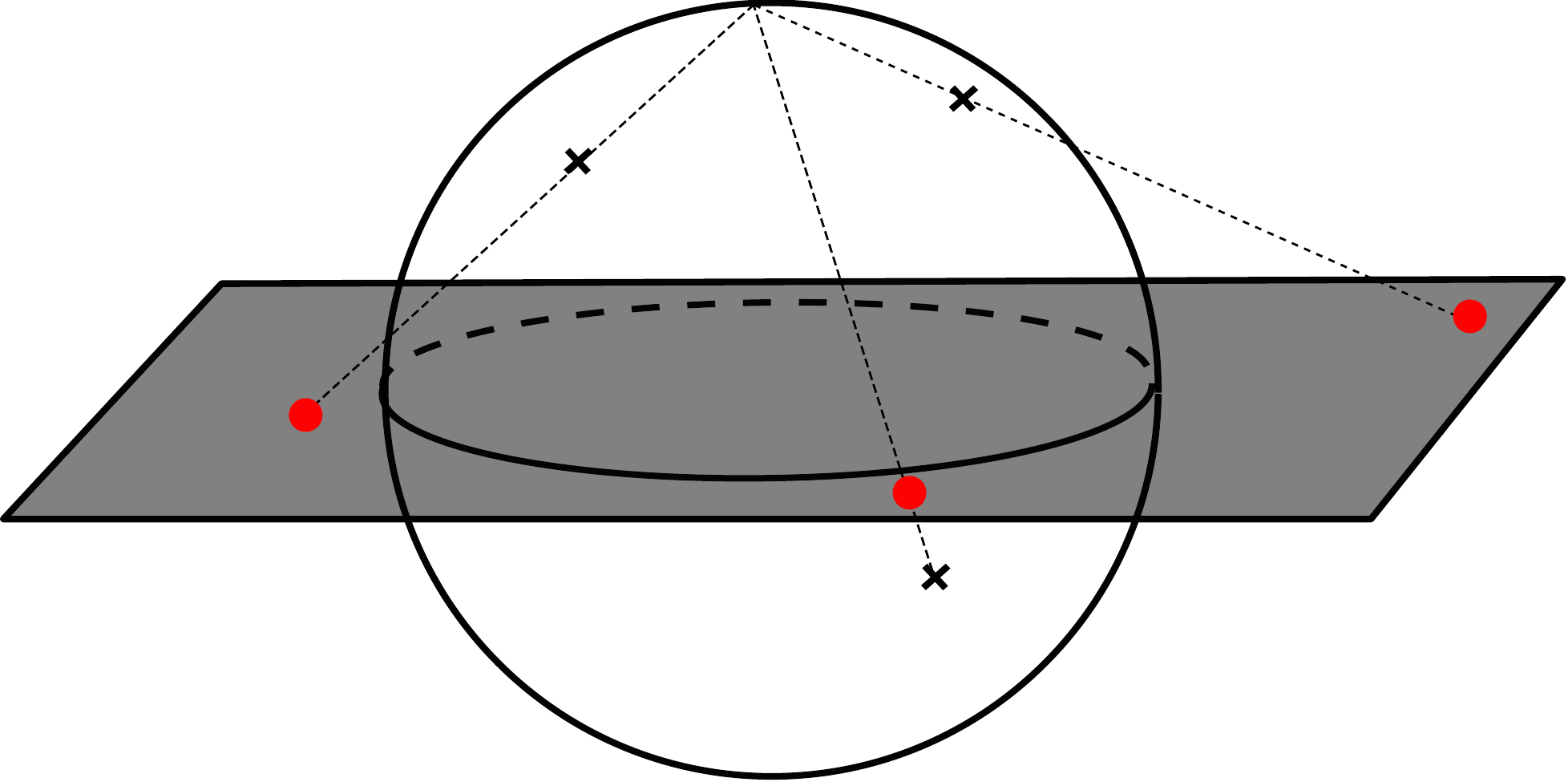}
\caption{Stereographic projection of the 2-sphere. Every point on the 2-sphere can be mapped onto the plain, except the antipodal point of the pole from which the projection starts.}
\label{BWref:StereoProj}
\end{center}
\end{figure}
\end{BWex}
\begin{BWex}
The disk $D_2$ and the closed disk $\bar D_2$.
\end{BWex}
\begin{BWex}
The torus $T^2$.
\end{BWex}
\begin{BWex}
The M\"{o}bius strip $\BWIMS$.
\end{BWex}
\noindent There are many more interesting examples, such as the
projective plane $\BWIRP^2$, the cylinder (or annulus) $C_2$, the
Klein bottle $\BWIKB$, etc. Instead of going through all these
examples explicitly it is more interesting to divide the surfaces
into two subclasses: oriented and unoriented surfaces.
\begin{BWdef}
A surface $M$ is {\bf oriented} (or orientable) if for any two
overlapping charts $U_i$, $U_j$ there exists local coordinates $\{
x^\mu\}$, $\{y^\alpha \}$ respectively, such that the jacobian
$J=\det(\partial x^\mu /
\partial y^\alpha) > 0$.
\end{BWdef}
\noindent When a surface (or more generally a manifold) is orientable, there exists a 2-form (or m-form respectively) on the surface which vanishes nowhere. This 2-form can be seen as a volume element. The topologically invariant property of orientability implies the following division of the surfaces we
discussed earlier,
\begin{table}[h]
\begin{center}
\begin{tabular}{|ccc|}
\hline \multirow{2}{*}{genus g} & \multirow{2}{*}{oriented} & \multirow{2}{*}{unoriented} \\
 & & \\
 \hline
 \hline
\multirow{2}{*}{${\rm g} = 0$} & \multirow{2}{*}{$\BWIR^2$, $S^2$, $D_2$} &\multirow{2}{*}{$ \BWIRP^2$}\\
& &\\
\multirow{2}{*}{${\rm g} = 1$} &\multirow{2}{*}{$T^2$, $C_2$} & \multirow{2}{*}{$\BWIMS$, $\BWIKB$} \\
& & \\
\hline
\end{tabular}
\caption{Overview table of oriented and unoriented surfaces for
genus 0 and genus 1}
\end{center}
\end{table}
\\

Besides orientability there exist other ways to characterize a
surface. One of the simplest ways is to count the number of
holes or handles in a surface. This number is called the {\bf genus}
g of a surface.\footnote{Mathematically, the genus can be defined by a suitable polygonization, see e.g. \cite{Nakahara:2003nw, Hitchin:2004b3c, Farkas:1980}.}
% The mathematical way to define the genus would start
% off by modelizing the surface as a polygon. Identifying different
% sides of the polygon and assigning to each side an oriented arrow,
% one can write the surface as a string of letters of the form $a_1
% a_1^{-1} ... a_g a_g^{-1}$ or  $a_1 b_1 a_1^{-1} b_1^{-1} ... a_g
% b_g a_g^{-1} b_g^{-1}$. In the first case we say the surface has
% genus 0, while in the latter case the surface has genus g. 
For our purposes the surfaces will be embedded in a higher-dimensional spacetime, by which the handles will be manifested in the shape of the surface. The most famous example is the embedding of the torus $T^2$ in $\BWIR^3$.
%%%%%%%%%%%%%%%%%%%%%%%%%%%%%%%%%%%%%
\begin{figure}[ht!]
\centering
\includegraphics[width=.8\textwidth]{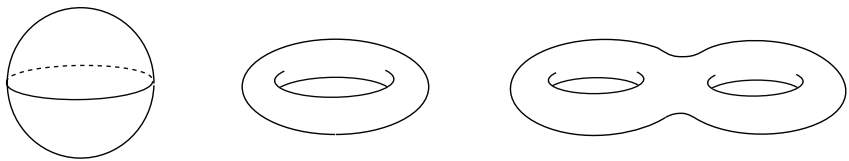}
 \caption{Riemann surfaces of genus 0,1,2.\label{fig:DifferentGenus2}}
\end{figure}
%%%%%%%%%%%%%%%%%%%%%%%%%%%%%%%%%%%%%

% \begin{BWexerc}
% Determine the genus of the surfaces given above by finding an appropriate polygon.
% \end{BWexerc}

In case of a compact surface $M$ embedded in $\BWIR^3$ (or more
generally in $\BWIR^n$) there is an important topological invariant,
the {\bf Euler characteristic} $\chi(M)$, which allows to
distinguish surfaces from each other. The most straightforward way
to determine the Euler characteristic is to (continuously) transform
the compact surface into a polyhedron with $V$ vertices, $E$ edges
and $F$ faces. The Euler-characteristic is defined as,
\begin{BWdef}
\begin{eqnarray}
\chi(M) \equiv V - E + F.
\end{eqnarray}
\end{BWdef}
An important statement\footnote{As a consequence it is possible to
give a much nicer treatment of deforming a surface into a polyhedron
using simplexes (generalizations of points, lines and triangles).
The use of simplexes allows for a triangulation not only of surfaces
but also of higher-dimensional objects, which leads to the field of
simplicial homology. It is thus possible to assign an
Euler-characteristic to higher-dimensional objects as well.} is that
it does not matter which polyhedron one uses, as long as the
polyhedron can be continuously transformed into the surface. There
is another useful, and more general, formula to calculate the Euler-characteristic of a
surface with g handles, $n_b$ boundaries and $n_c$ cross-caps,
\begin{eqnarray}
\chi (M) = 2 - 2g - n_b - n_c.
\end{eqnarray}
\begin{BWexerc}
Determine the Euler-characteristic of the surfaces given above.
\end{BWexerc}

\vspace{0.3cm}

\noindent Besides the manifold-structure, the string worldsheet
also contains a metric structure $g_{ab}$ in the
Polyakov-formulation. For the remainder of these lectures we will
assume that the worldsheet metric has a Euclidean signature. As is
well-known, going from a Lorentzian signature to a Euclidean
signature amounts to performing a Wick-rotation of the worldsheet
time-coordinate 
\begin{equation}
\sigma^0\to \sigma^2 = \BWi \sigma^0 
\end{equation}
One can show that performing\footnote{This argument
only holds in 1 or 2 dimensions. A brief formal argument is for instance given in chapter 3 of \cite{Polchinski:1998rq}} a Wick-rotation in the Lorentzian
path integral yields the Euclidean path integral, which justifies
the equivalence between the Lorentzian and Euclidean description.\\
Thus, from now on, we shall discuss surfaces which allow a
Riemannian metric structure. This implies the existence of the
metric connection ${\Gamma^{a}}_{bc}$ and of the objects
characterizing the (intrinsic) curvature of the surface, such as the
Riemann curvature tensor ${R^{a}}_{ bcd}$, the Ricci-tensor $R_{ab}$
and the Ricci-scalar $R$. In two dimensions the symmetries of the
Riemann-tensor imply that one independent component is sufficient to
characterize the entire Riemann-tensor. Therefore, we can write the
Riemann-tensor in terms of the Ricci-scalar as follows,
\begin{eqnarray}
R_{abcd} = \frac 1 2 \left(g_{ac} g_{bd} - g_{ad} g_{bc} \right) R.
\end{eqnarray}

We end this swift review on surfaces with one of the major
theorems for two-dimensional, compact, Riemannian manifolds $M$,
i.e.\ the Gauss-Bonnet Theorem. The theorem  states a connection
between a local property, the integral of the curvature, and a
global topological invariant, the Euler-characteristic. Taking also
boundaries into account, the theorem reads,
\begin{eqnarray}
\chi (M) = \frac{1}{4\pi} \int_M d^2\sigma \sqrt{g} R +
\frac{1}{2\pi} \int_{\partial M} ds\, k.
\end{eqnarray}
The second integral describes the integration of the geodesic
curvature $k$ of the boundary along the boundary of the manifold.
The geodesic curvature of the space-like boundary is defined by,
\begin{eqnarray}
k =  t^a n_b \nabla_a t^b,
\end{eqnarray}
where $t^a$ is a unit vector tangent to the boundary and $n^a$ an
outward pointing vector orthogonal to $t^a$. A proof of this
beautiful theorem (without the boundary contribution) can be found
in e.g. \cite{Lee:1997}. To demystify the theorem we propose the
following exercise.
\begin{BWexerc}
Calculate the Euler-characteristic for $S^2$, $D_2$ and $T^2$ using
the Gauss-Bonnet Theorem.
\end{BWexerc}

\subsection{Conformal Symmetry}
Besides diffeomorphism invariance we also noticed that the
Polyakov-formulation of the string exhibits another symmetry:
Weyl-symmetry. This Weyl-symmetry only arises in a Polyakov-type of
action for two dimensions. It therefore forms one of the special
ingredients that makes this formulation of string theory so interesting and attractive to
study.
%But before looking at
%two dimensions, we shall introduce some definitions that are valid
%for arbitrary dimensions. We start by defining a Weyl-transformation
%on the metric.
\begin{BWdef}
A {\bf Weyl-transformation} is a
transformation on the metric of the form,
\begin{eqnarray}
g(x) \rightarrow  e^{2 \omega(x)} g(x)
\end{eqnarray}
where the {\bf conformal factor} $\omega \in C^{\infty}(M)$. Two
metrics that are related to each other through a Weyl-transformation
are called {\bf Weyl-equivalent}.
\end{BWdef}
\begin{BWdef}
A {\bf conformal transformation} is a diffeomorphism which preserves the metric up to a Weyl transformation.
%\begin{eqnarray}
%\tilde g (y)_{\tilde a \tilde b} = \frac{\partial x^c}{\partial y x^{\tilde a} } \frac{\partial x^d}{\partial y^{tilde b}} g_{cd} (x) = e^{2 \omega(x)} g_{cd} (x)
%\end{eqnarray}
\end{BWdef}
Weyl-equivalence clearly defines an equivalence relation on the
space of metrics $\BWmetric$.
 and an equivalence class $[g]$ is
called a {\bf conformal structure} on $M$. The Weyl-transformations form the group Weyl($M)$ and the conformal transformations the group Conf$(M)$. It is clear from the notations that Weyl-transformations and conformal transformations are closely related to each other. These notions form the basis of conformal geometry, which studies the properties that are invariant under conformal transformations.\\
So far, the definitions we gave are valid for Riemannian manifolds with arbitrary dimensions. Now we shall see that two dimensions are indeed special. When performing a Weyl transformation on a two-dimensional metric, $\sqrt{g} R$ transforms as follows,
\begin{eqnarray}
\sqrt{\tilde g} \tilde R = \sqrt{g} \left(R - 2 \nabla^2 \omega
\right)
\end{eqnarray}
By choosing $\omega$ appropiately we can locally set the
Ricci-scalar to zero. This statement is related to the statement
that every two-dimensional metric can locally be written as a flat
metric up to a conformal factor. Suppose we start from the general
form of the metric:
\begin{eqnarray}
ds^2 = g_{xx} dx^2 + 2 g_{xy} dx dy + g_{yy} dy^2,
\end{eqnarray}
and perform the following coordinate transformation,
\begin{eqnarray}
\frac{1}{\lambda} \left(  du + i dv \right) = \sqrt{g_{xx}} dx +
\frac{g_{xy} + i
\sqrt{det(g_{ab})}}{\sqrt{g_{xx}}} dy , \\
\frac{1}{\bar\lambda } \left( du - i dv \right) =  \sqrt{g_{xx}} dx
+ \frac{g_{xy} - i \sqrt{det(g_{ab})}}{\sqrt{g_{xx}}} dy .
\end{eqnarray}
Then the metric reduces to the form $ds^2 = e^{2\omega} (du^2 +
dv^2)$, where $|\lambda|^2 =e^{-2\omega}$ and $\lambda (x,y) =
\lambda_1 (x,y) + i \lambda_2 (x,y)$. A metric which can be written
locally in this form is called {\bf (locally) conformally flat} and
the coordinates $(u, v)$ are called {\bf isothermal coordinates}.
Basically, we have shown the following theorem,
\begin{BWtheorem}
Every 2-dimensional Riemannian manifold is locally conformally flat.
\end{BWtheorem}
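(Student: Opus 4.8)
The plan is to produce, around an arbitrary point, local coordinates $(u,v)$ — the \emph{isothermal coordinates} named just before the statement — in which the metric becomes $ds^2 = e^{2\omega}(du^2+dv^2)$. Setting $w=u+iv$ this reads $ds^2 = e^{2\omega}\,dw\,d\bar w$, so the entire problem reduces to finding a single complex-valued function $w$ on a neighbourhood whose differential is, up to a nonvanishing factor, a prescribed complex $1$-form manufactured from the metric components. Once such a $w$ exists and is a genuine coordinate, the theorem follows immediately.

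First I would carry out a purely algebraic factorisation of the metric over $\mathbb{C}$. Because $g_{ab}$ is positive definite, the quadratic form $g_{xx}\,dx^2 + 2g_{xy}\,dx\,dy + g_{yy}\,dy^2$ splits as a complex $1$-form times its conjugate,
\[ ds^2 = \sigma\,\bar\sigma,\qquad \sigma = \sqrt{g_{xx}}\,dx + \frac{g_{xy} + i\sqrt{\det g}}{\sqrt{g_{xx}}}\,dy, \]
as one verifies by expanding and using $g_{xy}^2 + \det g = g_{xx}g_{yy}$. This is exactly the content of the two transformation equations displayed above the theorem: they assert $\lambda^{-1}(du+i\,dv)=\sigma$ together with its conjugate.

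The next step is to find a nowhere-vanishing complex function $\lambda$ on a neighbourhood of the point making $\lambda\,\sigma$ closed, hence (by the Poincar\'e lemma) exact, so that $\lambda\,\sigma = dw$ for some $w=u+iv$. Granting this, $\bar\lambda\,\bar\sigma = d\bar w$ and therefore
\[ ds^2 = \sigma\,\bar\sigma = \frac{1}{|\lambda|^2}\,dw\,d\bar w = e^{2\omega}\,(du^2+dv^2), \qquad e^{2\omega} := |\lambda|^{-2}, \]
which is precisely the conformally flat form sought. That $w$ is a bona fide coordinate, i.e.\ $du$ and $dv$ are independent, follows from $\lambda\neq 0$ together with $\sigma\wedge\bar\sigma\neq 0$, the latter being a consequence of $\det g>0$.

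The hard part is exactly the existence of the integrating factor $\lambda$, and this is where all the genuine analysis sits. Writing $\sigma = p\,dx + q\,dy$, the closedness condition $d(\lambda\sigma)=0$ becomes the single first-order linear PDE $\partial_x(\lambda q)-\partial_y(\lambda p)=0$; eliminating $\lambda$ in favour of $w$ turns this into a \emph{Beltrami equation} $w_{\bar z}=\mu\,w_z$, whose Beltrami coefficient $\mu$ is determined by the metric and satisfies $|\mu|<1$ precisely because $g$ is positive definite (one finds $\mathrm{Im}(q/p)=\sqrt{\det g}/g_{xx}>0$). This is an elliptic system, and its local solvability is the one serious ingredient imported from analysis. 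For a real-analytic metric I would obtain $w$ directly from the Cauchy–Kovalevskaya theorem; for a merely smooth metric I would instead invoke the classical local existence theory for the Beltrami equation (the Korn–Lichtenstein theorem, or equivalently the measurable Riemann mapping theorem), which yields a local diffeomorphism $w$ with nonvanishing smooth derivatives, and then set $\lambda := w_x/p$ to recover the integrating factor. Everything else is the elementary factoring of the metric; the sole obstacle worth isolating is the solvability of this elliptic equation.
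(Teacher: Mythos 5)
Your proof follows the same route as the paper's: the identical factorisation $ds^2=\sigma\,\bar\sigma$ with $\sigma=\sqrt{g_{xx}}\,dx+\frac{g_{xy}+i\sqrt{\det g}}{\sqrt{g_{xx}}}\,dy$ (the paper's two displayed transformation equations) and the identification $e^{2\omega}=|\lambda|^{-2}$. The only difference is one of rigour, in your favour: where the paper simply asserts that the coordinate transformation to $(u,v)$ can be performed, you correctly isolate and justify the one genuinely analytic step — the existence of the integrating factor $\lambda$, i.e.\ local solvability of the Beltrami equation $w_{\bar z}=\mu\,w_z$ with $|\mu|<1$ (Korn--Lichtenstein for smooth metrics, Cauchy--Kovalevskaya in the real-analytic case) — which the paper leaves implicit.
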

\begin{BWexerc}
Find isothermal coordinates $(u,v)$ and a conformal factor $\omega$
such that the standard metric on $S^2$, $ds^2 = d\theta^2 + \sin^2
\theta d\phi^2$, reduces to a (locally) conformally flat metric.
\end{BWexerc}
%\begin{BWdef}
%A Riemannian manifold $(M,g)$ is locally conformally flat if for
%every point $p \in M$ we can find a patch $U$ on which $g$ takes the
%form:
%\begin{eqnarray}
%g = e^{2\omega} (dx_1^2 + ... + dx_n^2)
%\end{eqnarray}
%\end{BWdef}
We are now ready to give a first (intuitive) correspondence between
conformal geometry and Riemann surfaces. Let us start from a surface
$M$ for which the metric locally can be written as a flat metric,
\begin{eqnarray}
ds^2 = dx^2 + dy^2 = dz d\bar z.
\end{eqnarray}
Here, we introduce the complex coordinate $z\equiv x + iy$ and its
complex conjugate $\bar z $. Suppose we perform a coordinate
transformation that only depends\footnote{In the next section we
shall encounter these type of coordinate transformation again and
call them holomorphic.} on the complex coordinate z,
\begin{eqnarray}
z &\rightarrow& z' = f(z),\\
\bar z &\rightarrow& \bar z' = \bar f (\bar z).
\end{eqnarray}
Due to this coordinate transformation, the form of the metric will
change, but only up to a scale\footnote{We can gauge fixe the worldsheet metric of the Polyakov action, using the two-dimensional reparametrization and Weyl-rescaling invariance. However, there still remains a residual symmetry left, namely the conformal transformations, in the gauge fixed Polyakov action \ref{BWeq:gaugefixedPol}. One can perform a conformal transformation on the gauge fixed worldsheet metric, which renders the same metric up to a conformal factor. The Polyakov action itself remains invariant under conformal transformations.}. This gives us the impression that holomorphic transformations are related to conformal transformations.  Indeed, suppose we transform the metric into a
Weyl-equivalent metric by a Weyl-transformation, we obtain a metric of the form,
\begin{eqnarray}
ds^2 = e^{2 \omega} |\partial_z f |^{-2} dz' d\bar z'.
\end{eqnarray}
By choosing,
\begin{eqnarray}
\omega = \ln | \partial_z f|
\end{eqnarray}
we obtain a flat metric again. We can thus conclude that a conformal transformation can also be seen as a holomorphic transformation, which are the defining transition functions for a complex manifold.

\section{Riemann Surfaces\label{BWs:RiemannSurfaces}}
\subsection{What is a Riemann Surface?}
There exist several equivalent ways of looking at Riemann surfaces
and thus there exist several equivalent ways of defining them,
depending on the point of interest. In this section we  try
to define Riemann surface as straightforward as possible. We
therefore need to introduce the concept of a holomorphic function.
%the defintion of a Riemann surface we should introduce some concepts
%concerning complex geometry, such as a complex manifold, a complex
%structure, etc. In many ways the way one gradually builds up complex
%geometry follows a similar pattern as for differential geometry. We
%shall however limit ourselves to complex geometry in one complex
%dimension, bearing in mind that some of the concepts and results we
%introduce here can be generalized for higher complex dimensions. Let
%us start with the concept of a holomorphic function, which we
%already encountered at the end of the previous section,
\begin{BWdef}
A function $ f: \BWIC \rightarrow \BWIC$ is called {\bf holomorphic}
if and only if $f = f_1 + i\, f_2$ satisfies the Cauchy-Riemann
relations $\forall \, z = x + i y \in \BWIC$,
\begin{eqnarray}
\frac{\partial f_1}{\partial x} = \frac{\partial f_2}{\partial y},
\quad \frac{\partial f_2}{\partial x} = - \frac{\partial
f_1}{\partial y}.
\end{eqnarray}
\end{BWdef}
These holomorphic functions will serve as transition functions
between two different patches on a Riemann surface.
\begin{BWdef}
$M$ is a {\bf Riemann surface} if and only if it is a (topological)
surface with a set of charts $\{ (U_\alpha, z_\alpha) \}$ such
that,\\
\indent $\bullet$ $\cup_\alpha U_\alpha = M$,\\
\indent $\bullet$ $z_\alpha : U_\alpha \rightarrow \BWIC$ is a
homeomorphism onto an open subset of $\BWIC$,\\
\indent $\bullet$ for every $U_\alpha, U_\beta$ for which $U_\alpha
\cap U_\beta \neq \emptyset$ then the transition function $z_\alpha
\circ z_\beta^{-1}$ is holomorphic.
\end{BWdef}
The dimension of the Riemann surface depends on which algebraic number field
one refers to: dim$_\BWIR M $ = 2 dim$_\BWIC M$ = 2. Again we should look
at some important and well-known examples of Riemann surfaces.
\begin{BWex}
The most obvious example is the complex plane $\BWIC$ itself, with
the identity function as the holomorphic transition function.
\end{BWex}
\begin{BWex}
\label{BWex:RiemannSphere}
Another well-known example is the Riemann sphere $\BWIC \cup
\{\infty \}$. We use the complexified version of the stereographic projection:  $(x, y, z) \mapsto  \frac{x+i y}{1-z}$. A similar relation can be derived for the stereographic projection from the south pole. The transition function which relates the upper part of the sphere to the lower part of the sphere is given by $z= 1/u$.
\end{BWex}
\begin{BWex}
The complex disc $D = \{ w \in \BWIC | \, |w| < 1 \}$ and the
upper half plane ${\cal U} = \{ z \in \BWIC | \, Im (z) > 0\}$ are related surfaces as they can be conformally mapped to each other by the mapping $w= e^{i \phi_0}\frac{ z -z_0}{z-\bar z_0}$ 
\end{BWex}
\begin{BWex}
Another important example is the complex torus $T^2$, which can be defined as follows. Take two $\BWIR$-linearly independent complex numbers $\omega_1$ and $\omega_2$ $\in$ $\BWIC$ and define the following lattice $L(\omega_1, \omega_2) \equiv \{ m \omega_1 + n\omega_2 | m,n, \in \BWIZ   \}$. We find the complex torus by identifying complex coordinates on the plane under this lattice, i.e.\ for $z_1$, $z_2$ $\in \BWIC$ we identify $z_1 - z_2 = m \omega_1 + n \omega_2$ for some $m, n \in \BWIZ$.  
\end{BWex}
\begin{BWex}
The cylinder $C_2$
\end{BWex}

This definition of Riemann surfaces closely follows the basic
concepts of complex geometry \cite{Nakahara:2003nw}. In complex geometry
one can show that instead of characterizing a complex manifold by a
complex atlas, one can use the notion of a complex structure. It is known that a necessary and sufficient condition for a
manifold to be complex is the existence of an integrable, complex
structure ${J^a}_b$. Let us first consider an almost complex structure, which is a (1,1)-tensor which squares to $-\mathbb{1}$ locally. A necessary and sufficient condition for this tensor to square to $-\mathbb{1}$ globally is given by the vanishing of the Nijenhuis-tensor ${N^a}_{b c}$, which is defined as working on two vectors $N(X,Y)$, returning a third as\footnote{$[X,Y]$ is the Lie-bracket on vector fields $X$ and $Y$.},
\begin{eqnarray}
N(X,Y) \equiv [X,Y] + J[JX, Y] + J[X,JY] - [JX, JY]
\end{eqnarray}
If the manifold also admits a
metric\footnote{An additional assumption one must make here is that
the metric is hermitian with respect to the complex structure.
However, since we can construct a hermitian metric out of any
metric, we shall not dwell on this minor technicality.}$g_{ab}$,
one can define a two-form $\Omega\equiv - g J $. For a (real)
two-dimensional manifold the two-form vanishes nowhere
and thus serves as a volume-element. The complex structures induces
thereby a natural orientation on the surface, which implies the
following theorem.
\begin{BWtheorem}
A Riemann surface is orientable.
\end{BWtheorem}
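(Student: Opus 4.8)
The plan is to prove orientability directly from the defining data of a Riemann surface, namely its holomorphic transition functions, and to show that their real Jacobians are everywhere positive — which is precisely the criterion in the definition of an oriented surface given earlier.

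First I would fix two overlapping charts $(U_\alpha, z_\alpha)$ and $(U_\beta, z_\beta)$ and write the transition function $f = z_\alpha \circ z_\beta^{-1}$ in real coordinates: with $z = x + i y$ on the overlap and $f = u + i v$, the change of coordinates $(x,y) \mapsto (u,v)$ has a real Jacobian matrix whose entries are the partial derivatives $\partial_x u$, $\partial_y u$, $\partial_x v$, $\partial_y v$. The heart of the argument is to invoke the Cauchy-Riemann relations satisfied by the holomorphic $f$ (from the definition of holomorphic above), $\partial_x u = \partial_y v$ and $\partial_y u = -\partial_x v$, and compute the determinant:
$$
\det \frac{\partial(u,v)}{\partial(x,y)} = \partial_x u\,\partial_y v - \partial_y u\,\partial_x v = (\partial_x u)^2 + (\partial_x v)^2 = |f'(z)|^2 .
$$
This is manifestly non-negative, and it is strictly positive because the transition functions on a Riemann surface are biholomorphisms, so $f$ has a holomorphic inverse and hence $f'(z) \neq 0$ at every point of the overlap.

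Having established that $\det(\partial(u,v)/\partial(x,y)) > 0$ on every chart overlap, orientability follows immediately from the definition, which asks exactly for local coordinates with positive transition Jacobian. Equivalently — and this matches the coordinate-free construction sketched just above the statement — one may glue the local area forms $\tfrac{i}{2}\, dz \wedge d\bar z = dx \wedge dy$, or equally the two-form $\Omega \equiv -gJ$ built from the metric and the complex structure, into a single globally defined two-form; the positivity of $|f'|^2$ guarantees the patching factors relating the local expressions are positive, so $\Omega$ vanishes nowhere and serves as a volume element, giving the orientation.

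I expect the only genuine subtlety to be the justification that $f'(z) \neq 0$, i.e.\ that the holomorphic transition maps are in fact biholomorphic rather than merely holomorphic; this is where the full structure of a Riemann surface (charts being homeomorphisms onto open subsets of $\BWIC$, with holomorphic transitions in both directions) is essential. Without it the Jacobian could in principle vanish at isolated points and the argument would yield only $\det \geq 0$ rather than the strict positivity needed for orientability.
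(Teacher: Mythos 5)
Your proof is correct, but it takes a different route from the paper's. The paper argues coordinate-freely: a Riemann surface carries a complex structure $J$, and together with a (hermitian) metric $g$ one forms the two-form $\Omega \equiv -gJ$, which in two real dimensions vanishes nowhere and hence is a volume element; orientability then follows from the equivalence, stated earlier in the chapter, between orientability and the existence of a nowhere-vanishing 2-form. You instead verify the paper's chart-based definition of an oriented surface literally: the holomorphic transition functions have real Jacobian $|f'(z)|^2$, by Cauchy--Riemann, and this is strictly positive because the transitions are biholomorphic (chain rule gives $(f^{-1})'\,f' = 1$, so $f' \neq 0$). Your argument is the more elementary and self-contained of the two --- it needs no metric, no complex-structure tensor, and no appeal to the volume-form criterion, only the defining data of the complex atlas --- and you correctly flag the one genuine subtlety, the nonvanishing of $f'$. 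What the paper's approach buys in exchange is that it fits the structural theme of the section (the interplay between metrics, conformal classes and complex structures) and produces the form $\Omega$, which is the natural orientation class reused in the equivalence propositions that follow; on the other hand it silently assumes a hermitian metric (relegated to a footnote) and leans on the unproved equivalence between orientability and nowhere-vanishing 2-forms, so as a \emph{proof} yours is actually the more complete one. Your closing remark gluing the local area forms $\tfrac{i}{2}\,dz\wedge d\bar z$ shows correctly how the two viewpoints connect: the positivity $|f'|^2 > 0$ is exactly what makes the local forms patch into a global $\Omega$.
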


\noindent We should add some more notions about mappings between Riemann
surfaces.
\begin{BWdef}
A continuous mapping $f: M \rightarrow N$ between Riemann surfaces
is called {\bf holomorphic} if and only if $\forall \, (U,z)$ on $M$
and $\forall \, (V,\zeta)$ on $N$ the function $\zeta \circ f \circ
z^{-1}:\BWIC \rightarrow \BWIC$ is holomorphic (on the overlapping
patches). A holomorphic mapping which is bijective is called a {\bf
conformal} mapping. A holomorphic mapping into $\BWIC$ is called a
{\bf holomorphic function}. A holomorphic mapping into
$\BWIC\cup\{\infty\}$ is called a {\bf meromorphic function}.
\end{BWdef}

We are now ready to proof the equivalence between an oriented
surface with a conformal metric $[g]$ and a Riemann surface with a
complex structure $J$. We shall show this equivalence in two steps:
\begin{BWprop}
Every metric $g$ on an oriented surface $M$ determines a unique
complex structure on $M$. The complex structure only depends on the
conformal structure of the metric, thus on $[g]$.
\end{BWprop}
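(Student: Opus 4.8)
The plan is to exhibit the complex structure $J$ explicitly, verify it squares to $-\mathbb{1}$, show that it is insensitive to Weyl rescalings, and finally argue integrability so that it defines a genuine Riemann-surface structure; uniqueness will then be immediate from the uniqueness of the construction.

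First I would build $J$ out of the metric together with the orientation. Since $M$ is oriented it carries a nowhere-vanishing volume two-form, whose components I write as $\epsilon_{ab} = \sqrt{g}\,\tilde\epsilon_{ab}$ with $\tilde\epsilon_{ab}$ the antisymmetric symbol. I then set
\[
{J^a}_b \equiv {g^{ac}}\,\epsilon_{cb},
\]
which geometrically rotates each tangent vector by a right angle in the sense fixed by the orientation. Evaluating this in a positively oriented orthonormal frame gives ${J^a}_c {J^c}_b = -{\delta^a}_b$, so $J$ is an almost complex structure. Next I would check conformal invariance: under a Weyl rescaling $g_{ab}\to e^{2\omega}g_{ab}$ the inverse metric scales as $g^{ab}\to e^{-2\omega}g^{ab}$ while in two dimensions $\sqrt{g}\to e^{2\omega}\sqrt{g}$, hence $\epsilon_{ab}\to e^{2\omega}\epsilon_{ab}$, and the two factors cancel in ${J^a}_b$. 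Thus $J$ is left unchanged and depends only on the conformal class $[g]$.

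For integrability I would invoke that the Nijenhuis tensor $N$ of \emph{any} almost complex structure on a real two-dimensional manifold vanishes identically. Since $N$ is $C^\infty$-bilinear and antisymmetric in its two slots, and every tangent plane is spanned by $X$ and $JX$ for any nonzero $X$, it suffices to evaluate $N(X,JX)$. Using $J(JX)=-X$, the relation $[X,X]=0$, and antisymmetry of the Lie bracket, the four terms in $N(X,Y)=[X,Y]+J[JX,Y]+J[X,JY]-[JX,JY]$ collapse pairwise to zero. Integrability (equivalently, the isothermal coordinates supplied by the earlier theorem that every two-dimensional Riemannian manifold is locally conformally flat) then yields a holomorphic atlas: in isothermal coordinates $(u,v)$ adapted to $J$ the function $z=u+\BWi v$ is a chart, and on overlaps the transition maps commute with $J$, hence satisfy the Cauchy--Riemann relations and are holomorphic. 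This is precisely the data of a Riemann surface. Uniqueness follows because the complex structure is pinned down by $J$, and $J$ was determined uniquely by $g$ together with the fixed orientation; any two holomorphic atlases inducing the same $J$ define the same Riemann surface.

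The hard part, I expect, is the integrability step and in particular checking that the transitions between isothermal charts are holomorphic rather than \emph{anti}holomorphic. This is exactly where orientability enters: it fixes the handedness of $J$ through the choice of $\epsilon_{ab}$ and thereby rules out the conjugate complex structure $-J$. The almost-complex computation and the Weyl-invariance argument are routine, so the orientation-compatibility of the isothermal transition functions is the place that genuinely deserves care.
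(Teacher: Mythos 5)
Your proof is correct and takes essentially the same route as the paper's: the same tensor ${J^a}_b = \sqrt{g}\,g^{ac}\tilde\epsilon_{cb}$ (your $\epsilon_{ab}$ is the volume two-form, so the formulas agree), the check that it squares to $-\mathbb{1}$, the vanishing of the Nijenhuis tensor in two dimensions, and Weyl invariance of $J$. The only difference is thoroughness: you actually compute $N(X,JX)=0$ and spell out the passage from integrability to a holomorphic atlas via isothermal coordinates (including why orientation selects holomorphic rather than antiholomorphic transitions), steps the paper merely asserts.
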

%We will explicitly use isothermal coordinates in order to make contact with the definition of a Riemann surface. Assume that we have a cover $\{ U_{\alpha}, x_{\alpha} \}$ of $M$
\begin{proof}
In a first step we construct an almost complex structure, using the metric and the orientability of the surface:
\begin{eqnarray}
{J^{a}}_b = \sqrt{g} g^{a c} \epsilon_{c b}.
\end{eqnarray}
One can immediately see that ${J^{a}}_b$ squares to $ - \mathbb{1}$ and thus represents an almost complex structure. As the Nijenhuis-tensor vanishes in two dimensions, ${J^{a}}_b$ is also complex structure. If we perform a Weyl-transformation $g \rightarrow e^{2 \omega} g$, the complex structure is unaffected. We have thus shown that the complex structure only depends on the conformal structure of the metric.\qedhere
\end{proof}

\begin{BWprop}
{Every Riemann surface $M$ admits a Riemann metric compatible
with the complex structure of $M$. The metric is unique up to
conformal equivalence.}
\end{BWprop}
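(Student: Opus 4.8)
The plan is to treat this as the converse of the preceding proposition: there one sent a conformal class $[g]$ to a complex structure ${(J_g)^a}_b=\sqrt{g}\,g^{ac}\epsilon_{cb}$, and here, given $J$, I would produce a compatible metric and show it is determined up to Weyl rescaling. I read ``compatible'' as $J_g=J$. The construction is local-to-global: in every holomorphic chart $(U_\alpha,z_\alpha)$ the flat metric $ds_\alpha^2 = dz_\alpha\,d\bar z_\alpha$ is available, and in these coordinates $J$ acts as multiplication by $\BWi$, so $ds_\alpha^2$ is Hermitian and, by the preceding proposition applied to the flat chart metric, induces exactly $J$ on $U_\alpha$.

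First I would glue the local pieces. The $ds_\alpha^2$ do not agree on overlaps, but because the transition functions $z_\alpha\circ z_\beta^{-1}=f_{\alpha\beta}$ are holomorphic one has $dz_\alpha\,d\bar z_\alpha = |f'_{\alpha\beta}|^2\,dz_\beta\,d\bar z_\beta$, so they differ only by a positive conformal factor. Choosing a partition of unity $\{\rho_\alpha\}$ subordinate to $\{U_\alpha\}$ and setting $g=\sum_\alpha \rho_\alpha\,ds_\alpha^2$ yields a smooth, positive-definite metric. In any fixed chart $z_\beta$ it reads $g=\bigl(\sum_\alpha \rho_\alpha |f'_{\alpha\beta}|^2\bigr)\,dz_\beta\,d\bar z_\beta$, i.e.\ a strictly positive function times the flat chart metric. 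Hence $g$ is everywhere conformal to a flat metric, and since $J_g$ depends only on the conformal class of $g$ (preceding proposition), $g$ is compatible with $J$.

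For uniqueness I would show any two compatible metrics are Weyl-equivalent. If $g$ and $g'$ both satisfy $J_g=J_{g'}=J$, then in a holomorphic chart each is Hermitian with respect to multiplication by $\BWi$, hence of the form $\lambda\,dz\,d\bar z$ and $\lambda'\,dz\,d\bar z$ with $\lambda,\lambda'>0$. The ratio $\lambda'/\lambda$ is a globally defined positive smooth function, chart-independent because the factors $|f'|^2$ occurring in $\lambda$ and $\lambda'$ cancel in the quotient; writing it as $e^{2\omega}$ exhibits $g'=e^{2\omega}g$, a Weyl transformation. Thus $[g]$ is unique.

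The main obstacle is the compatibility check rather than the gluing: one must be sure the partition-of-unity metric induces \emph{exactly} the given $J$ and not merely some complex structure. The clean way is to note that Hermiticity with respect to $J$ is a linear, hence convex, condition on $g$, so a convex combination of Hermitian metrics is again Hermitian; combined with the fact that in two dimensions every metric Hermitian w.r.t.\ $J$ and inducing the correct orientation is conformal to the flat chart metric and returns $J$ under ${(J_g)^a}_b=\sqrt{g}\,g^{ac}\epsilon_{cb}$, this pins down compatibility. The only technical caveat, as in the earlier footnote on Hermitian metrics, is the orientation bookkeeping in $\epsilon_{cb}$, which is consistent precisely because a Riemann surface is oriented.
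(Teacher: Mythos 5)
Your proof is correct, but it takes a genuinely different route from the one in the text. The text's proof invokes a nontrivial function-theoretic existence theorem -- every Riemann surface carries a non-constant meromorphic function $f$ -- and sets $ds^2=|df|^2$ away from the poles of $f$ and the zeros of $df$; since those bad points are finite in number, it then patches in flat chart metrics $|dz_i|^2$ near them with bump functions. You avoid this input entirely: you use only the defining data of the Riemann surface (holomorphic transition functions) together with a partition of unity over the whole atlas, the key observation being that $dz_\alpha\, d\bar z_\alpha = |f'_{\alpha\beta}|^2\, dz_\beta\, d\bar z_\beta$ on overlaps, so the glued metric is in every chart a strictly positive function times the flat chart metric and hence induces exactly $J$. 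Your argument is softer and more self-contained; what the text's construction buys instead is a concrete metric with function-theoretic meaning (essentially the pullback of the flat metric under a meromorphic map, away from a finite set), at the cost of relying on a deep existence theorem. Your uniqueness step is also stronger than what the text provides: the text only remarks that a different choice of $f$ or of patches changes \emph{its} metric by a conformal factor, which shows the construction is well defined up to Weyl equivalence but does not show that an \emph{arbitrary} compatible metric lies in the same class. Your Hermiticity argument -- any metric compatible with $J$ has, in a holomorphic chart, the form $\lambda\, dz\, d\bar z$ with $\lambda>0$, so the ratio of two such metrics is a globally defined positive function $e^{2\omega}$ -- closes exactly that gap, modulo the orientation bookkeeping in $\epsilon_{cb}$ that you correctly flag and that orientability of a Riemann surface settles.
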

\begin{proof}
A basic theorem of Riemann surfaces (see e.g. \cite{Farkas:1980}, section II.5) tells us that we can always find a non-constant meromorphic function $f$ on $M$, which allows us to write the metric as follows (outside poles and points at which $df = 0$),
\begin{eqnarray}
ds^2 = |df |^2.
\end{eqnarray}
As the number of points at which $df=0$ and of poles is finite, we can cover the surface with small coordinate patches $\{ z_i \}$ at those points and find associated functions $\omega_i$, such that $0 \leq \omega_i \leq 1$ and for $ 0< r_1 < r_2 <1$,
\begin{eqnarray}
\text{for}\, \{|z_i|\leq r_1 \}:\,\omega_i = 1\quad \text{and for}\, \{|z_i|\geq r_2 \}:\, \omega_i = 0.
\end{eqnarray}
If we combine everything now, we can write down the following metric,
\begin{eqnarray}
ds^2 = |df|^2 \left( 1- \sum_i \omega_i \right) + \sum_i \omega_i |dz_i|^2.
\end{eqnarray}
A different choice of meromorphic function and/or coordinate patches will only change the metric up to an overall conformal factor. 
\qedhere
\end{proof}

\noindent This nice equivalence between oriented surfaces with conformal
structure and Riemann surfaces allows us to think in terms of a
conformal metric or a complex structure, depending on which point of
view is most helpful. It also explains why bijective holomorphic
mappings are called conformal mappings. An advantage of this
equivalence is that we can use the powerful machinery of complex
surfaces to tackle problems related to the worldsheet of the
string. We shall encounter some of this machinery in the following
sections. Let us conclude this section by introducing complex metrics on
Riemann surfaces.
\begin{BWex}
\label{BWex:RiemannSphereMetric}
Metric on the Riemann Sphere $ds^2 = \left(\frac{2}{1+|z|^2}\right)^2 dz
d\bar z$
\end{BWex}
\begin{BWex}
Metric on the Plane $ds^2 = dz d\bar z$
\end{BWex}
\begin{BWex}
Metric on the Disc $ds^2 = \left(\frac{2}{1-|z|^2}\right)^2 dz d\bar z$ and metric on the Upper Half Plane $ds^2= \left( \frac{2}{Im(z)}\right)^2 dz d\bar z$
\end{BWex}
Since a Riemann surface can be seen as an oriented surface with a
conformal metric $[g]$, it is possible to distinguish three
different types of surfaces, based upon the value of the (constant)
Ricci-scalar $R$. Although it might be unclear a priori that every
metric can be Weyl-transformed into a metric with constant
curvature, this fact actually follows from the uniqueness of the
solution to the Liouville-equation,
\begin{eqnarray}
2 \nabla^2 \omega = R_g - R_{\tilde g} e^{-2\omega},
\end{eqnarray}
where we performed a Weyl-transformation on the metric $g_{ab}$ of
the form $\tilde g_{ab} = e^{-2\omega} g_{ab}$. $R_{\tilde g}$ is
the Ricci-scalar with respect to $\tilde g_{ab}$ and $R_g$ the
Ricci-scalar with respect to $g_{ab}$. Hence, we obtain the
following theorem
\begin{BWdef}
Every (oriented) Riemann surface is conformally equivalent to one
and only one of the following three surface types,
\begin{enumerate}
\item[(1)] {\bf elliptic} surface: a compact, oriented surface for
which $R = +1$
\item[(2)] {\bf parabolic} surface: an oriented surface for which $R =
0$
\item[(3)] {\bf hyperbolic} surface: an oriented surface for which $R =
-1$
\end{enumerate}
\end{BWdef}
In case of a compact Riemann surface (without boundaries and
cross-caps) we can relate the constant curvature to the
Euler-characteristic by virtue of the Gauss-Bonnet theorem and
thereby with the number handles. When there are no handles (g$=0$),
the Ricci-scalar $R_{\tilde g} =1$. A Riemann surface with one
handle (g$=1$), is conformally equivalent to a parabolic surface
($R_{\tilde g} =0$). A Riemann surface for which g$\geq 2$, is
conformally equivalent to a hyperbolic surface ($R_{\tilde g}=-1$).
%We emphasize that the Ricci-scalar was assumed to be constant. This
%will in general not be the case for an arbitrary Riemann surface.
%{\bf comment about conformally equivalence: by performing a
%Weyl-transformation on a metric we can get a constant curvature}

\subsection{Classifying Riemann Surfaces}
One of the wonderful facts about Riemann surfaces is that they can
be completely classified based upon their covering space
%curvature\footnote{To be more precise, it is the existence or
%non-existence of Green's functions on (the covering surface of) a
%Riemann surface which determines this classification \cite{}.
%However, by virtue of the
%Uniformization Theorem for simply connected Riemann surfaces we can
%link this to the curvature of the Riemann surface.}
and the fundamental group $\pi_1(M)$. But before we deal with the
general Uniformization Theorem, we phrase the theorem for
simply connected Riemann surfaces, i.e.\ Riemann surfaces for which
$\pi_1(M) = {0}$,
\begin{BWtheorem}
Every simply connected Riemann surface $M$ is conformally equivalent
to one and only one of the following three:\\
\begin{itemize}
\item[(1)] Riemann sphere $\BWIC \cup\{\infty\}$
\item[(2)] Complex plane $\BWIC$
\item[(3)] Complex Disc $D$ or Upper Half plane ${\cal U}$
\end{itemize}
\end{BWtheorem}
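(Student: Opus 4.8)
The plan is to separate the statement into its \emph{uniqueness} part (the three models are pairwise inequivalent) and its \emph{existence} part (every simply connected $M$ is conformally equivalent to one of them), and to treat the latter by potential theory on $M$. The uniqueness is the easy half. The sphere $\BWIC\cup\{\infty\}$ is compact whereas $\BWIC$ and $D$ are not, so it stands apart on topological grounds (a conformal equivalence is in particular a homeomorphism). To separate $\BWIC$ from $D$, note that a conformal equivalence $\BWIC\to D$ would be a bounded holomorphic function on the whole plane, hence constant by Liouville's theorem, a contradiction. In the language of the constant-curvature trichotomy established above, the three models carry the round, flat, and Poincar\'e metrics, i.e.\ the elliptic, parabolic, and hyperbolic cases, and no conformal map interpolates between these complete geometries.

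For existence, the main obstacle, I would run the classical Green's-function dichotomy. First I would secure the analytic input: on a Riemann surface one can solve a Dirichlet-type problem and thereby construct harmonic functions with prescribed logarithmic singularities, using Perron's method of subharmonic functions (equivalently, the Dirichlet principle together with the Hilbert-space decomposition of square-integrable harmonic forms). Fix a point $z_0\in M$ and ask whether $M$ carries a \emph{Green's function}: a positive harmonic function $g$ on $M\setminus\{z_0\}$ with singularity $-\log|z-z_0|$ at $z_0$, minimal among all such positive functions. This single yes/no question governs which of the three models appears.

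In the case where a Green's function exists, simple connectivity of $M$ guarantees that the locally defined harmonic conjugate $g^*$ is globally single-valued, so that $f=\BWe^{-(g+\BWi g^*)}$ is a well-defined holomorphic function on $M$; since $g>0$ one has $|f|<1$, i.e.\ $f\colon M\to D$. One then shows $f$ is a conformal equivalence onto $D$, reproducing intrinsically the Riemann mapping theorem. In the case where no Green's function exists, I would instead manufacture a holomorphic function with a single simple pole by solving for a harmonic function with a dipole singularity (again via Perron/Dirichlet). If $M$ is compact the resulting map is a conformal equivalence onto the sphere $\BWIC\cup\{\infty\}$; if $M$ is non-compact, the absence of a Green's function forces the map to be proper onto $\BWIC$, giving equivalence with the complex plane.

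The hard part is twofold. First, one must establish the existence of these harmonic functions with the correct singular and boundary behaviour; this is the genuine analytic content and is where the Perron/Dirichlet machinery does all the work. Second, one must prove \emph{global} injectivity of the constructed holomorphic map: local injectivity (non-vanishing derivative away from the singularity) is straightforward, but global injectivity and surjectivity rest on the extremal (minimality) characterization of the Green's function combined with the maximum principle and the argument principle. Simple connectivity enters decisively, both in single-valuedness of $g^*$ and in ruling out nontrivial monodromy of $f$, and it is precisely this global injectivity step that I expect to require the most care.
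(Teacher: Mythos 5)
Your proposal is correct, but it does substantially more than the paper, so a comparison is in order. The paper offers no proof of the existence half at all: the theorem is quoted as a classical result, and the only argument actually given is for pairwise inequivalence of the three models --- the Riemann sphere is compact hence topologically distinct from $\BWIC$ and $D$, and a conformal map $\BWIC \rightarrow D$ would be a bounded entire function, hence constant by Liouville's theorem. Your uniqueness half coincides with this word for word (including the remark identifying the three models with the elliptic/parabolic/hyperbolic trichotomy). What you add is a skeleton of the genuinely hard existence statement, via the classical Green's-function dichotomy: Perron's method to produce harmonic functions with prescribed logarithmic (resp.\ dipole) singularities, then $f = \BWe^{-(g + \BWi g^{*})}$ mapping onto $D$ when a Green's function exists, and a meromorphic map onto $\BWIC \cup \{\infty\}$ or $\BWIC$ when it does not. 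This is the standard and correct route (Ahlfors, Farkas--Kra), and you are honest that the two load-bearing steps --- the Perron construction and the global injectivity via the extremal characterization of $g$ --- are named rather than carried out, so your existence argument is a correct outline rather than a complete proof; that is still strictly more than the paper attempts. One small imprecision: the harmonic conjugate $g^{*}$ is \emph{not} globally single-valued on $M \setminus \{z_{0}\}$; simple connectivity of $M$ only guarantees that every cycle is homologous to a multiple of a small loop about $z_{0}$, around which $g^{*}$ changes by an integer multiple of $2\pi$, so it is the exponential $\BWe^{-(g+\BWi g^{*})}$, not $g^{*}$ itself, that is well defined. This is exactly the point where simple connectivity enters, and stating it this way would close the only soft spot in your sketch.
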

It is not difficult to see that the Riemann sphere represents an
elliptic surface, the complex plane a parabolic surface and the disc
a hyperbolic surface. There exist thus a one-to-one correspondence
between an elliptic/parabolic/hyperbolic surface and the Riemann
sphere/Complex plane/Upper Half Plane. It is also clear that these
three Riemann surfaces are indeed distinct and cannot be
conformally transformed into each other. The Riemann sphere is a
compact surface and is therefore topologically different from the
other two Riemann surfaces. Furthermore, we cannot find a conformal
map going from the complex plane to the disc $D$ due to Liouville's
theorem\footnote{Liouville's theorem states that every bounded,
holomorphic function over the complex plane must be constant. For
our purposes we should take a holomorphic function $f:\BWIC
\rightarrow D$, which is clearly bounded and thus constant.}.

Before we can set-up the general Uniformization Theorem we should
again introduce some concepts as the Kleinian group, the Fuchsian
group and the covering surface. We shall gradually introduce the
concept of a Kleinian (Fuchsian) group $G$, but first we review some
useful notions of group theory. Suppose $G$ is a group, then we say
that $G$ acts left ({\bf left action}) on $M$\footnote{We specifiy
our definition for a Riemann surface $M$, but for most definitions
it suffices that $M$ is a set.} if the map,
\begin{eqnarray}
G\times M \rightarrow M : (g,z) \mapsto g \cdot z,
\end{eqnarray}
satisfies the following two axioms,
\begin{enumerate}
\item[(a)] $\forall \, g, h \in G, \forall\, z \in M: (gh)\cdot z =
g \cdot (h\cdot z)$,
\item[(b)] $\forall\, z \in M: e\cdot z = z$ ($e$ is called the identity).
\end{enumerate}
Let us clarify this concept with an example.
\begin{BWex}
Take a Riemann surface $M$ and take $G = Aut(M)$, i.e.\ the group of automorphisms of $M$, 
bijective holomorphisms from $M$ to $M$. Then
$Aut(M)$ performs a left action on $M$. Later on, we shall determine
the automorphism group for some Riemann surfaces. We mention here
already the automorphism groups for the simply connected Riemann
surfaces: $Aut(\BWIC\cup \{\infty \})\cong PSL(2,\BWIC)$,
$Aut(\BWIC)\cong Aff(1,\BWIC)$ and $Aut(D)\cong PSL(2,\BWIR)$. For
the necessary clarifications we refer to \ref{BWs:PropRS}.
\end{BWex}
%{\bf
%some examples ? most important example $Aut(M)$ on $M$ +
%}\\
A group $G$ is said to act {\bf freely} on $M$ if $\forall\, g, h
\in G$ with $g\neq h$, $\forall\, z \in M$ we have $g\cdot z \neq h
\cdot z$. This is equivalent to saying that the only
element of $G$ which maps some point $z \in M$ to itself is the
identity. In symbols, this reads: $\forall\, g\in G$, if $ \exists\,
z\in M$ such that $g\cdot z = z$ then $g=e$. There exist another way
to check if a group acts freely on $M$, via the concept of {\bf
stabilizer group} or {\bf isotropy group $G_z$} at a point $z \in
M$,
\begin{eqnarray}
G_z \equiv \{ g\in G | \, g\cdot z = z \}.
\end{eqnarray}
It is obvious that $G_z$ is a subgroup of $G$. We can now say that a
group $G$ acts freely on $M$ if and only if $G_z = \{e\}$ for every
$z \in M$.

The {\bf orbit} $O_z$ of a point $z\in M$ expresses how $G$ acts on
a point $z$,
\begin{eqnarray}
O_z \equiv \{ g \cdot z | \, g \in G \}
\end{eqnarray}
The orbit $O_z$ of a point $z$ actually links different points of
$M$ to each other by virtue of a group element acting on $z$. It
implies an equivalence relation between two different points $z_1,
z_2$ of $M$,
\begin{eqnarray}
z_1 \sim z_2 \Leftrightarrow \exists\, g \in G: z_2 = g \cdot z_1.
\end{eqnarray}
\begin{BWexerc}
Show that this relation is an equivalence relation.
\end{BWexerc}
Hence, the orbit $O_z$ is an equivalence class for this equivalence
relation. The set of all orbits of (a surface) $M$ under the action
of $G$ is called the {\bf orbit surface $M/G$},
\begin{eqnarray}
M/G \equiv \{ O_z |\, z \in M \}.
\end{eqnarray}
By introducing the concept of (left) action of a group $G$ on a
surface $M$ we managed to divide $M$ in disjoint sets, which union
forms the entire surface $M$.

We go over to Kleinian and
Fuchsian groups. Suppose we consider a subgroup $G$ of the
automorphisms $Aut(M)$ of a Riemann surface $M$. We say that $G$ is
acting {\bf (properly) discontinuously} at $z_0 \in M$ provided
that,
%\equiv \{g \in G | \, g \cdot z_0 = z_0 \}
\begin{enumerate}
\item[(1)] $G_{z_0} $ is finite,
\item[(2)] $\exists$ a neighborhood $U_{z_0}$ of $z_0$ such that
$g(U_{z_0}) = U_{z_0},\, \forall \, g \in G_{z_0}$,
\item[(3)] $g(U_{z_0})\cap U_{z_0} = \emptyset,\, \forall\, g \in
G\backslash G_{z_0}$.
\end{enumerate}
We can take all the points $z_0 \in M$ for which $G$ acts properly
discontinuously at those points and call this set $\Omega(G)$ the
{\bf region of discontinuity} of G. If $\Omega(G) \neq \emptyset$ we
call $G$ a {\bf Kleinian group}. From this definition it is
immediately clear that $\Omega(G)$ is an open $G$-invariant subset
of $M$, i.e.\ $G \Omega(G)=\Omega(G)$. If the disc $D \subset \Omega(G)$, $G$
is called a {\bf Fuchsian group}. One can show that every Kleinian
group is a finite (or at least countable) discrete group. Of course
the simplest example of a Kleinian group (and Fuchsian group) is the
trivial group $G=\{e\}$. Let us take a moment to give some
non-trivial examples of Kleinian and Fuchsian groups.
\begin{BWex}
Consider the group generated by the element $z\mapsto z + b$ in
$Aut(\BWIC)$, for a fixed $b \in \BWIC$. The elements of this group,
can be represented by matrices acting on $z$ as
\begin{eqnarray}
\left( \begin{array}{cc} 1 & n b \\ 0 & 1 \end{array}\right) \cdot z \equiv z + n b  , \quad
n\in \BWIZ.
\end{eqnarray}
This group is a freely acting Kleinian group. One can construct a
group homomorphism between this group and $\BWIZ$.
\end{BWex}
\begin{BWex}
Consider the group generated by the element $z\mapsto \lambda z$ in
$Aut({\cal U})$, with a fixed (real) $\lambda \geq 1$. The elements
of this group are represented by matrices of the form,
\begin{eqnarray}
\left(\begin{array}{cc} \lambda^n & 0\\ 0 & 1 \end{array} \right)  \cdot z \equiv \lambda^n z,
\quad n \in \BWIZ.
\end{eqnarray}
This group is a freely acting Kleinian group. This group is also
(group) homomorphic to $\BWIZ$.
\end{BWex}
\begin{BWex}
An example of a Fuchsian group is given by the {\bf modular group}
$PSL(2,\BWIZ)$ on the upper half plane ${\cal U}$. The modular group
exists of matrices of the form,
\begin{eqnarray}
\left(\begin{array}{cc} a & b \\ c & d \end{array} \right)\cdot z \equiv \frac{a z + b}{c z + d},\quad
a,b,c,d \in \BWIZ,\, \text{such that }\, ad-bc = 1.
\end{eqnarray}
Moreover this group acts non-freely on ${\cal U}$, as we will see
later in the discussion of the moduli space for the torus.
\end{BWex}
%{\bf example of a Kleinian group: still have to find one, and a Fuchsian group: $PSL(2,\BWIZ)$ acting on the upper half plane}\\

The Uniformization Theorem gives a classification of Riemann
surfaces based upon their universal covering surfaces and Kleinian
or Fuchsian groups. Therefore, we should first explain what the
universal covering surface of a Riemann surface is.
\begin{BWdef}
A simply connected surface $\hat M$ is called the {\bf universal
covering surface} of a Riemann surface $M$ if and only if $\exists$
a continuous surjective map $ \eta: \hat M \rightarrow M$ such that
$\forall \, z \in M, \exists$ open neighbourhood $N_z$ of $z$ for
which $\eta^{-1}(N_z) = \cup_\alpha  L_\alpha$ and the $\{
L_\alpha\}$ are disjoint, open sets in $\hat M$ that are mapped
homeomorphically onto $N_z$ by $\eta$. $\eta$ is called the {\bf
covering map}.
\end{BWdef}

\begin{BWex}
Recall that the torus can be defined as the complex coordinates on the plane identified under a lattice $L(\omega_1, \omega_2)$, thus $\BWIC / L(\omega_1, \omega_2)$. Now we consider the following projection,
\begin{eqnarray}
\eta : \BWIC \rightarrow \BWIC / L(\omega_1, \omega_2)
\end{eqnarray}
which projects every complex coordinate to its equivalence class. It is thus not difficult to see that the complex plane is the universal covering surface of the torus. 
\end{BWex}

We now have all the basic ingredients to formulate (one of) the most
important theorems for Riemann surfaces,
\begin{BWtheorem}{{\bf Uniformization Theorem}}
Every Riemann surface $M$ is conformally equivalent to $\hat M / G$
where
\begin{eqnarray}
\hat M = \left\{%
\begin{array}{c} \BWIC \cup \{\infty \} \\ \BWIC \\
{\cal U}
         \end{array}%
         \right.
\end{eqnarray}
$G \subset Aut(\hat M)$ acts freely and properly discontinuously
on $\hat M$. Furthermore $G \cong \pi_1(M)$.
\end{BWtheorem}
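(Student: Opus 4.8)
The plan is to reduce the general statement to the simply connected case already established above, by passing to the universal cover, transporting the complex structure up, and then quotienting back down. \emph{First I would construct the universal cover and lift the complex structure.} Since $M$ is a connected surface it is path-connected, locally path-connected and (being locally Euclidean, hence locally contractible) semi-locally simply connected, so the universal covering surface $\eta : \hat M \rightarrow M$ of the definition above exists with $\hat M$ simply connected. The covering map $\eta$ is a local homeomorphism, so I can push the complex atlas of $M$ up to $\hat M$: for each chart $(U_\alpha, z_\alpha)$ on $M$ and each sheet $L$ over a sufficiently small $U_\alpha$, the composite $z_\alpha \circ (\eta|_L)$ is a chart on $\hat M$, and its transition functions coincide locally with those on $M$, hence are holomorphic. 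Thus $\hat M$ is itself a simply connected Riemann surface and $\eta$ is holomorphic, indeed locally biholomorphic.

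Next I would \emph{identify $\hat M$ and study the deck group}. Applying the simply connected uniformization theorem stated above, $\hat M$ is conformally equivalent to exactly one of $\BWIC \cup \{\infty\}$, $\BWIC$ or ${\cal U}$; fix a biholomorphism $\Phi$ onto the relevant model. Now let $G$ be the group of deck transformations, i.e.\ the homeomorphisms $\gamma : \hat M \rightarrow \hat M$ with $\eta \circ \gamma = \eta$. Locally any such $\gamma$ equals $(\eta|_{L'})^{-1}\circ (\eta|_L)$ between sheets, a composite of local biholomorphisms, so $\gamma \in Aut(\hat M)$; conjugating by $\Phi$ exhibits $G$ as a subgroup of the automorphism group of the model surface, so it falls into one of the three explicit forms ($PSL(2,\BWIC)$, $Aff(1,\BWIC)$ or $PSL(2,\BWIR)$) recorded earlier. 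By the unique lifting property of the universal cover, a deck transformation fixing a single point is the identity, so $G$ acts \emph{freely}; and the covering property supplies, for each $z_0$, a neighbourhood whose $G$-translates are disjoint, so $G$ acts \emph{properly discontinuously} in the sense defined above (making it a Kleinian, resp.\ Fuchsian, group).

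It remains to \emph{descend and compute the fundamental group}. Because $\eta$ is the universal, hence normal, covering, its fibres are exactly the $G$-orbits, so $\eta$ factors as a homeomorphism $\hat M / G \rightarrow M$. As both $\eta$ and the projection $\hat M \rightarrow \hat M / G$ are local biholomorphisms, this homeomorphism is in fact conformal, giving $M \cong \hat M / G$. Finally, standard covering-space theory identifies the deck group of the universal cover with the fundamental group, $G \cong \pi_1(M)$, which completes the argument.

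\emph{The hard part} is genuinely hidden upstream: the deep analytic content — effectively a global solution of the Liouville/curvature equation, equivalently the generalized Riemann mapping statement — is entirely absorbed into the simply connected theorem that I am permitted to assume. With that granted, the only delicate points are verifying that the fibres of $\eta$ coincide with the $G$-orbits (so that $\hat M / G \rightarrow M$ is a genuine bijection and not merely a surjection), which rests on the covering being normal, and checking that holomorphicity survives every lift and the descent; both become routine once $\eta$ is known to be a local biholomorphism and $G \subset Aut(\hat M)$ acts by biholomorphisms.
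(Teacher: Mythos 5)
Your proof is correct, and the first thing to say is that the paper itself contains no proof of this theorem: it assembles the prerequisites (the uniformization theorem for simply connected Riemann surfaces, the notions of free and properly discontinuous actions, Kleinian and Fuchsian groups, and the universal covering surface) and then simply states the result, deferring to the literature. Your argument is exactly the standard reduction those prerequisites are designed for, so in spirit it completes the paper's own development: you lift the complex structure to the topological universal cover so that $\eta:\hat M\rightarrow M$ becomes a local biholomorphism, invoke the simply connected case to identify $\hat M$ with $\BWIC\cup\{\infty\}$, $\BWIC$ or ${\cal U}$, observe that the deck group acts by automorphisms, freely (unique lifting of points) and properly discontinuously (disjointness of sheets over an evenly covered neighbourhood, which with freeness gives all three conditions of the paper's definition), and use regularity of the universal cover to identify the fibres of $\eta$ with the $G$-orbits, whence $\hat M/G\rightarrow M$ is a conformal equivalence and $G\cong\pi_1(M)$ by covering space theory. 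The one point you gloss over is that ``conformally equivalent to $\hat M/G$'' presupposes a Riemann surface structure on the orbit surface itself: one must build charts on $\hat M/G$, which is done by taking around each orbit the neighbourhood $U_{z_0}$ whose $G$-translates are disjoint, noting that the projection restricted to $U_{z_0}$ is a homeomorphism onto its image, and composing its inverse with a chart of $\hat M$; the transition functions are then restrictions of elements of $G$, hence holomorphic. Since freeness and proper discontinuity are precisely what you established, this is a routine addition rather than a gap, but it is the step on which the descent of holomorphicity actually rests.
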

If the covering surface is $\BWIC \cup \{\infty \}$, or $\BWIC$,
then the group $G$ is Kleinian with maximally two elements (besides
the identity element). If the covering surface is the upper half
plane ${\cal U}$, then the group $G$ is Fuchsian. Hence, the study
of Riemann surfaces is reduced to the study of Kleinian and Fuchsian
groups by virtue of the Uniformization Theorem. For most Riemann
surfaces the covering surface is the upper half plane ${\cal U}$ and
the group $G$ is a non-commutative Fuchsian group.\\
Nonetheless, we can distinguish seven Riemann surfaces for which the
covering group $G$ is abelian. These Riemann surfaces are called
{\bf exceptional Riemann Surfaces} and are characterized by the fact
that $\pi_1(M) \cong \{ e\}, \BWIZ$ or $\BWIZ \oplus \BWIZ$. We
classify them here, based upon their universal covering
surface and Kleinian group $G$, see table \ref{BWref:7ExcRiemann}.
\begin{table}[h]
\begin{center}
\begin{tabular}{|c|c|c|}
\hline \multirow{2}{*}{ $\hat M$} & \multirow{2}{*}{  $G$} &  \multirow{2}{*}{Riemann Surface}\\
& & \\
\hline
\hline
 \multirow{2}{*} {$\BWIC \cup \{\infty\}$} & \multirow{2}{*}{$\{e\}$} &\multirow{2}{*}{ $\BWIC \cup \{\infty\}$}\\
&&\\
%\hline \multirow{2}{*}{}  &\multirow{2}{*}{$\{ e \}$} & \multirow{2}{*}{$\BWIC$} \\
%\hline & &\\
\hline
\hline \multirow{4}{*}{$\BWIC$}&\multirow{4}{*}{$\begin{array}{c} \{ e \} \\ \BWIZ \\ \BWIZ \oplus \BWIZ \end{array}$}&\multirow{4}{*}{$\begin{array}{l}\BWIC \\ \BWIC^* \\ T^2 \end{array}$} \\
& & \\
&&\\
&&\\
%\multirow{2}{*}{}&\multirow{2}{*}{$\BWIZ \oplus \BWIZ$} &\multirow{2}{*}{ $T^2$} \\
%& &\\
\hline
\hline 
\multirow{4}{*}{$D$ or ${\cal U} $ } & \multirow{4}{*}{$\begin{array}{c} \{e \} \\ \BWIZ \\ \BWIZ \end{array}$} &\multirow{4}{*}{$\begin{array}{c} {\cal U} \\ D^* \\ D_r \end{array}$}\\
&&\\
&&\\
&&\\
\hline
\end{tabular}
\caption{The Seven Exceptional Riemann Surfaces}\label{BWref:7ExcRiemann}
\end{center}
\end{table}\\
We have not yet encountered some of the Riemann surfaces in this
table, but we will define them now,
\begin{itemize}
\item The punctured plane $\BWIC^* \equiv \BWIC\backslash \{ 0\}$,
\item The punctured disk $ D^* \equiv \{ z\in \BWIC |\, 0<|z|<1 \}$,
\item The cylinder $C_2$ or annulus $D_r \equiv \{ z \in \BWIC |\, r<|z|<1
\}$, for $ 0<r<1$.
\end{itemize}
%{\bf piece on exceptional Riemann surfaces: surfaces where $\pi_1$
%is commutative.}

\section{Properties of Riemann Surfaces \label{BWs:PropRS}}
\subsection{Automorphism Group and CKV's}\label{BWs:CKVs}
We have already noticed that the automorphism group $Aut(M)$ of a
Riemann surface $M$ plays an important role in the classification of
Riemann surfaces. Especially the automorphism groups of the simply
connected Riemann surfaces were crucial. Therefore it is essential
to give an extended discussion of automorphism groups of Riemann
surfaces. Moreover, we will see later on that the automorphism group will play an important role when gauge fixing the Polyakov path integral. The automorphism group of a Riemann surface consists of combinations Diff$\times$Weyl that leave the metric invariant. Gauge fixing the path integral does not eliminate this symmetry group. But we can use these symmetries to fix the insertion points of the vertex operators.

For most Riemann surfaces $Aut(M)$ is a discrete group and the symmetry group does not cause any additional problems.
Only the exceptional Riemann surfaces have a continuous automorphism
group. And for a continuous $Aut(M)$ it is possible to find conformal
(bijective holomorphic) transformations of $M$ which leave the
metric (complex structure) on $M$ Weyl-invariant. 
%These symmetries should still be modded out after gauge-fixing the path integral or can be used to fix the positions of the vertex operators. 
The infinitesimal form of the transformations connected to the identity are known as {\bf Conformal Killing
Vectors (CKV)} and they generate the {\bf Conformal Killing Group
(CKG)}. One can define a CKV in a mathematical way as a Killing vector (symmetry of the metric) which leaves the metric in the same conformal class,
\begin{eqnarray}
{\cal L}_\xi g_{ab}(x) = \phi(x)\, g_{ab} (x),\label{BWeqconfkill}
\end{eqnarray}
for some (in general) non-constant function $\phi(x)$. When we use this definition for a Riemann surface with its Hermitian metric, then it follows immediately that the CKV are holomorphic functions, i.e.\ $\partial_{\bar z} \xi^{z}=0=\partial_{z} \xi^{\bar z}$.
It is the connected component $Aut_0(M)$ that raises additional concerns when gauge fixing the path integral.   

We start by discussing $Aut(M)$ of simply connected Riemann surfaces $M$, after which we comment on the
remaining exceptional Riemann surfaces. We also discuss the CKV's of the simply connected Riemann surfaces and the torus using their infinitesimal form. It is left as an exercise to check that the given expressions for the CKV's satisfy eq.~(\ref{BWeqconfkill}). \\
%{\bf \BWcomment piece on CKV}

\subsubsection{The Riemann Sphere}
To determine the automorphisms of the Riemann Sphere $\BWIC\cup
\{\infty \}$ we should study bijective meromorphic maps. A
meromorphic function with domain $\BWIC\cup \{\infty \}$ can be
written as the ratio of two polynomials\footnote{For a proof of this
statement we refer to R. Miranda, p.30. The proof is build upon the
statement that a non-zero meromorphic function on a compact Riemann
surface has a finite number of zeroes and poles.},
\begin{eqnarray}
f(z) = \frac{P(z)}{Q(z)}.
\end{eqnarray}
The constraint that $f(z)$ is bijective, implies that $f(z)$ has
only simple poles and zeroes\footnote{If there would be a pole or
zero of order higher than one, then $f(z)$ would be mutli-valued,
loosing its injective character.}. Therefore the most general
automorphism of $\BWIC\cup \{\infty \}$ can be written as,
\begin{eqnarray}
f(z) = \frac{az+b}{cz+d},\quad a,b,c,d \in \BWIC.
\end{eqnarray}
We can define this type of transformation using the left action of
$Aut(\BWIC\cup \{\infty \})$ on $\BWIC\cup \{\infty \}$,
\begin{eqnarray}
\left(\begin{array}{cc} a & b \\ c & d \end{array} \right) \cdot z
\equiv \frac{az+b}{cz+d}. \label{BWeq:Actingz}
\end{eqnarray}
The question for which values of $a, b, c$ and $d$ the function
$f(z)$ would be invertible, reduces to asking the question for which
values of $a, b, c$ and $d$ the matrix would be invertible (i.e.\ the
matrix would be an element of $GL(2,\BWIC)$). The answer to that
last question is that $ad-bc \neq 0$. Transformations of this type
are called {\bf M\"{o}bius transformations}. The most general
M\"{o}bius transformation is a composition of a translation, a
dilation, a rotation and a complex inversion (not in that order).
\begin{BWexerc}
Find a general expression for the inverse of a M\"{o}bius
transformation.
\end{BWexerc}

Since an overall scaling of $a, b, c$ and $d$ does not change the
transformation, we can rewrite\footnote{We can always multiply a
matrix in $GL(2,\BWIC)$ by a complex constant such that the metric
has a positive determinant.} the invertibility constraint as $ad-bc
= 1$. If we multiply $a, b, c$ and $d$ all by $-1$, then the
determinant of the transformation remains $1$. This means that we
can identify a transformation $(a,b,c,d)$ with $(-a,-b,-c,-d)$.
Hence, we can conclude that,
\begin{eqnarray}
Aut(\BWIC \cup \{\infty \} ) \cong PSL(2,\BWIC) \equiv
SL(2,\BWIC)/\{\pm e\}.
\end{eqnarray}
We notice that this group is a 3-dimensional complex (or
6-dimensional real) Lie-group.\\
Now, we can also have a look at the CKV's for the sphere. As the CKV's are those elements of the automorphism group that are connected to the identity, we should start with elements in $SL(2,\BWIC)$. In case we want to look at the infinitesimal form of the transformation, we write an element using the exponential map between the Lie-group and the Lie-algebra, i.e.\ $e^A$. The condition that the determinant of a matrix $\BWe^{A}$ in $SL(2,\BWIC)$ is equal to 1 leads to the condition that $A$ is traceless, $Tr(A)=0$. We can write the infinitesimal form of the M\"{o}bius transformation as,
\begin{eqnarray}
e^A = \mathbb{1} + A  + \cdots =   \left(\begin{array}{cc} 1+ \alpha & \beta \\ \gamma &  1- \alpha \end{array}\right) + \cdots , \label{BWeq:infmobius}
\end{eqnarray}
%+ \frac{1}{2} A^2
with $\alpha$, $\beta$, $\gamma \in \BWIC$. We can now see that an infinitesimal M\"{o}bius-transformation is given by $z \rightarrow z + \delta z$ with,
\begin{eqnarray}
z &\rightarrow& z + \delta z , \quad \delta z = \beta + 2 \alpha z - \gamma z^2, \label{BWeqconfkillsphere1}\\
\bar{z} &\rightarrow& \bar{z} + \delta \bar{z} , \quad \delta \bar{z} = \bar{\beta} + 2 \bar{\alpha} \bar{z} - \bar{\gamma} {\bar{z}}^2. \label{BWeqconfkillsphere2}
\end{eqnarray}
In order to obtain this expression we made a power series expansion of the transformation (\ref{BWeq:infmobius}) in $z$ and keep only those terms that are linear in the parameters $\alpha$, $\beta$, $\gamma$. This implies that there are no terms of order $z^3$ or higher in the transformation. Another way to see this is by requiring that the CKV's are defined globally. This implies that they should also be defined at the patch where $\infty$ lies. If we make the transformation $u=1/z$, we see that $\delta u = - 1/z^2 \delta z$ and that the transformation in the $u$-patch is holomorphic if $\delta z$ does not grow faster than $z^2$ for $z\rightarrow \infty$. 

\subsubsection{The Complex Plane}
The automorphism group $Aut(\BWIC)$ of the complex plane are those
M\"{o}bius transformations that leave $\infty$ fixed, i.e.\ those transformations (\ref{BWeq:Actingz}) with
$c=0$. We can rescale $a$ and $b$ such that $d=1$. Thus the
transformations are of the form,
\begin{eqnarray}
f(z) = a z + b = \left(\begin{array}{cc} a & b \\ 0 & 1 \end{array}
\right) \cdot z, \quad a \in \BWIC\backslash\{0\}.
\end{eqnarray}
This group is nothing else than the group of affine transformations of the
plane,
\begin{eqnarray}
Aut(\BWIC) \cong Aff(1,\BWIC).
\end{eqnarray}
This group is a 2-dimensional complex (or 4-dimensional real)
Lie-group.
To find the CKV's we can follow the same pattern as for the Riemann Sphere. In the end we will find the same type of transformations as eq.~(\ref{BWeqconfkillsphere1}) and (\ref{BWeqconfkillsphere2}) with $\gamma = 0$.

\subsubsection{The Upper Half Plane}
Now we are looking for M\"{o}bius transformations that leave the
upper half plane ${\cal U}$ invariant. These transformations should
also map the boundary $\BWIR \cup \{\infty \}$ to itself. This
implies that $a, b, c$ and $d$ $\in \BWIR$, and an overall rescaling
yields the constraint $ad -bc = \pm 1$. In this case we must make a
difference between matrices (in $GL(2,\BWIR)$) with a positive
determinant and those with a negative determinant. We can not
multiply a matrix with negative determinant by a real number to
obtain a matrix with positive determinant. Since $Aut(\cal U)$
should contain the identity, we must choose those matrices with a
positive determinant ($ad-bc = 1$), thus $SL(2,\BWIR)$. We can again
identify matrices with entries $(a,b,c,d)$ and $(-a,-b,-c,-d)$. The
group of automorphisms of the upper half plane is thus,
\begin{eqnarray}
Aut({\cal U}) \cong PSL(2,\BWIR) \equiv SL(2,\BWIR)/\{\pm e\}.
\end{eqnarray}
This group is a 3-dimensional real Lie-group.
The analysis for the CKV's is fully analogous to the one of the Riemann Sphere. We find transformations of the form eq.~(\ref{BWeqconfkillsphere1}) and (\ref{BWeqconfkillsphere2}) with $\alpha$, $\beta$ and $\gamma$ $\in$ $\BWIR$.

\subsubsection{The Remaining Exceptional Riemann Surfaces}
The Uniformization Theorem tells us that the universal covering
surface of a Riemann surface is one of the simply connected Riemann
surfaces. We  use the knowledge of the automorphism group
$Aut(\hat M)$ of the covering surface $\hat M$ and the covering
group G to determine the automorphism group $Aut(M)$ of the
remaining four exceptional Riemann surface at hand. Before we can
pose the theorem that will help us to determine $Aut(M)$, we need to
refresh some more concepts. The normalizer of the Kleinian group $G$
in $Aut(\hat M)$ are those elements of $Aut(\hat M)$ which commute
with all elements of $G$,
\begin{eqnarray}
N(G) \equiv \{ h\in Aut(\hat M) |\, h G h^{-1} = G  \}.
\end{eqnarray}
We have the following inclusions $G \subset N(G) \subset Aut(\hat
M)$. Since $G$ is a group, also $N(G)$ is a group. Moreover,  $G$ is the largest normal subgroup of $N(G)$, by
virtue of the definition of a normal subgroup and of a normalizer.
Since $G$ is a normal subgroup of $N(G)$ we can construct the
quotient group $N(G)/G$, which is precisely isomorphic to $Aut(M)$,
\begin{BWtheorem}
If $M$ is a Riemann surface, $\hat M$ its universal covering surface
and $G$ the covering group, then
\begin{eqnarray}
Aut(M) \cong N(G)/G,
\end{eqnarray}
where $N(G)$ is the normalizer of $G$ in $Aut(\hat M)$.
\end{BWtheorem}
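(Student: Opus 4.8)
The plan is to realize $Aut(M)$ as a quotient of $N(G)$ by constructing a single surjective group homomorphism $\Phi \colon N(G) \to Aut(M)$ whose kernel is exactly $G$, and then to invoke the first isomorphism theorem. Throughout I would use the Uniformization Theorem in the form $M \cong \hat M / G$, with covering map $\eta \colon \hat M \to M$ whose fibres are precisely the $G$-orbits, together with the fact (contained in the assertion $G \cong \pi_1(M)$) that $G$ is exactly the group of deck transformations of $\eta$, i.e.\ the biholomorphisms $\gamma$ of $\hat M$ satisfying $\eta \circ \gamma = \eta$.

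First I would construct $\Phi$ by descent. Given $h \in N(G)$, define $\bar h \colon M \to M$ by $\bar h(\eta(p)) = \eta(h(p))$. This is well defined: two points have the same $\eta$-image iff they lie in a common $G$-orbit, and if $q = g\cdot p$ with $g \in G$ then $h(q) = (h g h^{-1})\cdot h(p)$, where $h g h^{-1} \in G$ because $h$ normalizes $G$; hence $h(p)$ and $h(q)$ share a fibre and $\eta(h(p)) = \eta(h(q))$. Since $\eta$ is a local biholomorphism and $\bar h \circ \eta = \eta \circ h$ is holomorphic, $\bar h$ is holomorphic, with holomorphic inverse $\overline{h^{-1}}$, so $\bar h \in Aut(M)$; the assignment is clearly multiplicative. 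For the kernel, if $\bar h = \mathrm{id}_M$ then $\eta(h(p)) = \eta(p)$ for all $p$, so (by freeness of the action) there is a unique $g_p \in G$ with $h(p) = g_p \cdot p$; a standard proper-discontinuity argument shows $p \mapsto g_p$ is locally constant, hence constant on the connected surface $\hat M$, giving $h = g \in G$. Thus $\ker\Phi = G$.

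For surjectivity I would lift. Given $\phi \in Aut(M)$, the map $\phi \circ \eta \colon \hat M \to M$ is holomorphic with simply connected domain, so the lifting criterion for coverings yields a holomorphic $\tilde\phi \colon \hat M \to \hat M$ with $\eta \circ \tilde\phi = \phi \circ \eta$. Lifting $\phi^{-1}$ as well, the two composites of the lifts project to $\mathrm{id}_M$ and are therefore deck transformations in $G$, which forces $\tilde\phi$ to be bijective, so $\tilde\phi \in Aut(\hat M)$. Finally, for $g \in G$ one computes $\eta \circ (\tilde\phi g \tilde\phi^{-1}) = \phi \circ \eta \circ g \circ \tilde\phi^{-1} = \phi\circ \eta \circ \tilde\phi^{-1} = \eta$, so $\tilde\phi g \tilde\phi^{-1}$ is a deck transformation lying in $G$; applying the same to $\tilde\phi^{-1}$ gives $\tilde\phi G \tilde\phi^{-1} = G$, so $\tilde\phi \in N(G)$, and by construction $\Phi(\tilde\phi) = \phi$. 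The first isomorphism theorem then yields $N(G)/G \cong Aut(M)$.

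The main obstacle I anticipate is the surjectivity step, and specifically the two structural facts underpinning it: that every automorphism of $M$ admits a lift to $\hat M$ (resting on simple connectedness of $\hat M$ and the covering-space lifting theorem), and that this lift is automatically a \emph{biholomorphism} normalizing $G$ rather than merely a continuous self-map. Pinning down bijectivity of $\tilde\phi$ cleanly—via lifts of $\phi$ and $\phi^{-1}$ whose composites project to $\mathrm{id}_M$ and hence are deck transformations—is the delicate bookkeeping. By contrast, the well-definedness of $\Phi$ and the identification $\ker\Phi = G$ are comparatively routine once one has recorded that $G$ is precisely the full deck group of $\eta$.
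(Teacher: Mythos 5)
Your proof is correct, but note that the paper itself does not actually prove this theorem: it is stated as a tool for computing $Aut(M)$ of the exceptional Riemann surfaces, preceded only by the remarks that $N(G)$ is a group and that $G$ is normal in $N(G)$, so that the quotient $N(G)/G$ makes sense. Your argument therefore supplies exactly what the paper leaves to its references, and it does so by the standard covering-space route: descent of a normalizer element $h$ to a well-defined automorphism $\bar h$ of $M\cong \hat M/G$ (using $hGh^{-1}=G$ to see that $h$ permutes fibres), identification of the kernel with $G$ via freeness and proper discontinuity of the action, and surjectivity by lifting $\phi\in Aut(M)$ against the simply connected cover and checking that the lift is a biholomorphism normalizing $G$; the first isomorphism theorem then closes the argument. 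The one step you compress slightly is the assertion that any self-map $\gamma$ of $\hat M$ with $\eta\circ\gamma=\eta$ lies in $G$ --- a priori such a $\gamma$ need not be invertible, and one should invoke uniqueness of lifts (both $\gamma$ and the unique $g\in G$ carrying a chosen point $p$ to $\gamma(p)$ are lifts of $\eta$ agreeing at $p$, hence equal) --- but you flag this as the delicate point, and the same uniqueness argument settles it. Your construction is consistent with how the paper subsequently uses the theorem, e.g.\ conjugating the lattice generators of the torus inside $Aut(\BWIC)$ to find $Aut(T^2)$.
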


\noindent This theorem is very useful to determine the automorphism group of the remaining four exceptional Riemann
surfaces.\\
\\
\noindent {\bf (1) The Torus $T^2$}\\
We have seen above that the Torus can be seen as the Complex Plane $\BWIC$ modded out by a lattice $L(\omega_1, \omega_2)$. Using the above theorem we can determine those elements of $Aut(\BWIC)$ which  leave the lattice invariant. The lattice itself can be generated by the two elements,
\begin{eqnarray}
\left(\begin{array}{cc}
1 & \omega_1 \\
0 & 1
\end{array} \right) \quad \text{and} \quad
\left(\begin{array}{cc}
1 & \omega_2 \\
0 & 1
\end{array} \right).
\end{eqnarray}
One can then see that under conjugation of an element of $Aut(\BWIC)$ a basis vector of the lattice transforms as,
\begin{eqnarray}
\left(\begin{array}{cc}
a & b \\
0 & 1
\end{array} \right) \left(\begin{array}{cc}
1 & \omega_i \\
0 & 1
\end{array} \right) \left(\begin{array}{cc}
a^{-1} & a^{-1} b \\
0 & 1
\end{array} \right) = \left(\begin{array}{cc}
1 & a \omega_i \\
0 & 1
\end{array} \right) .
\end{eqnarray}
The elements of the normalizer of the lattice are those elements which leave the lattice invariant. This means that the transformed lattice vectors should be linear combinations of the old lattice,
\begin{eqnarray}
\left(\begin{array}{c} a \omega_1\\  a \omega_2  \end{array}\right)
 = \left(\begin{array}{cc}
\alpha & \beta\\
\gamma & \delta
\end{array} \right)  \left( \begin{array}{c}  \omega_1 \\  \omega_2  \end{array}\right),\label{BWeqcondaut}
\end{eqnarray}
with $ \alpha, \beta, \gamma, \delta \in
\BWIZ$, and $\alpha \delta - \beta \gamma= 1$. This condition is for instance satisfied for translations ($a=1$, $z\rightarrow z +b$), in which case $\alpha=\delta=1$ and $\beta=\gamma =0$. Indeed, the two-dimensional translation group ${\cal T}^2$ of $\BWIC$ form the identity component $Aut_0(T^2)$. It is the identity component $Aut_0(T^2)$ which should be used to find the CKV's. We thus find that the CKV's for the torus are infinitesimal translations,
\begin{eqnarray}
z & \rightarrow & z + \epsilon,\\
\bar z & \rightarrow& \bar z+ \bar \epsilon.
\end{eqnarray}
The condition eq.~(\ref{BWeqcondaut}) is also satisfied when $a=-1$, in which case $\alpha=\delta=-1$ and $\beta=\gamma =0$. This means that we should consider $Aut(T^2)$ as a $\BWIZ_2$-extension of $Aut_0(T^2)$. For certain specific values of $\omega_1$ and $\omega_2$ there are even more possibilities for $a$, which lead to $\BWIZ_4$- or $\BWIZ_6$- extensions\footnote{When $(\omega_1, \omega_2) = (1, i)$ also $a=\pm i$ is an allowed transformation, leading to a $\BWIZ_4$-extension. When $(\omega_1, \omega_2)= (1, e^{ i 2 \pi /3})$ we notice that $a = e^{\pm i 2  \pi /3}$ and $a=e^{\pm i \pi/3}$ are four other allowed transformations, yielding a $\BWIZ_6$-extension.}.
\\
\\
\noindent {\bf (2) The Punctured Plane $\BWIC^\times$}\\
The universal covering surface is the complex plane $\BWIC$ with $Aut(\BWIC) = Aff(1,\BWIC)$. We are thus looking for affine transformation of the complex plane which leave $0$ fixed, i.e.\ $z \rightarrow a z$ with $a\in \BWIC^\times$. We also have to include the discrete transformation $z\rightarrow 1/z$ interchanging $0$ and $\infty$. We can conclude: $Aut_0(\BWIC^\times) = \BWIC^\times$ and $Aut(\BWIC^\times)/ Aut_0(\BWIC^\times) \cong \BWIZ_2$.\\
\\
\noindent {\bf (3) The Punctured Disk $D^\times$}\\
The elements of $ PSL(2,\BWIR)$ that leave the origin or the Punctured Disk fixed are given by $z \rightarrow e^{i \phi} z$, with $\phi \in \BWIR$ and there are no discrete transformations. So we find: $Aut(D^\times)=Aut_0(D^\times)\cong U(1)$.
\\
\\
\noindent {\bf (4) The Annulus $D_r$}\\
Also for the Annulus we find that the continuous automorphisms are of the form $z \rightarrow e^{i \phi} z$ with $\phi \in \BWIR$ and thus that $Aut_0 (C_2) \cong U(1)$. However, there exists a discrete automorphism of the form $z\rightarrow r/z$ which inverts the Annulus and swaps the inner and outer boundary. This means that the full automorphism group is given by a $\BWIZ_2$ extension of $U(1)$.

\subsection{Moduli space}\label{BWss:moduli_space}
The Uniformization Theorem tells us that every Riemann surface can
be seen as an orbit surface with respect to its covering Riemann
surface and a Kleinian group. Suppose now that we have two
topologically equivalent Riemann surfaces, we could ask ourselves
the question: ''which are the necessary and sufficient conditions
for which the two Riemann surfaces are conformally (in)equivalent?"
Let us elaborate this question a little bit further. Given two
topologically equivalent Riemann surfaces, we can place a metric on
both of the surfaces. However the metrics are not necessarily
conformally equivalent. The difference between two conformally
inequivalent metrics can be expressed in parameters, which can not
be eliminated by a coordinate- or Weyl-transformation. To clarify
these statements, we can take a look at the torus $T^2$ defined by
the equivalence relation $z \sim z + \lambda$, where $\lambda \in
\BWIZ + i \BWIZ$, and with a flat metric. Next, we consider a
different torus, where the periodicity and the metric are not yet
fixed. We can perform a Weyl-transformation on the metric to obtain
a flat metric, after which we perform a coordinate transformation to
bring the metric in the unit form. But, in general, we will not
obtain the same periodicity relation of our first torus. The
periodicity now reads,
\begin{eqnarray}
\tilde z\sim \tilde z + m + n \tau,\, \quad \tau \in \BWIC,\, m, n
\in \BWIZ.
\end{eqnarray}
%We can obtain the original periodicity, if we perform  an
%appropriate coordinate transformation.  by which the metric will now
%contain the complex parameter $\tau$.
We obtain the same periodicity for the choice $\tau = i$, but all other
''choices" for $\tau$ correspond to different tori, see also figure \ref{BWfig:Two_Tori}\footnote{The
discussion given here corresponds to fixing the metric as the unit
metric and change the periodicity conditions via the parameter
$\tau$. One could also choose to keep the periodicity conditions
fixed. In that case the parameter $\tau$ arises in the metric and
thus the metric is different for two conformally inequivalent tori.}.
%%%%%%%%%%%%%%%%%%%%%%%%%%%%%%%%%%%%%%%%%%%%%%%%
\begin{figure}[ht!]
\centering
\begin{picture}(0,0)
 \put(190,55){\large $\tau$}
\end{picture}
\includegraphics[width=.5\textwidth]{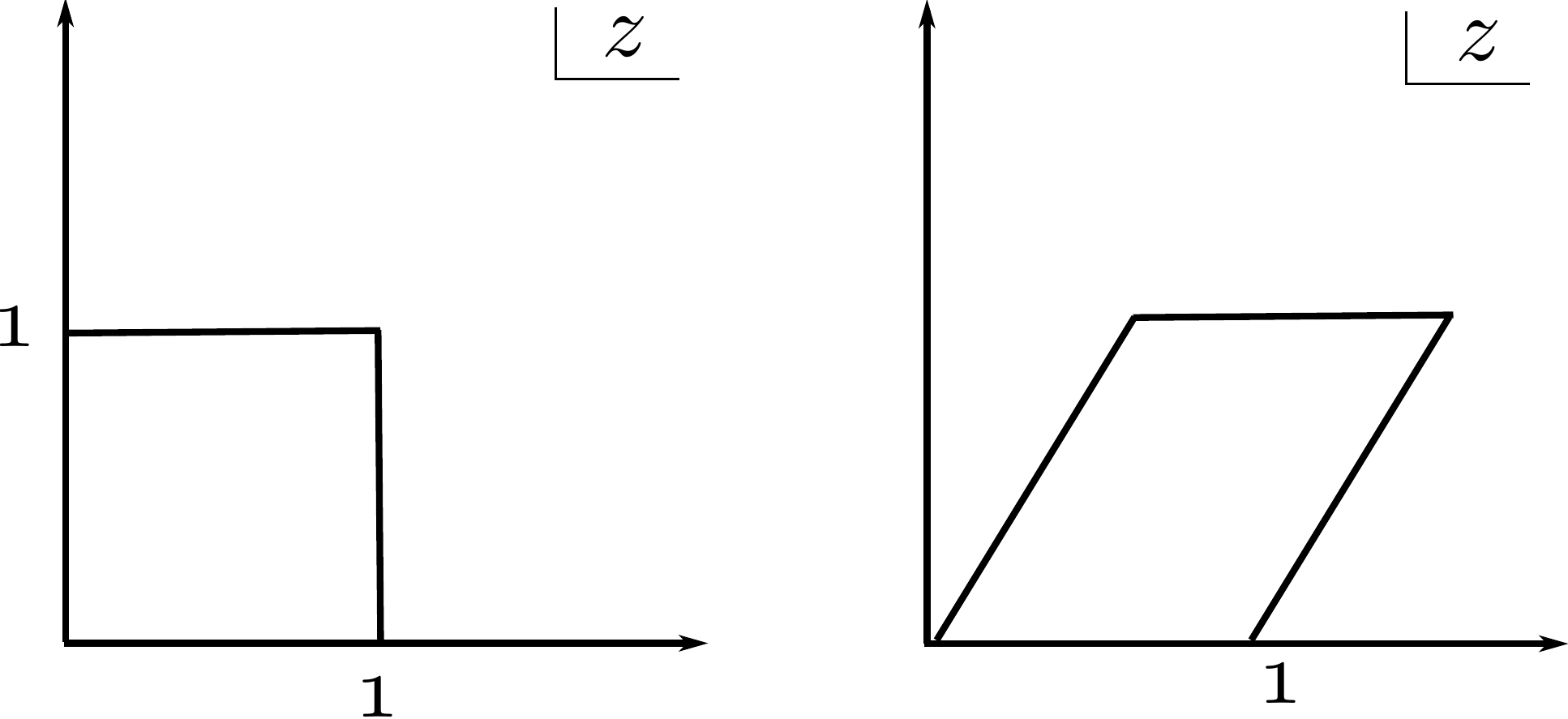} 
\caption{Two different tori. Left: the identification on the complex plane for the unit torus, $z\sim z + m + \BWi n,\,m, n
\in \BWIZ.$ Right: identification with complex parameter $\tau$ as $\tilde z \sim \tilde z + m + n\tau,\, m, n
\in \BWIZ$. \label{BWfig:Two_Tori}}
\end{figure}
%%%%%%%%%%%%%%%%%%%%%%%%%%%%%%%%%%%%%%%%%%%%%%%%

The parameters which represent inequivalent Riemann surfaces, as $\tau$ in case of the torus, are
called {\bf moduli}. The space of all (conformally) inequivalent
metrics on a Riemann surface $M$ is called the {\bf moduli space
${\mathfrak M}_g$}. The moduli space is basically the space of metrics
on $M$ modded out by the Diff($M$)$\times$Weyl-group. At first sight
this space appears to be nothing but a set of equivalence classes
$[g]$. But we will argue later that the moduli space is a quotient
space of a complex manifold, which can be parameterized by the
moduli. The dimension\footnote{For a proof we refer to chapter 5 of \cite{Dijkgraaf:1997ip}.}
of the moduli space depends on the genus g of the Riemann surface,
\begin{eqnarray}
dim_\BWIC\, {\mathfrak M}_{\rm g} = \left\{%
\begin{array}{ccc}
    0&\quad & {\rm g} = 0 \\
    1&\quad & {\rm g} = 1 \\
    3{\rm g}-3&\quad & {\rm g} \geq 2. \\
\end{array}%
\right.
\end{eqnarray}

\noindent We shall now discuss the moduli spaces for Riemann surfaces in
detail.
%The knowledge of the moduli space of $M$ is again (partly)
%encoded in the universal covering surface $\hat M$  and the covering
%group $G \subset Aut(M)$. The following theorem tells us when two
%Riemann surfaces are conformally equivalent,
%\begin{BWtheorem}
%Two Riemann surfaces are conformally equivalent if and only if they
%have the same universal cover $\hat M$ and their respective covering
%groups are conjugate in $Aut(\hat M)$.
%\end{BWtheorem}

\subsubsection{The Riemann Sphere or genus 0}
We have seen that $Aut(\BWIC \cup \{\infty \}) = PSL(2,\BWIC)$.
Since every element of this group fixes at least one point of the
Riemann sphere, it is impossible to find a non-trivial discrete
subgroup acting freely on the Riemann sphere. Hence, if the
universal covering surface is $\BWIC \cup \{\infty \}$, the Kleinian
group is trivial and the only Riemann surface with $\BWIC \cup
\{\infty \}$ as its universal covering surface is the Riemann sphere
itself. Any surface of genus g $=0$ is conformally equivalent to the
Riemann sphere and thus ${\mathfrak M}_{\BWIC \cup \{\infty \}}={0}$. 

\subsubsection{The Torus or genus 1}
Since we can not obtain other Riemann surfaces using the Riemann
sphere as the universal covering surface, we should focus on the two
other possible covering surfaces, i.e.\ $\BWIC$ or ${\cal U}$. In
this section we shall assume that $\BWIC$ is the universal covering
surface. We expand on the earier example of the torus. The torus can indeed be seen as the Riemann surface with
universal covering surface $\BWIC$ and a covering group $G$
generated by two elements,
\begin{eqnarray}
z&\mapsto& z + \omega_1, \quad \omega_1 \in \BWIC,\\
z&\mapsto& z + \omega_2, \quad \omega_2 \in \BWIC.
\end{eqnarray}
%$G$ will be a discrete group, if $Im(\omega_2/\omega_1)\neq0$ ({\bf
%?!?}). 
%$\omega_1$ and $\omega_2$ can be interpreted as the basis vectors of a lattice in the complex plane. 
Without loss of
generality we can choose the basis vectors $\omega_1$ and $\omega_2$ such that\footnote{If $Im(\omega_2/\omega_1)=0$, the group $G$ will no be a discrete group.}
$Im(\omega_2/\omega_1)>0$. Next, we rescale the coordinate $z\mapsto
z/\omega_1$ and thus obtain the transformations,
\begin{eqnarray}
z &\mapsto& z + 1, \\
z &\mapsto& z + \tau, \quad \tau \in {\cal U},
\end{eqnarray}
where we have introduced $\tau \equiv \omega_2/\omega_1$. Since this
group $G$ acts discontinuously on $\BWIC$, we obtain the following
equivalence relation,
\begin{eqnarray}
z\cong z + m + n \tau, \quad m, n \in \BWIZ.
\end{eqnarray}
But this equivalence relation reveals that $\tau$ is still not
unambiguously defined. We can still perform the transformation $\tau
\mapsto \tau + 1$, provided that $(m, n) \mapsto (m-n, n)$. Moreover
we can also allow $\tau \mapsto -1/\tau$, if we send $z\mapsto \tau
z$ and $(m, n) \mapsto (n, -m) $. These transformation define a
group acting on the upper half plane ${\cal U}$, generated by the
elements,
\begin{eqnarray}
T: \tau &\mapsto& \tau + 1,\\
S: \tau &\mapsto& -1/\tau,
\end{eqnarray}
satisfying the relations $S^2 = (ST)^3 = e$. By iterating these
transformation we find the group that leaves $\tau$ invariant,
\begin{eqnarray}
\tau \mapsto \frac{a\tau + b }{c\tau + d},\quad a, b, c, d \in
\BWIZ\quad {\rm such\, that}\,\, ad-bc = 1.
\end{eqnarray}
Reversing the signs of $a, b, c, d$ does not change the
transformation, therefore we can identify transformations with
entries $(a,b,c,d)$ and $(-a,-b,-c,-d)$. Thus the group of
transformations which leave $\tau$ invariant is $PSL(2,\BWIZ) \equiv
SL(2,\BWIZ)/\{\pm e\}$ and is called the {\bf modular group}.
Elements of the modular group are called {\bf modular
transformations}. Hence, it is clear that two parameters $\tau$,
$\tau '$ define the same torus if they are related through a
modular transformation.  The moduli space ${\mathfrak M}_{T^2}$ for the
torus is given by,
\begin{eqnarray}
{\mathfrak M}_{T^2} = {\cal U}/PSL(2,\BWIZ).
\end{eqnarray}
$\tau$ represents the modulus of the torus. $PSL(2,\BWIZ)$ is a group acting on the $ {\cal U}$ and therefore we can give a representation of the orbits under the group action. This representation is called the {\bf Fundamental domain} for the action of $PSL(2,\BWIZ)$. The purpose of the Fundamental Domain is to depict only one representative for every orbit. In figure \ref{BWref:Funddom} we give the Fundamental domain for the moduli space of the torus $T^2$.
\begin{figure}[h]
\begin{center}
\includegraphics[totalheight=0.2\textheight,width=0.6\textwidth]{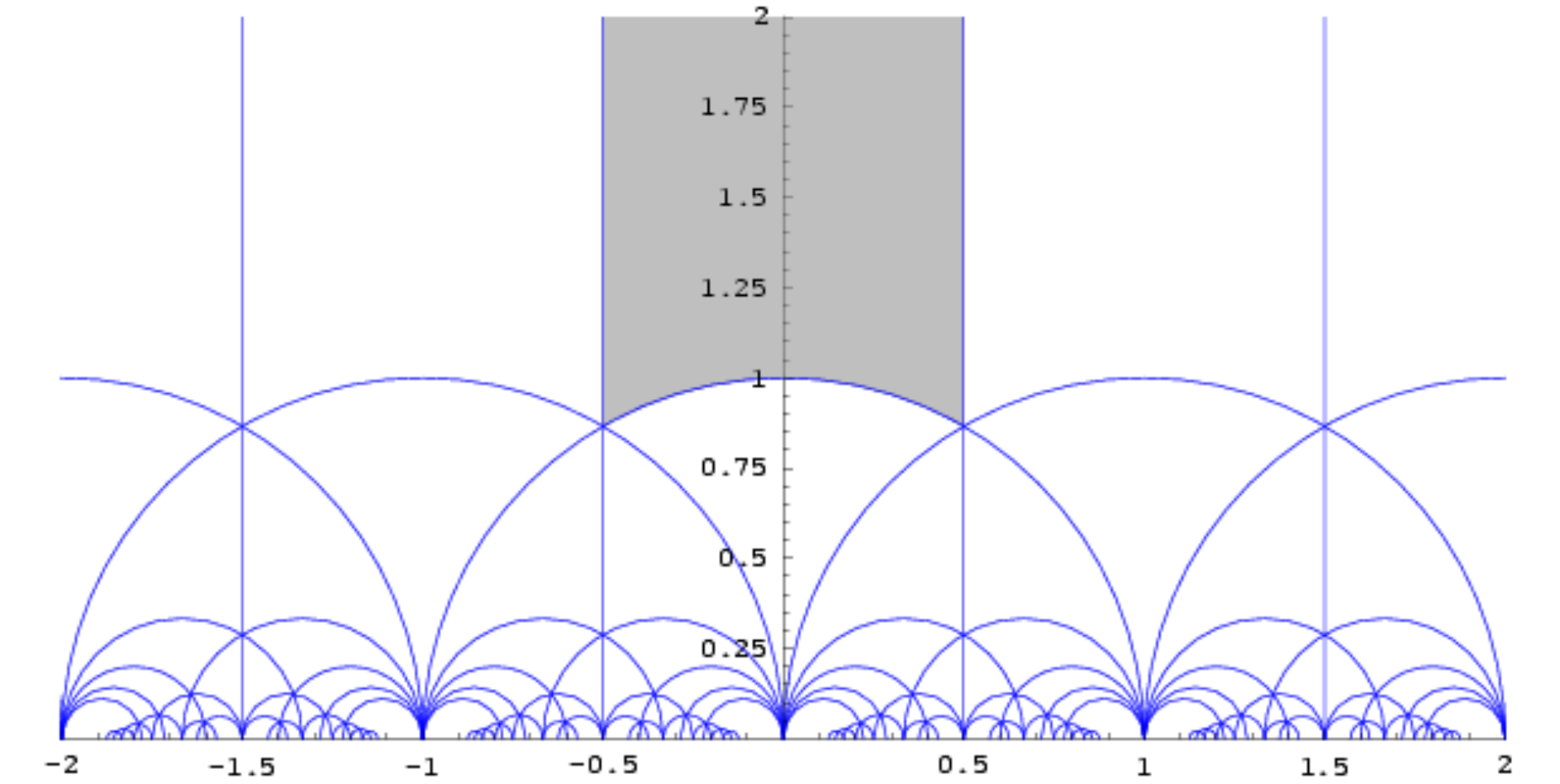}
\caption{Fundamental Domain for the moduli space of the torus $T^2$.\\ Source: \url{http://en.wikipedia.org/wiki/Fundamental_domain}}
\label{BWref:Funddom}
\end{center}
\end{figure}

\subsubsection{Higher Genus and Teichm\"{u}ller space}
The example of the torus teaches us that the moduli space can be
seen as the quotient space of a covering space (${\cal U}$) modded
out by the action of a discrete group ($PSL(2,\BWIZ)$). This
quotient space is not a manifold, since $PSL(2,\BWIZ)$ does not act
freely on ${\cal U}$. There are points $\tau$ (i.e.\ $i$ and $e^{2\pi
i /3}$) for which the stabilizer %$PSL(2,\BWIZ)_\tau \neq $
$G_\tau \neq \{ e\}$, which are called fixed points or {\bf orbifold
singularities}. The moduli space of
the torus is therefore an example of an {\bf orbifold space}.\\
We can repeat this construction for Riemann surfaces $M$ with genus
$> 1$. To this end, we look at those diffeomorphism of $M$ that are
homotopic to the identity map,
\begin{eqnarray}
{\rm Diff}_0(M)\equiv \{ f\in {\rm Diff}(M)|\, f\sim id_M \},
\end{eqnarray}
and quotient this normal subgroup out of Diff$^+(M)$, the
diffeomorphisms preserving the orientation of the surface. For a
compact Riemann surface of genus g this discrete quotient group is
called the {\bf mapping class group (MCG)} or {\bf modular group
$\Gamma_g$} of genus g,
\begin{eqnarray}
\Gamma_{\rm g} \equiv {\rm Diff}^+(M) / {\rm Diff}_0(M).
\end{eqnarray}
The elements of $\Gamma_{\rm g}$ can be thought of as
diffeomorphisms not continuously connected to the identity, the
so-called global diffeomorphisms. As the name already suggests, this
will be the group performing the action on the covering space of the
moduli space. Next step is to obtain the covering space of the
moduli space. As for the case of the torus, the covering space of
the moduli space will consist of classes of conformally inequivalent
metrics. This correspond to taking the following quotient space,
\begin{eqnarray}
{\mathfrak T}_{\rm g} \equiv \frac{{\cal M}_g}{ {\rm Weyl}(M)\times {\rm
Diff}_0(M)},
\end{eqnarray}
where ${\cal M}_g$ is the space of complex metrics and Weyl($M$) the
space of Weyl-transformations. The {\bf Teichm\"{u}ller space}
${\mathfrak T}_{\rm g}$ represents a finite dimensional, simply
connected manifold with the same dimension as the moduli space. The
Teichm\"{u}ller space is therefore a complex manifold, which can be
parameterized by complex Teichm\"{u}ller parameters. The group
$\Gamma_g$ acts on these Teichm\"{u}ller parameters and for those
parameters where $\Gamma_g$ does not act freely, we encounter
orbifold singularities. The moduli space consists of those
Teichm\"{u}ller parameters which can not be identified under the
action of $\Gamma_{\rm g}$,
\begin{eqnarray}
\mathfrak{M}_{\rm g}={\mathfrak T}_{\rm g} / \Gamma_{\rm g}.
\end{eqnarray}

\chapter{Gauge fixing of the Polyakov path integral\label{BWc:Fixing}}

\emph{At this moment, the string S-matrix \eqref{eq:Intro_Smatrix1} is not well-defined. In the path integral, we would like to count only over physically inequivalent configurations. However, the Polyakov action is invariant under an infinite-dimensional symmetry-group we treated in the previous chapter, namely diffeomorphisms and Weyl transformations on the world sheet. Therefore, we correct the expression for the scattering amplitudes to:}
\begin{equation}
 S_{j_1\ldots j_n} (k_1\ldots k_n) = \sum_{topologies} g_s^{-\chi}\int \frac{\BWD X \BWD g}{\cal N} \BWe^ {-S_P(X,g)} \prod_{i=1}^n \int \BWd^2 \sigma_i \sqrt{g(\sigma_i)}{\cal V}_{j_i}(k_i,\sigma_i)\,,\label{eq:Smatrix}
\end{equation}
\emph{where we use the form of the vertex operators as integrals over the worldsheet, see eq.\ \eqref{BWeq:VertexOp} and $\cal N$ is a normalization constant to be determined later. We already guess that the appropriate choice for this constant will be the volume of the gauge group, and as such infinite. Indeed, it is exactly this infinite overcounting due to the gauge invariance of the integrand in \eqref{eq:Smatrix} that we would like to factor out.
}

\section{Main idea -- example}

% Above, we argued that the string path integral as defined in  is actually infinite, because of gauge invariance. 
Gauge equivalent configurations give the same contribution to the path integral \eqref{eq:Smatrix}. Since gauge equivalent cofigurations describe the same physical situation, we should count each configuration only once. As anticipated in the introduction above, we do this by factoring (and dividing) out the volume of the gauge group. We try to write the path integral as
\begin{equation}
 \int \BWD X \BWD g (\ldots) = \int (\BWD \rm gauge)( \BWD \rm physical) (\ldots)\,,
\end{equation}
with a contribution over physical states, factoring out the gauge dependence. 

Let us make the way in which we tackle the problem clear by means of an example. Consider the Gaussian integral:
\begin{equation}
 I = \int \BWd x \BWd y\, \BWe^{-\frac12 (x-y)^2}\,.
\end{equation}
on the two-dimensional real plane $\mathbb{R}^2$.
As is the case with the string path integral, this integral is infinite due to an overcounting associated with a gauge invariance of the integrand. In this case the gauge invariance is given by (local) translations in the $x\-y$-plane in the ``North-East/South-West''-direction:
\begin{equation}
\left\{
\begin{array}{lcl}
  x&\to&x +a(x,y)\\
 y&\to& y + a(x,y)
 \end{array}
\right.
\end{equation}
For concreteness, we will call this translational gauge group $G$. We can regularize the above integral by factoring out the volume of the gauge group:
\begin{equation}
%  I' = \int \frac{\BWd x \BWd y}{\text{Vol}(G)} \BWe^{-\frac12 (x-y)^2}\,,\label{BWeq:Iprime}
I = {\text{Vol}(G)} I'
\end{equation}
with
% \begin{equation}
 $\text{Vol}(G) = \int da$
% \end{equation}
the (infinite) volume of the gauge group. $I'$ then represents the regularized value of the integral $I$. In the regularized integral $I'$,  every gauge invariant configuration is counted exactly once.

\begin{figure}[b!]
\begin{center}
\unitlength .9mm
\begin{picture}(100,70)(10,0)
\linethickness{.3mm}
%
%%assen:
\linethickness{.4mm}
\put(60,0){\vector(0,1){60}}
\put(63,57){$y$}
\put(0,20){\vector(1,0){120}}
\put(118,23){$x$}
%
%%flows:
\linethickness{.3mm}
%
% %%first arrow:
\put(0,0){\vector(1,1){20}}
\put(0,30){\vector(1,1){20}}
\put(30,0){\vector(1,1){20}}
\put(60,0){\vector(1,1){20}}
\put(90,0){\vector(1,1){20}}
%\put(120,0){\vector(1,1){20}}
%
%%second arrow:
\put(18,18){\vector(1,1){20}}
\put(18,48){\line(1,1){12}}
\put(48,18){\vector(1,1){20}}
\put(78,18){\vector(1,1){20}}
\put(108,18){\line(1,1){12}}
%\put(138,18){\vector(1,1){20}}
%
%%end line:
\put(36,36){\line(1,1){24}}
\put(66,36){\line(1,1){24}}
\put(96,36){\line(1,1){24}}
%\put(126,36){\line(1,1){20}}
%\put(156,36){\line(1,1){20}}
%
%% gauge slice (t,-t)
\linethickness{.6mm}
\put(80,0){\textcolor{blue}{\line(-1,1){60}}}
\put(80,0){\textcolor{blue}{\line(-1,1){60}}}
\put(80,0){\textcolor{blue}{\line(-1,1){60}}}
\put(26,55){\textcolor{blue}{$(\hat x,\hat y) = (m,-m)$}}
%
%% gauge slice (x_0,t)
\linethickness{.3mm}
\put(80,0){\textcolor{red}{\line(0,1){60}}}
\put(83,55){\textcolor{red}{$(\hat x,\hat y) = (x_0,m)$}}
%
%% bad gauge slice (x_0+t,y_0+t)
\linethickness{.3mm}
\put(80,0){\textcolor{green}{\line(1,1){40}}}
\put(95,40){\textcolor{green}{$(\hat x,\hat y) = (m,m)$}}
\end{picture}
\caption{Several gauge choices for the Gaussian integral $I'$, see eq. \eqref{BWeq:xhatyhat}. The black arrows represent the translational flow of the gauge group. The colored lines represent several choices of gauge slices. Note that a gauge slice does not have to be orthogonal to the action of the gauge group (red vs. blue slice). However, the representative slicing has to span the entire space of configurations ($\mathbb{R}^2$) when acting on it with the gauge group: the green line parametrizes a bad choice of gauge slice in this respect. \label{BWfig:GaugeSlicesEx}}
\end{center}
\end{figure}
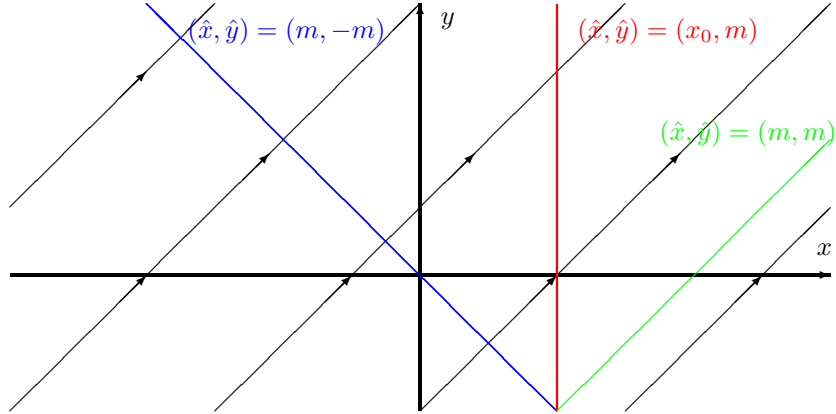

To obtain this factorization, we perform a change of variables. Therefore, we choose a coset representative $(\hat x, \hat y)$ as an element of $\mathbb{R}^2/G$, i.e.\ the real plane modulo the gauge group $G$. We call the choice of coset representative a \textit{gauge slice}. The quotient group $\mathbb{R}^2/G$ is one-dimensional and we can parametrize its representative as $(\hat x(m), \hat y(m))$, with $m \in \mathbb{R}$. We call the parameter $m$ a modulus. It labels gauge inequivalent coordinates on the real plane.

In order for the representative element $(\hat x(m), \hat y(m))$ to be a good choice, we must demand that any $(x,y)\in \mathbb{R}^2$ can be related to some $(\hat x(m), \hat y(m))$ by an element of $G$:
\begin{equation}
\begin{array}{lccc}
 &{\rm Physical~ change}&  &{\rm Gauge~ change}\\
  x =& \overbrace{\qquad\hat x(m)\qquad }&+& \overbrace{\,\quad \, \,a(x,y)\,\quad\,\,} \\
 y =& \hat y(t) &+& ~\qquad a(x,y)\qquad \,, 
\end{array}
\label{BWeq:xhatyhat}
\end{equation}
for a function $a$. It is important that the transformation $(x,y) \to (m,a)$ is one to one. In that case, we can write the integration measure as
\begin{equation}
 \BWd x \BWd y = J  \BWd t \BWd a\,,
\end{equation}
where $J$ is the Jacobian of the transformation. It can be evaluated using  \eqref{BWeq:xhatyhat}:
\begin{equation}
 J = \left|\frac{\partial (x,y)}{\partial (m,a)}\right| = \left|\det \begin{pmatrix} \frac{\partial \hat x }{\partial t}& \frac{\partial \hat y }{\partial t}\\ 1 &1 \end{pmatrix}\right|=\left| \frac{\partial \hat x }{\partial m}-\frac{\partial \hat y }{\partial m}\right|
\end{equation}
Notice that the Jacobian is independent of the gauge group $G$. We can thus factor out the integral over $G$ in the integral \eqref{BWeq:xhatyhat}.
The regularized integral, counting gauge inequivalent choices only once, is then given as:
\begin{equation}
 I' = \int \BWd t J(m) \BWe ^{-\frac12 [\hat x (m) - \hat y (m)]^2}
\end{equation}

It is illustrative to give a few  possible gauge choices as an example, in order to show that the exact choice of representative element is not important (as long as it defines a good gauge slice, of course). We refer to figure \ref{BWfig:GaugeSlicesEx}. Choosing the red gauge slice, we have $(\hat x (t), \hat y (t)) = (x_0,t)$ and  $J = 1$.
The integral $I'$ becomes:
\begin{equation}
  I' = \int \BWd t \BWe ^{-\frac12 m^2} = \sqrt{\pi/a}\,.
\end{equation}
For the blue gauge slice in figure \ref{BWfig:GaugeSlicesEx}, we find $(\hat x (m), \hat y (m)) = (m,-m)$ and $J = 2$. The Gaussian integral $I'$ returns the same result:
\begin{equation}
   I' = 2\int \BWd t \BWe ^{-\frac 12 (2m)^2} = \sqrt{\pi/a}\,,
\end{equation}
after an easy change of coordinates.

In the following section, we bring the ideas of the above example into play when considering the more difficult case of finding a gauge slice for the Polyakov path integral.

\section{Regularizing the Polyakov path integral\label{BWs:Fixing_The_Gauge}}

% Denote the space of configurations of embeddings $X$ and metrics $g$ as $\mathcal M$, and the gauge group as $G$. Then the problem we are facing is to define an integral over the space $M = $, when a measure on $\mathcal M$ is already defined.
% $S_{j_1\ldots j_n} (k_1\ldots k_n)$

\subsection{Gauge slice for Polyakov string}
Now we want to apply the method of the example to the string S-matrix. The question is now to find a good gauge slice for this case. We are performing an integral over the space
\begin{equation}
 \BWmetric\times\BWspacetime\times\left(\BWworldsheet\right)^n
\end{equation}
This is the space of metrics $g_{ab}\in \BWmetric$, embeddings $X^\mu\in \BWspacetime$ and vertex operator positions $\{\sigma_i\}\in \left(\BWworldsheet\right)^n$(remember that the latter are just coordinates on the worldsheet). We would like to write the path integral \eqref{eq:Smatrix} as an integral over the physically inequivalent configurations, given by the quotient space:
\begin{equation}
 \frac{ \BWmetric\times\BWspacetime\times\left(\BWworldsheet\right)^n }{\BWgauge}\label{eq:String_Coset}
\end{equation}
The action of the gauge group is given as follows. In terms of worldsheet coordinates $\sigma^a$, we are performing the path integral over configurations $ (g(\sigma),X,\sigma_i)$. The integrand  (consisting of Polyakov action and vertex operator insertions) is invariant under Weyl transformations and diffeomorphisms. We can write the action of a diffeomorphism $f$ and a Weyl transformation defined by a function $\phi$ as
\begin{equation}
 (g,X,\{\sigma_i\})\quad \to \quad(f_*(e^{\phi}g),f_*X,f(\sigma_i))\,.
\end{equation}
(The notation $f_*$ denotes the pullback of $f$.)

We count that the gauge group $\BWgauge$ has three real parameters (two for diffeomorphisms, one for Weyl transformations), so a first suggestion for a good choice of gauge slice (a coset representative of the coset space \ref{eq:String_Coset}) would be to keep a fixed metric $\hat g$ as a gauge slice, thus allowing the embeddings and vertex coordinates to vary freely along the slice:
\begin{equation}
 (\hat g, X, \sigma_i)
\end{equation}
In the next chapte, we will choose a flat metric $\hat g_{ab} = \delta_{ab}$, or $g_{z\bar z} = g_{\bar z z} =1/2, g_{zz}=g_{\bar z \bar z} = 0$ in compelx coordinates. Note that this metric will have a remaining conformal invariance (see below), related to the CKVs.
%%%%%%%%%%%%%%%%%%%%%%%%%%%%%%%%%%%%%%%%%%%%%%%
\begin{figure}[ht!]
\centering
\includegraphics[height=.13\textheight]{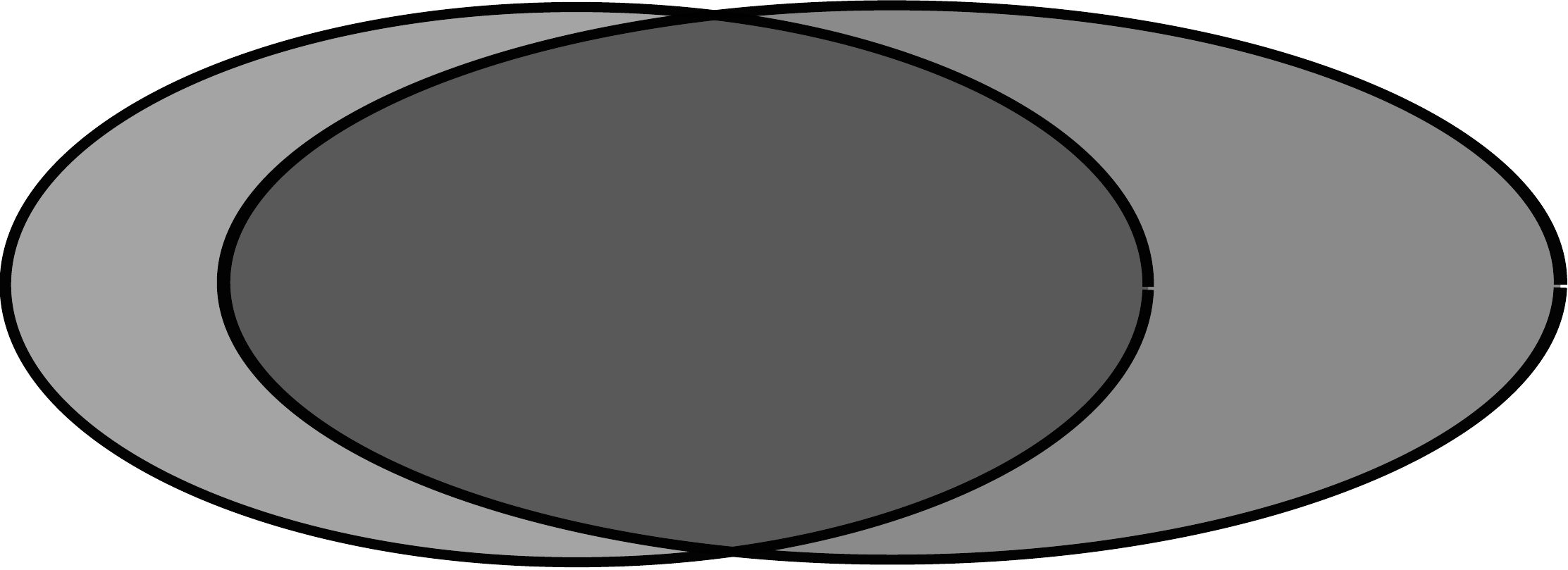}
\caption{Cartoon comparing possible metric changes and the action of $\BWgauge$. A large subset of $\BWgauge$ affects the metric (darkest grey), while a small subgroup does not affect the metric at all, the conformal Killing group (CKG, right).  Finally, moduli changes denote the freedom in metric choice independent of $\BWgauge$ transformations (left).\label{fig:metricdiff}}
\centering
\begin{picture}(-100,-150)
%moduli:
\put(-160,155){\bf \large Metric changes}
\put(-164,107){\textcolor{white}{Moduli}}
\put(-152,95){\textcolor{white}{$\delta t$}}
%metric:
% \put(0,0){$\delta g$}
%CKV
\put(20,101){\textcolor{white}{CKG}}
% \put(15,115){\textcolor{white}{$z\to f(z)$}}
%
\put(-40,155){\large \bf Diff $\times$ Weyl}
\end{picture}
\end{figure}

However, this is unsatisfactory in two ways. From the discussion in the previous chapter,  we know that the gauge group cannot fix the metric completely and that there is a subgroup of $\BWgauge$ leaving the metric invariant. This situation is sketched in figure \ref{fig:metricdiff}. We elaborate on those two issues:

\begin{itemize}
 \item \underline{Moduli.} Not all metrics are equivalent under $\BWgauge$  transformations, see section \ref{BWss:moduli_space}. Therefore a gauge slice (coset representative) parametrized by the metric depends on the moduli space describing the gauge inequivalent metrics:
\begin{equation}
 \hat g (m^\BWmodPara), \qquad \{m^\BWmodPara\} \in \frac{ \BWmetric}{\BWgauge}.
\end{equation}
In more abstract terminology, we choose a representative $\hat g (m^j)$ of the quotient space $ \frac{ \BWmetric}{\BWgauge}$. This (in general more than one-dimensional) space is called the moduli space, with elements parametrized by a moduli \textit{vector} $m^j$.
 \item \underline{Conformal Killing Vectors.} A subset of $\BWgauge$ does not even change the metric at all, see section \ref{BWs:CKVs}. These transformations are built up out of a simultaneous action of a diffeomorphism with the only effect of rescaling the metric and a  Weyl transformation canceling the rescaling diffeomorphism. Such diffeomorphisms are exactly generated by the conformal Killing vectors, so the group of elements of $\BWgauge$ leaving the metric unchanged is isomorphic to the conformal Killing group. (Remember that, in complex coordinates, the CKG consists of the conformal transformations $z \to f(z), \bar z \to \bar f (\bar z)$, that are furthermore bijective in order to be a diffeomorphism, see chapter \ref{BWc:Riemann}. These exactly affect the metric by a rescaling $g = \BWe^{\phi(z,\bar z)}\BWd z \BWd \bar z \to \Lambda g$, with $\Lambda = \left|\frac{\partial f}{\partial z}\right|^2$, which can then be counteracted by a Weyl transformation.)
\end{itemize}
The above discussion hints at how we should choose the gauge slice over which we have to integrate to regularize the string path integral. First, the slice is parametrized by the moduli $m^\BWmodPara$, through the parametrization of gauge inequivalent metrics $\hat g (m^\BWmodPara)$. And second, we need to compensate for the remaining overcounting, since any representative $\hat g (m^\BWmodPara)$ still contains elements which transform into each other under the gauge group.

When the path integral includes enough vertex operators, we can fix this remaining overcounting by restricting the positions of the appropriate number of vertex operator positions. Another way of defining a gauge slice is by restricting the embedding coordinates $X$ modulo $\BWgauge$. We choose the first possibility to avoid unnecessary clutter. (For the few examples we give in subsequent sections where the number of vertex operator insertions is too low for this procedure, we proceed on a case to case basis.)

Say the conformal Killing group had $2k$ real dimensions. Then we can define a gauge slice by fixing $k$ of the two-dimensional vertex coordinates on the worldsheet $\BWworldsheet$:
\begin{equation}
 (\hat g(m), \hat \sigma_1\ldots \hat \sigma_k; X, \{\sigma_i\})\,,.
\end{equation}
where $i$ runs over the $n-k$ unfixed vertex coordinates. This is a good choice of slice, since every element of  $\BWmetric\times\BWspacetime\times\left(\BWworldsheet\right)^n $ can be obtained from it through a combination of a Weyl transformation and a diffeomorphism.
% % \begin{equation}
% %  (g,\sigma_1\ldots\sigma_k;X,\{\sigma'_i\}) =  (f_*(e^{\phi}g),f(\sigma_1)\ldots f(\sigma^2):f_*X,\{f(\sigma_i)\})\,,
% % \end{equation}
% % for some diffeomorphism $f$ and Weyl transformation $\phi$.

%
% \subsubsection{Gauge slice using the embedding coordinates}
% Regardless of the number of conformal Killing vectors, we can always define a gauge slice using the embedding coordinates $X$. Since worldsheet diffeomorphisms act on these coordinates:
% \begin{equation}
%  \delta_\xi X = \xi^a \partial_a X(\sigma),
% \end{equation}
% a good slice candidate would be based on the metric choice $\hat g(t)$, and restricting the embedding coordinates $X$. It is sufficient that we choose $\hat X\in \BWspacetime_t$, such that $\BWspacetime_t$ is orthogonal to the action of the CKG.
%
% We propose the slice:
% \begin{equation}
%  (\hat g(t),\hat X ;\{\sigma_i\}).
% \end{equation}

\subsection{Restricting the integral to the gauge slice}

We want to write the path integral measure in terms of gauge parameters and moduli describing the gauge slice. This involves performing a change of variables from the original variables in the path integral i.e.\ the metric $g$ and $k$ fixed vertex positions $\sigma_i$ to new variables. Letting $\cal F$ denote the set of fixed vertex positions, we have:
\begin{equation}
 \BWD g \prod_{i\in {\cal F}} \BWd \sigma_i  \to \BWd^{\BWnmod} m^\BWmodPara ~\BWD {\rm (gauge)}\,,\label{eq:ChangeOfVars}
\end{equation}
where the moduli are denoted as an $\BWNmod$-dimensional vector in moduli space ($\BWnmod= 1 \ldots \BWNmod$) and the gauge parameters are assumed to have a good measure. This change of variables involves a Jacobian determinant. Finding its exact form is what we are concerned with now.

% \BWcomment{Lose this or put in other place} First, we discuss how any infinitesimal variation of the metric and the fixed vertex operator positions can be decomposed in a variation of moduli and a gauge transformation. In order to easily obtain the Jacobian, we will express this small variation in terms of an \emph{orthogonal} decomposition of metric variations. Then we proceed with writing down the Jacobian and the correct expression for the regulaized path integral. In some cases, there will not be enough vertex operators to fix the gauge invariance related to the conformal Killing symmetry group. We comment on this as well. We end the section with some remarks about the measure. Note that the method we choose to follow is straightforward, but notation can get cluttered. Therefore we choose not to go into too much detail. Another, more elegant but less intuitive, way of obtaining the gauge fixed path integral, is given in section \ref{ss:FPprocedure}, by introducing ghost fields in the path integral.

\subsubsection{Infinitesimal variation of the gauge slice}
When computing the determinant related to the change of variables \eqref{eq:ChangeOfVars}, it is easier to keep track only of infinitesimal gauge transformations. This means we are actually performing the  change of variables \eqref{eq:ChangeOfVars} on the  tangent space of the space of configurations $ \BWmetric\times\BWspacetime\times\left(\BWworldsheet\right)^n$. This is a sensible thing to do, since the Jacobian for coordinate transformations on a space $M$ is equal to the Jacobian on the tangent space $TM$.\footnote{Consider a space $M$ with coordinates $x^\mu$ in some patch and $y^\mu$ in some other patch. In the overlap of these patches, around a point $p$, say, we have $x(y)$ and we can calculate the Jacobian as $J = \left|\frac{\partial x^\mu }{\partial y^\nu}\right|_p$. Now consider an element $V\in T_pM$ on the overlap op those patches. Then we have $V = V^\mu_{(x)} \partial_x = V^\mu_{(y)} \partial_y$, where $\{V_{(x)}\}$ and $\{V_{(y)}\}$ are fibre coordinates on $T_pM$. In particular, $\det \left|\frac{\partial V^\mu_{(x)}}{\partial V^\nu_{(y)}}\right| = \det \left|\frac{\partial x^\mu }{\partial y^\nu}\right| = J$.}

We concentrate on the part of the measure including the vertex operator positions we want to fix and consider infinitesimal transformations only. It is convenient to split the measure on the gauge group in terms of (infinitesimal) Weyl transformations, denoted $\delta \phi$, and diffeomorphisms, denoted as $\delta \sigma^a$. The Jacobian determinant $J$ related to the change of variables \eqref{eq:ChangeOfVars} is then obtained from the transformation:
\begin{equation}
  \BWD \delta g \prod_{i\in f} \BWd \sigma_i = J \BWd^{\BWnmod} m \BWD \delta \phi \BWD \delta \sigma\label{eq:ChangeOfVarsJacob}
\end{equation}
We assume there is a well-defined measure on the space of Weyl transformations $\delta \phi$ and diffeomorphisms $\delta \sigma$.
% \footnote{\BWcomment{Careful in writing diffeos as vector fields}} 
The Jacobian $J$ is most easily computed after writing the metric deformations as an orthogonal decomposition\footnote{Note that we have not specified how this orthogonal decomposition should be understood. I.e., we have not defined a metric on the space of metric variations nor on the space of gauge transformations. A rigourous treatment of this problem can be found in Appendix \ref{BWapp:Gaussians}}:
\begin{equation}
  \delta g =\{Weyl\}\oplus\{Diff\}\oplus\{moduli\}\label{eq:OrthogDecompMetric}
\end{equation}
since then the Jacobian will nicely factorize.

Consider the action of an infinitesimal diffeomorphism generated by a vector field $\delta \sigma$ on the metric and vertex coordinates:
\begin{equation}
 \delta_D g_{ab} = (\nabla_a \delta \sigma_b)_{\hat g} + (\nabla_b \delta \sigma_a)_{\hat g}\,, \qquad \qquad \delta_D \sigma_i^a = \delta \sigma^a(\hat\sigma_i)
\label{BWeq:infindiffeo}
\end{equation}
Keep in mind that these variations depend on the choice of metric $\hat g$ and fixed vertex coordinates $\hat \sigma_i$. In the following, we will omit this explicit dependence, but it is always implied. An infinitesimal Weyl transformation $\delta \phi$ acts on the metric as:
\begin{equation}
 \delta_W g = \delta \phi \hat g\,.
\end{equation}
The most general metric variation can then be written as a part generated by these gauge transformations, and physical variations due to a change in the metric moduli:
\begin{equation}
\begin{array}{lccc}
 %&\text{Gauge change}&  &\text{Physical change}\nonumber\\
   &{\rm Physical~ change}&&{\rm Gauge~ change}\\
 \delta g_{ab}=&\overbrace{~\delta m^\BWmodPara \partial_\BWmodPara \hat g_{ab}~}& +&\overbrace{\nabla_a \delta \sigma_b+\nabla_b \delta \sigma_a +\delta\phi \hat g_{ab}~}\,.
\end{array}
\label{BWeq:metricchanges}
\end{equation}
Now we could in principle change variables from the metric $g$ to gauge transformations $\delta \phi, \delta v$. But the decomposition \eqref{BWeq:metricchanges} is not orthogonal, it is not of the form \eqref{eq:OrthogDecompMetric}. Before we attempt to (sketchily) compute the Jacobian, we concentrate on finding such an orthogonal decomposition.

\subsubsection{Writing down the Jacobian\label{BWss:WritinDownJacob}}

Assuming for now that we can apply the rules for ordinary functions to functionals, we apply our usual knowledge of Jacobians to the path integral measure. (We comment on possible issues with path integral measures in section \ref{ss:RemarksMeasure} and Appendix \ref{BWapp:Gaussians}.) The Jacobian \eqref{eq:ChangeOfVarsJacob} can then be obtained from:
\begin{equation}
 J = \left|\frac{\partial (\delta g_{ab},\delta \hat \sigma_i)}{\partial (\delta \sigma, \delta \phi, \delta m)}\right|\,.\label{BWeq:Jacobian}
\end{equation}
We show how to compute this Jacobian, by highlighting two issues.

\begin{itemize}
\item \underline{Orthogonal metric variation} We perform an orthogonal decomposition as in \eqref{eq:OrthogDecompMetric} of the metric deformation in three steps. First, we isolate the pure trace part of the deformations. This is the infinitesimal change which is proportional to the metric itself, i.e.\ the part of the transformation acting as a scaling. It is important to know that not only Weyl transformations act in this way, but also a subclass of diffeomorphisms (the CKG) and certain moduli transformations rescale the metric. Secondly, we write the other diffeomorphisms that do not act as a rescaling (i.e.\ the diffeomorphisms which are not conformal transformations). And finally, we have the changes of the metric due to the moduli which are orthogonal to those two types of variations.

This gives us the following expression for the metric deformations:
\begin{equation}
 \delta g_{ab}=\delta\bar \phi g_{ab}+ (P_1\delta \sigma)_{ab} +\delta m^\BWmodPara \partial_\BWmodPara \hat g_{ab}|_{pr}.\label{BWeq:deltag_ortho}
\end{equation}
The parameter $\delta \bar \phi$ denotes Weyl transformations \'and the trace parts of the other transformations (i.e.\ those generated by the conformal Killing group and the trace part of the variations due to moduli). The operator $P_1$ is the unique operator mapping vectors to symmetric traceless tensors\footnote{One could generalize this operator to operators $P_n$ that take symmetric $n$-tensors to symmetric, traceles $(n+1)$-tensors. Since we only look to an operator on vectors, we restrict to the subscript ``1'' in $P_1$.}, such that $P_1\delta v$ is the traceless part of the effect of an infinitesimal diffeomorphism on the metric. This operator is defined as
\begin{equation}
 P_1\delta \sigma=\nabla_a \delta \sigma_b+\nabla_b \delta \sigma_a - \hat g_{ab}\nabla_c \delta \sigma^c\,,
\end{equation}
the trace part is clearly subtracted. Finally, the subscript ``$pr$'' on the moduli variations means we project those variations to the traceless subspace of metric variations, orthogonal to the diffeomorphisms $P_1\delta \sigma$.\footnote{Note: we assume we have a clear notion of orthogonality to perform this procedure. More on this in appendix \ref{BWapp:Gaussians}.}

We can classify these three spaces of variations in terms of the operator $P_1$. Note that the physical variations in the above decomposition are restricted to the orthogonal complement of the range of  $P_1$. Using the isomomorphism $({\rm Range}\, P_1)^\perp\cong {\rm Ker}\, P_1^\dagger$, we can then decompose the metric variations \eqref{BWeq:deltag_ortho} symbolically as:
\begin{align}
  \delta g&=\{Weyl\}\oplus\{Diff\}\oplus\{moduli\} \nonumber \\
&=\{Weyl\}\oplus\{{\rm Range}\, P_1\}\oplus\{{\rm Ker}\, P_1^{\dagger}\}.\label{eq:OrthogDecompMetric2}
\end{align}
The operator $P_1^{\dagger}$ is the adjoint of $P_1$, mapping traceless symmetric tensors to vectors. The relation between the differential operator $P_1$ and the metric deformations is summarized in table \ref{BWtab:P_1}.

\begin{table}[h!]
\centering{
 \begin{tabular}{|ccc|}
\hline
Math. space&$\cong$& Physical space\\
\hline
\hline
${\rm Range } \, P_1$&$\cong$& Diffeo effects\\
&& on metric\\
\hline
${\rm Ker } \, P_1$&$\cong$& CKG\\
\hline
${\rm Ker} \, P_1^{\dagger}$&$\cong$&Moduli variations\\
&&of metric\\
\hline
 \end{tabular}
}
\caption{Spaces associated to the differential operator $P_1$ and their interpretation (to be understood as isomorphisms, not necessarily identifications). Note that the conformal Killing group (CKG) is isomorphic to the kernel of the operator $P_1$, because exactly the kernel describes diffeomorphisms that act on the metric as a rescaling.\label{BWtab:P_1}}
\end{table}

\item \underline{Variation of vertex operator positions}
Remember from \eqref{BWeq:infindiffeo} that only diffeomorphisms affect the positions of vertex operators. 
% We repeat the form of these transformations in terms of a vector field $\delta \sigma$ on the worldsheet:
% \begin{equation}
% \end{equation}
In order to find the Jacobian in the path integral, it is sufficient to restrict attention to the effect of the CKG on the positions $\hat \sigma_i$. Denoting an infinitesimal element of the conformal Killing group as a real vector $\delta a^r, r = 1\ldots 2k$, we write its effect on the vertex operator positions as:
\begin{equation}
 \delta \hat\sigma_i^a = \delta a^ r C_r^ a (\hat \sigma_j)\,.\label{BWeq:deltaCKV}
\end{equation}
\end{itemize}

Now we are ready to sketch the calculation of the Jacobian. Using the orthogonal metric decomposition \eqref{BWeq:deltag_ortho} and the action of diffeomorphisms (and especially conformal Killing vectors) on the vertex operator positions, we can write for the Jacobian of the transformation \eqref{BWeq:deltag_ortho}:
\begin{align}
 J &= \left|\frac{\partial (\delta g_{ab},\delta \hat \sigma_i)}{\partial (\delta \phi, \delta \sigma, \delta t)}\right|=\det\left.\begin{array}{cccc|c}
  \multirow{3}{*}{$\left(\rule[1cm]{0pt}{0pt}\right.$}&1 &0&0&0\\ &* &P_1&0&0\\&0&0&\partial_i \hat g|_{pr}&0\\&0&1&0&C_s^a(\hat\sigma_i)
 \end{array}\right)
\end{align}
Assuming we can use ordinary matrix calculus, the Jacobian reduces to:
\begin{equation}
 J = \det(P_1)\cdot \det (\partial_\BWmodPara \hat g|_{pr}) \cdot \det(C_s^a(\hat\sigma_i))\,.\label{BWeq:Jacob_ill}
\end{equation}

The first factor contains the determinant of the differential operator $P_1$. Since this operator takes vectors to symmetric 2-tensors, the Jacobian factor $\det P_1$ is not well defined. In order to do so, we write it as:
\begin{equation}
 \det(P_1) \to \sqrt{\det{}' (P_1^{\dagger} P_1)}\,.
\end{equation}
In appendix \ref{BWapp:Gaussians} we give a better derivation of this result. Furthermore, we  put a prime on this determinant, to indicate we do not include the zero modes of $P_1^{\dagger} P_1$, which correspond to the zero modes of $P_1$ itself. These zero modes would make the determinant trivial. The reason we should not include them, is that they exactly correspond to the elements of the CKG and do not change the metric, see table \ref{BWtab:P_1}. The effects of the CKG sit in the last term in the Jacobian. This term should be understood as follows. The matrix $C_s^a(\hat\sigma_i)$ has its rows labeled by $s$, the columns by $a,i$. This is indeed a square $(2k\times 2k)$-matrix, with $2k$ the  (real) dimension of the  conformal Killing group. Finally, the second term in \ref{BWeq:Jacob_ill}, describing the part of the Jacobian due to the moduli, can be written down more precisely. To obtain the projection of $\partial_i \hat g$ to ${\rm Range }\, P_1\simeq{\rm Ker}\, P_1^{\dagger}$, choose a basis $\psi_A$ of ${\rm Range }\, P_1$. Moreover, assume we have a notion of scalar product on ${\rm Range }\, P_1$, denote it $(\cdot,\cdot)$. The projection to this space is then obtained as:
\begin{equation}
 \partial_\BWmodPara \hat g_{ab}|_{pr} = \sum_\alpha\frac{(\psi_A,\partial_\BWmodPara \hat g_{ab})}{(\psi_A,\psi_A)}\,.
\end{equation}

This leads us to the final expression for the Jacobian:
\begin{equation}
J(m^\BWmodPara)= \det {}' (P_1^{\dagger} P_1)\cdot\sum_\alpha\frac{(\psi_A,\partial_i \hat g_{ab})}{(\psi_A,\psi_A)}\cdot \det(C_s^a(\hat\sigma_i))\,. \label{BWeq:Jacob_well}
\end{equation}
To obtain the result \eqref{BWeq:Jacob_well}, we have been very quick and left some things ill-defined. The derivation above relies heavily on the concept of orthogonal decompositions, but we have not yet defined a notion of scalar product on the spaces in question! Moreover, for the computation of the several Jacobians, we were all too eager to apply rules from ordinary differentiation to path integrals. For a more careful analysis of the above result, we refer to appendix \ref{BWapp:Gaussians}. In section \ref{BWc:Amplitudes}, we continue with writing down the gauge fixed Polyakov path integral for the (closed) bosonic string, based on the above calculation of the Jacobian determinant.

\subsection{Final result for the Polyakov path integral}
We can write down the S-matrix path integral for the change of variables \eqref{eq:ChangeOfVarsJacob}.
\begin{align}
 S_{j_1\ldots j_n} (k_1\ldots k_n) =\label{BWeq:GaugeFixed_PI_EndResult1}
\sum_{\rm g} g_s^{2g-2}&\int \frac{\BWD \delta \phi \BWD \delta \sigma}{\cal N}\int \BWd^{\BWnmod} m J(m^\BWmodPara)
\\
&\int\BWD X \, \exp({-S_P(X)}) \nonumber
%\prod_{l\in {\cal F}} \sqrt{g(\hat\sigma_l)}{\cal V}_{j_l}(k_l,\hat\sigma_l)
\left(\prod_{i\slashed \in {\cal F}} \int \BWd^2 \sigma_i\right) \sqrt{g(\sigma_i)} \prod_i {\cal V}_{j_i}(k_i,\sigma_i)\,.
 \end{align}
The integration over $\delta\phi$ and $\delta\sigma$ returns the volume of the gauge group of Weyl transformations and diffeomorphisms, since the integrand is explicitly independent of these $\delta\phi$ and $\delta\sigma$ (we have `gauge fixed' the path integral.) Therefore, this factor cancels with the normalization constant $\cal N$. With the expression for the Jacobian \eqref{BWeq:Jacob_well}, the regularized expression for the string S-matrix is then:
\begin{align}
 S_{j_1\ldots j_n} (k_1\ldots k_n) = \sum_{\rm g} g_s^{2g-2} &  \int\BWd^{\BWnmod} m  \det {}' (P_1^{\dagger} P_1)\,\sum_\alpha\frac{(\psi_\alpha,\partial_i \hat g_{ab})}{(\psi_\alpha,\psi_\alpha)}\, \det(C_s^a(\hat\sigma_j))\label{BWeq:GaugeFixed_PI_EndResult}\\\
&\int \BWD X \exp( {-S_P(X)}) \left(\prod_{i\slashed \in {\cal F}} \int \BWd^2 \sigma_i\right) \sqrt{g(\sigma_i)} \prod_i {\cal V}_{j_i}(k_i,\sigma_i)\,.
% \prod_{l\in {\cal F}} \sqrt{g(\hat\sigma_l)}{\cal V}_{j_l}(k_l,\hat\sigma_l)\prod_{i\in {\cal F}} \int \BWd^2 \sigma_i \sqrt{g(\sigma_i)}{\cal V}_{j_i}(k_i,\sigma_i)\,.
\nonumber
 \end{align}
This is the final result we use in the remainder of these lectures. Remember that the calculation above was only for a gauge slice where we assumed there wer at least as many vertex operator insertions as there were elements in the CKG. Below we comment on the path to follow when there are not enough vertex operators present.

% This allowed us to fix the gauge freedom of the conformal Killing symmetry.

% \subsubsection{Not ``enough'' vertex operators\label{BWss:NotEnoughVertexOps}}
% When the path integral contains less vertex operator insertions than there are conformal Killing vectors on the worldsheet, we cannot use the vertex operator positions to define a gauge slice. In this sense, the gauge symmetry of the CKG will not be fixed if we follow the procedure outlined in section \ref{}.
% 
% It is however possible to define a good gauge slice. In particular, we can take a gauge slicing by restricting to a representative metric $\hat g$ as before and by restricting the embeddings to representatives $\hat X$, by defining equivalence classes under the induced action of the diffeomorphism group (of the world sheet) on these embeddings. One can show (see \cite{} for a thorough discussion and \cite{} for shorter overviews) that this gives rise to the following path integral, after the Jacobian of the transformation has been taken into account:
% 
% The integral runs over the entire space of embeddings here, not just the slice $\hat X$. We also see that there is still a formally infinite normalization factor present, related to the CKG. Unless there is a factor coming from the integration to compensate this factor, the path integral in this equation will turn out to be zero, since the integration itself contains no more overcounting of physically identical configurations. We come back to this issue when discussing string amplitudes with a small number of vertex operators in section \ref{}.

\section{Problems and other methods}\label{ss:RemarksMeasure}

\subsection{The Weyl anomaly}

Defining a good measure in the path integral can be seen as the basis of a good quantum theory. A priori, a classical theory, defined through its Lagrangian, can correspond to a lot of candidate-quantum theories. Heuristically this can be understood as that the quantum Lagrangian should reduce to the classical one in the limit $\hbar \to 0$, thereby allowing a plethora of possible quantum Lagrangians with the same classical behaviour. The important thing to know is that there is nothing unique or special about a quantum version of a certain system, or, as pretold by Shakespeare \cite{Shakespeare}: ``What's in a name?''. The choice of quantum theory corresponding to a classical one is entirely ours. Most often, people choose a theory with the same symmetries as the original Lagrangian, as this is a natural guess.

In the path integral formalism, one is tempted to think the quantum theory is uniquely defined, because one integrates over fields weighted with the exponential of the action of the classical theory. However, this is a misconception. The path integral measure of the fields hides the quantum nature of the path integral. This measure depends on $\hbar$ and it should be specified what form it takes exactly. For example, it can be shown that the path integral in quantum mechanics obtained either by slicing the path in equal time distances, or by taking a non-uniform parametrization of the time steps, gives a different result. One can write both versions of the path integral as an integral with the same measure, but then the action in both cases is only the same up to corrections of order $\hbar$. This shows that choosing a certain action in the path integral, does not determine the theory completely: one should in principle give a precise prescription for the measure.

In the discussion of the previous sections, we left the issue of finding a good measure untreated. We go into this now. In order to define a quantum theory of excitations on a string, we use  symmetry as a guideline. From a spacetime point of view, we do not wish to distinguish between worldsheets which are diffeomorphic, since this does not give us a difference in the spacetime spectrum. Therefore, we demand our quantum theory to be diffeomorphism invariant. Also for the Weyl invariance, we play this game (even though it may be less obvious what this symmetry is from the target space point of view). This gives us the following requirement: we want a quantum theory invariant under worldsheet diffeomorphisms and Weyl transformations. In the path integral approach, we can translate this requirement to demanding that the path integral is invariant under these symmetries (as the Polyakov action is already invariant under these symmetries). We have started from a classical action which was invariant under the worldsheet symmetries and use this action in the definition of the path integral. In order to have an invariant quantum theory, we  have to find a Diff and Weyl invariant measure in the path integral. It turns out to be possible to define a worldsheet diffeomorphism \'or Weyl invariant path integral measure in general, but this can not be said for both symmetries. One says the quantum theory has an anomaly: it does not respect the same symmetries as the classical theory. Usually, people like to keep the diffeomorphism invariance manifest, thus giving rise to an anomaly of the Weyl symmetry. The Weyl anomaly disappears when one takes the dimensionality of spacetime to be $D=26$ for the bosonic string, as Polyakov showed \cite{Polyakov:1981rd}.  This leaves us with the observation: if we want (bosonic) string theory to describe physics which does not depend on the local description of the worldsheet, we need to consider 26 spacetime dimensions. From now on, we put the dimension of spacetime to $D=26$.

\subsection{The path integral measure}
Before, we were cavalier about the measure in two ways:
\begin{enumerate}
 \item The measures $\BWD g, \BWD X$ were not defined. This holds as well for the measure on the gauge group $\BWD \delta v, \BWD \delta \phi$.\footnote{Note that $\BWd^\kappa t$,m the measure on the moduli space, is well-defined. Since it is a measure on $\mathbb{R}^{\BWnmod}$, we can take it to be the ordinary Lebesgue-measure on this space}
 \item We assumed orthogonality of the metric variations, see \eqref{eq:OrthogDecompMetric2}. But no notion of (metric) distance on the space of integration for the gauge parameters, nor for the fields, was given.
\end{enumerate}

We clearly need a good clean definition of the path integral measure to cope with these difficulties. Sadly enough, a natural generalization of the Lebesgue measure to functional integral is not at hand. A way to define the measure in functional integration does however exist, but it can only be done (semi-)rigourously for Gaussian integrals. Luckily for us, the integrals we look at are exactly of this form. Consider \emph{defining} a Gaussian integration, by saying that, for functional integration over a collection of (bosonic) fields $\phi_\alpha$, we have:
\begin{equation}
 \sqrt{\det A} = \int \prod_{\alpha }\BWD \phi_\alpha \BWe^{(\phi, A \phi )}\label{BWeq:GaussianFieldInt}\,,
\end{equation}
where $(\cdot,\cdot)$ denotes some scalar product available to us on the space of fields $\phi_\alpha$. This is the straightforward generalization of the corresponding Gaussian integral for ordinary integration. In general, there is a general notion of such a scalar product, see the appendix. When taking this definition, and assuming that the linearity rules of ordinary differentiation apply equally to functional integration, we have a well-defined measure and can perform Gaussian integrals without any problems. We expand on this in appendix \ref{BWapp:Gaussians}.

Please note that the way of defining the measure through an equation as \eqref{BWeq:GaussianFieldInt}, as is done in appendix \ref{BWapp:Gaussians}, clearly shows that the invariance of the measure under gauge symmetries is determined by the invariance of the scalar product on the space of fields.

\subsection{Faddeev-Popov procedure \label{ss:FPprocedure}}

Up until now, we have concentrated on a seemingly straightforward way of dealing with the overcounting in the path integral. However, several problems have appeared. First, there are subtleties with defining a measure as discussed above. Second, in order to develop the proposed method carefully, a lot of calculations are involved. Luckily, a formalism to deal with the overcounting in path integrals has been developed, which is technically less involved. This procedure of gauge fixing goes by the name of Faddeev-Popov procedure, after the physicists who first introduced it \cite{Faddeev:1967fc} in constructing the path integral for the Yang-Mills field.

The Faddeev-Popov trick consists of introducing extra fields in the path integral which interact with the original fields in the physical problem.  These extra fields obey the wrong spin-statistics (i.e.\ bosonic symmetries generate fermionic fields  and vice versa). The nice feature of these fields is that by including them in the calculation of amplitudes, the correct result (without the overcounting) rolls out automatically: one obtains a calculationally attractive way of obtaining the gauge-fixing of gauge symmetries, by adding extra Feynman diagrams. Luckily, the Faddeev-Popov fields only appear in Feynman loops, but never in external legs of diagrams. They are in this sense ``unphysical'' and should be seen as giving corrections to the naive amplitudes of physical processes, thereby returning the desired result. Because of their peculiar properties, these fields are often referred to as ``(Faddeev-Popov) ghosts''. The computational simplifications brought to us by the Faddeev-Popov ghosts -- replacing calculations of determinants with extra fields in Feynman loops -- have made the Faddeev-Popov procedure the most popular way of dealing with overcounting in gauge theories and gave the path integral method a real boost. Even though details can remain somewhat obscure, calculations return correct results: for most physicist enough reason to widely apply the procedure.

Although the inclusion of ghosts makes the calculation of the Jacobian determinant we introduced before easy, it obscures some of the physics. Yes, we can easily compute the gauge-fixed amplitudes of our theory by letting the ghosts run in loops and taking them into account. But what is their interpretation in physical terms? In our opinion, they should only be seen as a mathematical trick to deal with overcounting problems in quantum gauge theories. Therefore, we keep to the die-hard calculations discussed above. They may appear cumbersome, but we keep a closer eye on what is going on. Note that the overcounting can be tackled by purely diagrammatical techniques as well, without reference to ghost fields or even path integrals. Especially Veltman was and is a firm supporter of this technique, see for instance \cite{'tHooft:1973pz}. The Faddeev-Popov procedure becomes especially handy when considering the quantization of the superstring. In that setting, the straightforward way of writing down a gauge-fixed path integral following the methods of the previous section breaks down and the Faddeev-Popov procedure gives a handle on the computation. Since we only look at the bosonic string, we do not go into the details of the Faddeev-Popov procedure.  The reader interested in the basic ideas of the procedure, is invited to consult appendix \ref{BWapp:FaddeevPopov}. We also refer to the excellent texts \cite{Green:1987sp,Polchinski:1998rq,Kiritsis:2007zz}, where a good account of the Faddeev-Popov procedure can be found.

\chapter{Amplitudes\label{BWc:Amplitudes}}

\emph{In the previous chapter, we have learned that gauge-fixing the Polyakov path integral removes infinite overcounting and leaves us with the path integral \eqref{BWeq:GaugeFixed_PI_EndResult} for the string S-matrix. The integrals for the amplitudes in the S-matrix have three main features. First, they all contain an integral over the embedding coordinates $X^\mu$.  Second, there is an additional integral over the metric moduli. And finally, the path integrals may contain Jacobian determinant factors. These can arise from residual conformal Killing symmetry of the Riemann surface and from the moduli.}

\emph{We perform the $X$ integral for a general Riemann surface in section \ref{BWs:AmplitudesGenFunct}. The issues related to the CKG and moduli, we treat on a case by case basis. We single out the tree level amplitudes  leading to Riemann surfaces of genus zero, and the one-loop amplitudes, by considering genus one surfaces. Riemann surfaces of higher genus are not considered. They contribute to the perturbative string expansion in higher orders of the string coupling constant, which will be briefly discussed in the next chapter. Section \ref{BWs:OSScatteringAmplitudes} goes into a discussion of the (on-shell) amplitudes on the Riemann sphere, by increasing number of vertex operator insertions, while section \ref{BWs:PartitionFunction} treats the notion of a string partition function, for both the Riemann sphere and the torus.}

\section{Generating functional approach to \texorpdfstring{$X$}-integral}\label{BWs:AmplitudesGenFunct}

We follow the setup of Polchinski's first book \cite{Polchinski:1998rq}, chapter 6. Other good references are Nakahara \cite{Nakahara:2003nw} and D'Hoker and Phong \cite{D'Hoker:1988ta}. Since we focus on vertex operator insertions, it proves useful to study the generating functional
\begin{equation}
 Z[J] = \langle \exp ({\BWi \int \BWd^2 \sigma J_\mu X^\mu}) \rangle\,,\label{BWeq:ZJ}
\end{equation}
in terms of a spacetime vector of sources $J^\mu (\sigma)$. Consider for now only the  expectation value over the embedding coordinates, weighted with the Poyakov action, which we can rewrite using partial integration as
\begin{equation}
 S_P = \frac1{4\pi \alpha'}\int \BWd^2\sigma X_\mu\nabla^2 X^\mu\,.
\end{equation}
up to boundary terms, where the Laplacian is given by\footnote{Remember we use a Euclidean signature for the worldsheet, which is obtained after a Wick rotation. See section \ref{BWs:WSasRS}.}
\begin{equation}
 \nabla^2 = \frac 1{\sqrt{g}}\partial_a \sqrt{g}g^{ab}\partial_b\,.
\end{equation}

Before digging into the calculation, we can guess what the expression for $Z[J]$ should look like. From our knowledge of Gaussian integrals, see for instance appendix \ref{BWapp:Gaussians}, we expect the following result:
\begin{equation}
 Z[J] = Z[0] \exp\left({\int\BWd^\sigma \BWd^2\sigma' J(\sigma) G(\sigma,\sigma') J(\sigma')}\right)\,,
\end{equation}
where $G(\sigma,\sigma')$ is some Green's function corresponding to the inverse of the Laplacian. The vacuum contribution $Z[0]$ corresponds to  the determinant of the Laplacian. Although the final result has (roughly) the above form, care has to be taken in constructing the Green's function due to zero modes of the Laplacian possible on (compact) Riemann surfaces. The reader not interested in the mathematical details, may skip the calculations and go immediately to eq. \eqref{BWeq:ZJfinal}.

The path integrals in the generating functional are Gaussian, and can be performed by completing the squares. We do this by first expanding the embedding coordinates and sources in terms of eigenfunctions of the Laplacian  as follows:
\begin{align}
 X^\mu &= \sum_i x^\mu_i \psi_i \,,\qquad J^\mu = \sum_i J^\mu_i \psi
 _i\,,\nonumber\\
 \nabla^2 \psi_i &= - \lambda_i^2 \psi_i \quad{\rm(no~sum)}\,,
\end{align}
The eigenfunctions $\psi_i$ are chosen to be orthogonal w.r.t. the natural inner product:
\begin{equation}
 \int \BWd^2 \sigma \sqrt{g}\, \psi_i\cdot  \psi_j = \delta_{i,j}\,.
\end{equation}
Note that the zero mode is constant on a compact Riemann surface: since $\nabla^2 \psi_0 = 0$, $\psi_0$ reaches its extremum in the interior of the Riemann surface, and by applying the maximum principle it must be constant.

We see that the generating functional becomes:
\begin{align}
 Z[J] &= \int \BWD X \exp\left({\frac1{4\pi \alpha'}\int \BWd^2\sigma \left(X_\mu \nabla^2 X^\mu + \BWi J_\mu X^\mu\right)}\right)\nonumber\\
&=\prod_{i,\mu}\int \BWD X^\mu \exp\left({-\frac{\lambda_i^2}{4\pi \alpha'} x_i\cdot x_i + \BWi J_i\cdot x_i}\right)
\end{align}
The integral over the constant mode produces a delta function for the corresponding source $J_0 = \int \BWd \sigma \sqrt{g }\psi_0 J$.\footnote{For the vacuum amplitude ($J=0$), the term $J_0\cdot X_0$ is absent and consequently the delta function is replaced by the factor
\begin{equation}
V^{-1} \left(\int \BWd^2 \sigma \sqrt{g}\right)^{-d/2}\,,
\end{equation}
where $V$ is the spacetime volume. The reader is invited to perform this calculation in exercise \ref{BWexerc:volume}.} For convenience, we keep the number ged of spacetime dimensions unfixed. Completing the squares and performing the Gaussian integral as in \eqref{BWeq:app-FP_Gaussian_integral}, we find
\begin{align}
 Z[J] &=\BWi(2\pi)^d \delta^d(J_0)\prod_{i\neq0} \left(\frac{2\pi^2\alpha'}{\lambda^2_i}\right)^{d/2}\exp(-{\frac{\alpha' \pi}{\lambda_i^2} J_i\cdot J_i})
\nonumber\\
&=\BWi (2\pi)^d \delta^d(J_0)\det{}'\left(\frac{\nabla^2}{2\pi^2\alpha'}\right)^{-d/2}\exp(-\frac12{\int\BWd^\sigma \BWd^2\sigma' J(\sigma) G'(\sigma,\sigma') J(\sigma')})\,,\label{BWeq:ZJfinal}
\end{align}
where the prime on the determinant means omitting of the zero modes and the Green's function is given by:
\begin{equation}
 G'(\sigma,\sigma')=\sum_{i\neq 0} \frac{2 \pi \alpha'}{\lambda_i^2} \psi_i(\sigma)\psi_i(\sigma')\,.
\end{equation}
It is the solution of the differential equation:
\begin{align}
 -\frac1{2\pi\alpha'}\nabla^2 G'(\sigma,\sigma') &= \sum_{i\neq0} \psi_i(\sigma) \psi_i(\sigma')\nonumber\\
&= \frac1{\sqrt{g}} \delta^2(\sigma-\sigma') - \psi_0^2\,.\label{BWeq:GreenFunction}
\end{align}
Only in the strict sense of the equation \eqref{BWeq:GreenFunction} is  the Green's function $G'(\sigma,\sigma')$ to be seen as the inverse of the Laplacian. Note also the appearance of the zero mode $\psi_0$ on the right-hand side.
From previous experience, you could expect only a delta function source and the constant term may seem strange.  In more common situations (e.g. electrodynamics), other CFTs) one considers the Laplacian and the corresponding Green's function on a non-compact space. In other words: the zero-mode does not contribute. However, for compact spaces, like  the Riemann surfaces we are describing in these lectures, this no longer holds. Physically, the field lines for the source have no place to go and we can view the constant $\psi_0^2$-term as a neutralizing background contribution.

It is instructive to take a closer look at the different factors making up the generating functional $Z[J]$. First, there is the $\delta$-function, which translates in momentum conservation, see below. Then we have a determinant factor. After a suitable renormalization, it can be dealt with. The result can only depend on the worldsheet coordinates through the metric moduli. We explain this below for the torus.  When we consider tree-level amplitudes (operator insertions on the sphere), there are no moduli and the resulting factor is just a constant, there is no dependence on the worldsheet coordinates $z,\bar z$.  Finally, the exponential $\exp(-\frac12\int J G J)$ is specific to each amplitude (i.e. it depends on the specific form of the sources $J$). This will always be a certain holomorphic function of the complex coordinates $z,\bar z$ on the worldsheet.

\begin{BWexerc}\label{BWexerc:volume}
{\bf See D'Hoker \& Phong (1988).}
 Perform the path integral \eqref{BWeq:ZJ} for the vacuum $Z[0]$ using the Jacobian methods of chapter the previous chapter. You can first perform the minimal split of the $X$-variable into a (constant) zero mode $X_0$ and all modes orthogonal to it: $X = X_0 + X'$ and proceed from there. In particular, use this calculation to find the normalization of the vacuum path integral:
\begin{equation}
 Z[0] = \frac 1V \left(\frac{\det{}'\nabla^2}{\int \BWd^2 \sigma \sqrt{g}}\right)^{-d/2}\,, \label{BWeq:Xvacuum}
\end{equation}
where $V = \int \BWD X_0$ is the spacetime volume.
\end{BWexerc}

We conclude that in order to evaluate the string S-matrix, with certain vertex operator insertions, it suffices to apply the result \eqref{BWeq:ZJfinal} for the generating functional $Z[J]$ to the case at hand. In particular, this requires finding the approriate Green's function on the Riemann surface. We now explain this straightforward calculation of the path integral to the Riemann sphere and other surfaces with tachyon operator insertions.

\subsection{The sphere}
Consider the Riemann sphere with complex coordinates as in example \ref{BWex:RiemannSphere} and equipped with the metric (see example \ref{BWex:RiemannSphereMetric})
\begin{equation}
 \BWd s^2 = \frac{4\BWd z\BWd \bar z}{(1+z\bar z)^2}
\end{equation}
The solution to the differential equation \eqref{BWeq:GreenFunction} is then given by:\footnote{Use the property
\begin{equation} 
 \partial \bar \partial \ln |z| = {2\pi} \delta(z)\,.
\end{equation}
See Modave Lectures of Raphael Benichou on conformal field theory.}
\begin{equation}
 G'(\sigma^1,\sigma^2) = -\frac{\alpha'}2 \ln |z_1-z_2|^2 + f(z_1,\bar z_1) + f(z_2,\bar z_2)\,,
\end{equation}
with the function $f$ given by:
\begin{equation}
 f(z,\bar z)=\frac{\alpha' \psi_0^2}{4}\int\BWd^2 z' \BWe^{2\omega(z,\bar z)}\ln |z-z'|^2 + C\,.
\end{equation}
This function solves the equation $\nabla^2 f = \pi \alpha' \psi_0^2$. It will drop out in the end result for the $n$-tachyon amplitude and can be seen as a regulator. The constant $C$ is determined by demanding orthognonality of $G'$ to the zero mode $\psi_0$. We do not keep track of this constant, since the function $f$ will drop out of the physical result anyway.

\begin{BWexerc}
 Relate the solution for the Green's function to the $X$-propagator. In order to define the latter properly, you will have to make use of a cut-off. This should give meaning to calling the function $f$ a ``regulator'': independence of the propagator on the cut-off, corresponds to the function $f$ dropping out of the physical amplitudes.
\end{BWexerc}

As a first example, consider $n$ tachyon vertex operators on the sphere. We know from section \ref{BWss:VertexOperators} that this corresponds to an insertion of $n$ operator $\BWe^{i k_i\cdot X_i }$. Our first naive guess for the path integral is
\begin{equation}
 M_n^{S^2}(k_1\ldots k_n,\sigma_1\ldots\sigma_n)\equiv\langle\prod_{i=1}^{n}\BWe^{\BWi k_i\cdot X(\sigma_i)}\rangle.
\end{equation}
Comparing to the definition of the generating functional $Z[J]$ \eqref{BWeq:ZJ}, we see this corresponds to $J(\sigma)= X(\sigma)$:
\begin{equation}
 M_n^{S^2}(k_1\ldots k_n,\sigma_1\ldots\sigma_n)=\BWi C_{S^2}(2\pi)^d \delta^d(\sum_i k_i)\exp({-\frac12\sum_{i,j=1}^n}G'(\sigma_i,\sigma_j))\label{BWeq:tachyon1}
\end{equation}
where we absorbed the determinant of the worldsheet Laplacian in the constant $C_{S^2}$:
\begin{equation}
 C_{S^2} = \frac1{\psi_0^d}\det{}'\left(\frac{\nabla^2}{4\pi^2\alpha'}\right)^{-d/2}\,.
\end{equation}
This determinant can in principle be regularized and computed, but we will not need to do this here. We just note that this determinant is a constant from the worldsheet point of view and leave its value unspecified. It is more important to notice that the amplitude we proposed diverges, due to the self-contractions in the result \eqref{BWeq:tachyon1}: $G'(\sigma_i,\sigma_i)$ is infinite. To regularize these divergences, it is customary to introduce some renormalized tachyon vertex operator:
\begin{equation}
 \left[\BWe^{\BWi k\cdot X}\right]_r\,,
\end{equation}
whose explicit form depend on the regularization scheme. The coordinate dependence of the $n$-tachyon amplitudes can be split in terms that do not involve self-contractions and those that do:
\begin{equation}
 M_n^{S^2}(k_1\ldots k_n,\sigma_1\ldots\sigma_n)=\BWi C_{S^2}(2\pi)^d \delta^d(\sum_i k_i)\exp\Big{(}-\sum_{i,j=1, i>j}^nG'(\sigma_i,\sigma_j)-\frac12 \sum_{i}G'_r(\sigma_i,\sigma_i)\Big{)}\label{BWeq:tachyon2}.
%\sum_{\begin{tabular}{c}{\footnotesize$i,j=1$}\\{\tiny$i>j$}\end{tabular}}
\end{equation}
The self-contractions now involve:
\begin{equation}
 G'_r(\sigma,\sigma) = G'(\sigma,\sigma) + {\rm ~extra~ term(s)}\,.
\end{equation}
The extra terms depend on the regularization scheme. For our purposes, it is satisfactory to use normal ordered operators in order to have well-behaved amplitudes:
\begin{equation}
 \left[\BWe^{\BWi k\cdot X}\right]_r = ~:\BWe^{\BWi k\cdot X}:\,.
\end{equation}
It then follows that
\begin{equation}
 G'_r(\sigma,\sigma) = 2f(z,\bar z)
\end{equation}
and the amplitude becomes:
\begin{equation}
  M_n^{S^2}(k_1\ldots k_n,\sigma_1\ldots\sigma_n)=\BWi C_{S^2}(2\pi)^d \delta^d(\sum_i k_i)\prod_{i,j =1,i>j}^n|z_i-z_j|^{\alpha' k_i\cdot k_j}\label{BWeq:tachyon3}\,.
\end{equation}
We see that both the self-contractions and the dependence on the function $f$ drop out in the final expressions. The latter arises because $f$ only appears in the combinations \mbox{$\sum_{i,j} k_i\cdot k_j f(z_i,\bar z_i)$}, which give zero due to momentum conservation. In conclusion, we see that the $n$-point tachyon amplitude on the Riemann sphere contains a constant prefactor, a delta-function enforcing momentum conservation and a specific coordinate dependence, stipulating the amplitude only depends on the distances between the vertex operator positions measured with the locally flat metric.

%\subsubsection{The disk}

\subsection{The torus}\label{BWref:torus}
We calculate the $X$-dependent part of the vacuum amplitude on a torus with metric $ds^2 = d z d \bar z$ and identifications $z \sim z +1$ and $z\sim z +\tau$ (we need the vacuum amplitude on the torus below). As indicated above in eq.~(\ref{BWeq:Xvacuum}) this part is simply given by,
\begin{equation}
 Z[0] = \frac 1V \left(\frac{\det{}'\nabla^2}{\int \BWd^2 \sigma \sqrt{g}}\right)^{-d/2}\, .
 \end{equation}
It suffices to evaluate $\det{}'\nabla^2$ on the torus. The appropriate eigenfunctions of the laplacian on the torus are given by,
\begin{eqnarray}
\psi_{n,m} (\sigma^1, \sigma^2) = e^{2 \pi i\, \left( n \sigma^1 - \frac{\tau_1}{\tau_2} n \sigma^2 - \frac{1}{\tau_2} m \sigma^2 \right)},
\end{eqnarray}
and the corresponding eigenvalues read,
\begin{eqnarray}
\lambda_{n,m}= \frac{4 \pi^2}{\tau_2^2}  \left\{ (m + \tau n) (m + \bar \tau n) \right\}.
\end{eqnarray}
As we do not take into account the zero modes (this means $m$ and $n$ cannot be zero simultaneously), the determinant we should calculate reduces to,
\begin{eqnarray}
\det{}'\nabla^2 = \prod_{m, n{\rm \, not\,both\, zero}} \frac{4 \pi^2}{\tau_2^2}  \left\{ (m + \tau n) (m + \bar \tau n) \right\}.
\end{eqnarray}
The details of this calculation can be found in e.g. \cite{Bagger:1987rd}. We shall limit ourselves to giving the result,
\begin{eqnarray}
\det{}'\nabla^2 = \tau_2^2 \eta(\tau)^2 \bar\eta(\bar\tau)^2,
\end{eqnarray}
where $\eta(\tau)$ represents the Dedekind-$\eta$ function. We will introduce this function in the next section. For $D=26$ the final result for the $X$-dependent part reads,
\begin{eqnarray}
Z[0] = \frac 1V  \frac{1}{\tau_2^{13}} \frac{1}{(|\eta (\tau) |^{2})^{26}}\, .
\end{eqnarray}
Recall that there is  an additional factor $(\int \BWd^2 \sigma \sqrt{g})^{13} = \tau_2^{13}$ to take into account.

\section{On-Shell Scattering Amplitudes\label{BWs:OSScatteringAmplitudes}}

We now give a brief treatment of some exemplary scattering amplitudes. We choose to discuss only the $n$-point tachyon-amplitudes in closed bosonic string theory in detail. Nevertheless, a lot of rich ``stringy'' physics can be found from those amplitudes.  The main results apply to other amplitudes as well. We stick to on-shell amplitudes (elements of the S-matrix). A prescription for scattering of particles off the mass shell (i.e.\ $k\neq 4/\alpha$ for tachyon modes of the closed bosonic string) is very hard, for reasons discussed in the introduction.

In this section, we discuss $n$-point tachyon amplitudes on the Riemann sphere. In other words, we consider tree-level amplitudes in the field theory limit (there are no string-loops). Recall from chapter \ref{BWc:Riemann} on Riemann surfaces that the sphere has three complex conformal Killing vectors. Combine this with the discussion in section \ref{BWs:Fixing_The_Gauge} to see that we need at least three operator insertions for the general S-matrix result \eqref{BWeq:GaugeFixed_PI_EndResult} to be applicable here. Therefore, we first single out the case with 1 or 2 vertex operators in section \eqref{BWss:1,2_Vertex_Ops} and see that those amplitudes vanish. We continue with a discussion of three or more vertex operators in section \ref{BWss:>3_Vertex_Ops}. The calculation without operator insertions (the vacuum amplitude) is deferred to section \ref{BWs:PartitionFunction}. All calculations in this chapter use the result for the $X$ integral of eq.\ \eqref{BWeq:tachyon3}.

%\subsection{The Meaning of On-Shell Scattering Amplitudes}

\subsection{1, 2 vertex operators\label{BWss:1,2_Vertex_Ops}}
In this case, we do not have enough vertex operators to fix the gauge invariance of conformal transformations on the worldsheet completely. There remains a residual invariance under a subgroup of the conformal Killing group. Therefore the path integral is reduced to (recall the discussion in chapter \ref{BWc:Fixing}):
\begin{equation}
\int \frac{\BWD X}{Vol(CKG)}J\prod_{i=1}^n  \int \BWd^2 \sigma_i \sqrt{g(\sigma_i)}{\cal V}_{j_i}\,.
\end{equation}
Now $n=1$ or $n=2$ and we have some (finite) Jacobian factors conveniently written as $J$. We can fix the vertex operator position(s) using a subgroup $H$ of the CKG, factoring out the volume of that subgroup $H$. This gives rise to a Jacobian which we absorb in $J$, writing the integral as:
\begin{equation}
  \int \frac{\BWD X}{Vol(CKG/H)}J \prod_{i=1}^n  \sqrt{g(\sigma_i)}{\cal V}_{j_i} = 0\,.
\end{equation}
The resulting integral is effectively zero, because the $X$ path integral gives only a finite result and the volume of the residual conformal symmetry $CKG/H$ is still infinite (see also \cite{Moore:1985ix,D'Hoker:1985pj,Polchinski:1985zf}). We conclude that $n$-point tachyon amplitudes on the Riemann sphere are zero whenever we have less vertex operator insertions then the number of (complex) conformal Killing vectors.

\subsection{3 or more vertex operators\label{BWss:>3_Vertex_Ops}}

We continue with the discussion of the path integral on the sphere with $n\geq3$ tachyon operator insertions. We start from the path integral result \eqref{BWeq:GaugeFixed_PI_EndResult}, the main result of the previous chapter. Remember that we are working on the 
sphere, there are no metric moduli. Consider $n\geq 3$ tachyon vertex operators at positions $z_i$. Using the M\"obius group of CKVs on the sphere, see section \ref{BWs:CKVs}, we may again fix three of these positions. This leaves us with an integral over the $n-3$ remaining positions and a determinant over the CKG generators, which we write as an expectation value over the $X$ variable weighted with the Polyakov action as
\begin{equation}
  %M_n^{S^2}(k_1\ldots k_n,\sigma_1\ldots\sigma_n) = C_{S^2}
  \langle \det (C^a_r(\hat \sigma_j) )\prod_{i=4}^n \int \BWd^2 \sigma_i \sqrt{g} \prod_{j=1}^n \BWe^{\BWi k_j\cdot X} \rangle
\end{equation}
Compared to the general result for the path integral in equation \eqref{BWeq:GaugeFixed_PI_EndResult}, we do not discuss constant factors and absorb them into an overall constant at the end of the calculation. Those constant factors are the determinant $\det' \nabla^2$, as before, but also the determinant $\det ' P_1^T P_1$ and all numerical constants, as for instance factors of $\pi$ and a possible multiplicative factor if the basis of CKVs is not orthonormal. The two determinants are constants since they can only depend on the worldsheet coordinates through the metric moduli. In section \ref{BWs:AmplitudesGenFunct} we learned how to calculate the expectation value of the exponentials. Only the Jacobian determinant remains to be calculated. For the sphere, this determinant is easily evaluated using the infinitesimal generators of the M\"obius group. Remember that a M\"obius transformation (the CKG of the sphere) takes the form:
\begin{equation}
  \begin{pmatrix}
    a&b\\c&d
  \end{pmatrix} \in PSL(2,\mathbb{Z})\,:\qquad z \to \frac{a z + b}{c z + d}\,,\qquad  \bar  z \to \frac{a \bar z + b}{c \bar z + d}\,.
\end{equation}
This group has six real generators that act infinitesimally on the worldsheet coordinates as (see eqs.\ (\ref{BWeqconfkillsphere1},\ref{BWeqconfkillsphere2})):
\begin{align}
  \delta z &= \delta a_1 \,\partial + \delta a_2 \,z \partial + \delta a_3 \,z^2\partial\,,\nonumber\\
  \delta \bar z &= \delta \bar  a_1 \,\bar \partial + \delta \bar a_2\, \bar z \bar \partial + \delta \bar a_3\, \bar z^2\bar \partial\,.
\end{align}
The six real group parameters of the M\"obius group are denoted by three complex parameters $\delta a_i$, and we can immediately link to the discussion of the role of the CKVs of section \ref{BWss:WritinDownJacob} and equation \eqref{BWeq:deltaCKV}. The determinant of the  matrix  $C_r^a(\hat \sigma_j)$ (defined as in \eqref{BWeq:deltaCKV}) appearing in the Jacobian \eqref{BWeq:Jacob_ill} is easily computed as:
\begin{align}
  \det (C_r^a(\sigma_j)) &= \det \begin{pmatrix}
    1 &\hat z_1 &(\hat z_1)^2\\
    1 &\hat z_2 &(\hat z_2)^2\\
    1 &\hat z_3 &(\hat z_3)^2
  \end{pmatrix} \det \begin{pmatrix}
    1 &\hat {\bar z}_1 &(\hat {\bar z}_1)^2\\
    1 &\hat {\bar z}_2 &(\hat {\bar z}_2)^2\\
    1 &\hat {\bar z}_3 &(\hat {\bar z}_3)^2
  \end{pmatrix}\nonumber\\
  &= |\hat z_1-\hat z_2|^2|\hat  z_1-\hat z_3|^2|\hat z_2-\hat z_3|^2\label{BWeq:detCKVsphere}
\end{align}
We have written the fixed vertex positions, denoted before as $\hat \sigma_i$, as $\hat z_i,\hat {\bar z}_i$. 

% Equivalently, we could have found this determinant from modular invariance. Noting that under the action of the M\"obius group, we have the transformations:
% \begin{equation}
%   \begin{pmatrix}
%     a&b\\c&d
%   \end{pmatrix} \in PSL(2,C) : \quad z_i -z_j \to \frac{z_i-z_j}{(c z_i +d)(c z_j +d)}\,,\qquad \BWd z\to \frac{\BWd z}{(c z+d)^2}\,,
% \end{equation}
% an invariant measure on the modular group is given in function of the fixed vertex operator positions as:
% \begin{equation}
%   \BWd j\equiv \frac{\BWd^2 z_1 \BWd^2 z_2 \Bd^2 z_3}{|z_1-z_2|^2|z_1-z_3|^2|z_2-z_3|^2}\,,
% \end{equation}
% Applying this to \eqref{BWeq:PolyakovGaugeFixed1}, gives again the desired result.

We now piece together the results of the previous sections. For notational convenience, we drop the hats on the fixed positions $z_1,z_2,z_3$. The integral we want to evaluate is:
\begin{align}
% \begin{array}{l}
&\langle \det (V^a_i(\hat \sigma_j) \prod_{i=4}^n \int \BWd^2 \sigma_i \sqrt{g} \prod_{j=1}^n \BWe^{\BWi k_j\cdot X} \rangle\label{BWeq:ntachyonampl}\\
& = \BWi g_s^{2+n} \tilde C_{S^2}(2\pi)^d \delta^d(\sum_i k_i)| z_1- z_2|^2| z_1- z_3|^2| z_2- z_3|^2 \int \BWd^2 z_4 \prod_{i,j =1\,,i>j}^4|z_i-z_j|^{\alpha' k_i\cdot k_j}\,.
% \end{array}
\nonumber
\end{align}
We have put in an explicit $g_s$ dependence and we absorbed all other constants (except various factors of $2\pi$) into $\tilde C_{S^2}$.  The exponent contains a contribution $2$ from the Euler number $(\chi_{S^2}=-2$) and $1$ for each vertex operator inserted. Because the original integral was invariant under diffeomorphisms of the worldsheet and in particular under M\"obius transformations, also the gauge fixed path integral will be. Therefore we can freely pick values for the fixed positions $z_1,z_2,z_3$. We make the convenient choice $z_1 = 0, z_2 = 1, z_3 = \infty$. Also, remember that tachyons obey the mass-shell condition $k^2 = -4/\alpha'$.

\subsubsection{3-particle scattering}
Here we find a remarkably simple result. The amplitude \eqref{BWeq:ntachyonampl} for three tachyon scattering becomes:
\begin{equation}
A_{S^2}(k_1,k_2,k_3) = \BWi g_s^3 \tilde C_{S^2}(2\pi)^d \delta^d(\sum_i k_i)|z_1-z_2|^{2+\alpha' k_1\cdot k_2}|z_1-z_3|^{2+\alpha' k_1\cdot k_3}|z_2-z_3|^{2+\alpha' k_2\cdot k_3} \,,\label{BWeq:ntachyonampl3}
\end{equation}
This result seems at odds with M\"obius invariance: the amplitude cannot depend on the choice we make for $z_1,z_2,z_3$. To see that this dependence drops out, remember that these tachyon states are on-shell and obey $k_i^2 = 4/\alpha'$. Then we can rewrite the momenta in the exponents as:
\begin{align}
  2 k_1 \cdot k_2 &= (k_1 + k_2)^2 - k_1^2 -k_2^2\nonumber\\
  &= k_3^2 - k_1^2 -k_2^2 = -\frac4{\alpha'}\,,
\end{align}
where we used momentum conservation $(k_1 + k_2 + k_3 = 0)$ going to the second line, and by cyclicity the same result holds for $2k_2\cdot k_3$ and $2k_1\cdot k_3$. This leaves us with the result
\begin{equation}
  A_{S^2}(k_1,k_2,k_3) = \BWi g_s^3 \tilde C_{S^2}(2\pi)^d \delta^d(\sum_i k_i)
\end{equation}
for the 3-tachyon scattering amplitude.

\subsubsection{4-particle scattering: Virasoro-Shapiro}
We base our explanation on \cite{Polchinski:1998rq} and \cite{Tong:2009np}.
Start from \eqref{BWeq:ntachyonampl} and use momentum conservation to obtain the four-tachyon scattering as:
\begin{equation}
\BWi \tilde C_{S^2}(2\pi)^d \delta^d(\sum_i k_i) \int \BWd^2 z_4 |z_4|^{\alpha' k_1\cdot k_4}|1-z_4|^{\alpha' k_1\cdot k_4}\,,
\end{equation}
The constant in the amplitude is fixed by relating the four-point amplitude to the three-point amplitude through unitarity (i.e.\ by cutting a four-point amplitude up into a combination of two three-point interactions, see for instance Polchinski \cite{Polchinski:1998rq} for more details).
In field theory, scattering processes of four particles are often rewritten in terms of Mandelstam variables $s,t,u$ defined as:
\begin{equation}
  s = -(k_1+k_2)^2\,,\quad t= -(k_1+k_3)^2\,,\quad u = -(k_1+k_4)^2\,.
\end{equation}
with the property (as follows from momentum conservation):
\begin{equation}
 s +t + u = -16/\alpha'\,.
\end{equation}

To rewrite the 4-tachyon amplitude in terms of these variables, use the mass-shell condition to find the identity $2\alpha'k_i\cdot k_j = \alpha' (k_i +k_j)^2 - 8$. The integral in the amplitude then becomes:
\begin{equation}
  \int \BWd^2 z_4 |z_4|^{-\frac{\alpha' u}2 -4}|1-z_4|^{-\frac{\alpha' t}2 -4}\,.\label{BWeq:virasoro_integral}
\end{equation}
This integral has a lot of interesting properties. It is symmetric in $s,t,u$ and exhibits poles in these variables at exactly the closed string states of the spectrum. To make this obvious, we can recast the integral in a manifestly symmetric form using gamma functions. The gamma function can be defined as the following integral:
\begin{equation}
 \Gamma(a) = \int_0^\infty \BWd t\,t^{a-1} \BWe^{-t}\,.
\end{equation}
 It obeys the properties:
\begin{equation}
 \Gamma(x+1) = x \Gamma (x)\,,\qquad \Gamma(n+1) = n! \quad\forall n\in \mathbb{N}\,
\end{equation}
hence it can be seen as an extrapolation of the faculty operation for the integers ($n! = 1\cdot2\cdot3\cdots n)$ to a function on the complex numbers. However, from this expression is clear that the gamma function has poles at every negative integer (including zero). This is important for the structure of the four-point amplitude.

Using the gamma function, the integral \eqref{BWeq:virasoro_integral} can be written as:
\begin{equation}
 \int \BWd^2 z|z|^{2a-2}|1-z|^{2b-2}=\frac{\Gamma(a)\Gamma(b)\Gamma(c)}{\Gamma(1-a)\Gamma(1-b)\Gamma(1-c)}
\end{equation}
and we have our answer for the four-tachyon scattering amplitude:
\begin{equation}
  \BWi g_s^4 C_{S^2}(2\pi)^d \delta^d(\sum_i k_i) \frac{\Gamma(-1-\alpha's/4)\Gamma(-1-\alpha't/4)\Gamma(-1-\alpha'u/4)}{\Gamma{(2+\alpha's/4)}\Gamma(2+\alpha't/4)\Gamma(2+\alpha'u/4)}\,.\label{BWeq:VirasoroShapiro}
\end{equation}
This is the famous Virasoro-Shapiro amplitude.

\subsubsection{Discussion about the Virasoro-Shapiro amplitude}
There are a lot of interesting facts hidden in the amplitude \eqref{BWeq:VirasoroShapiro}. What we notice immediately is that it is symmetric in the $s,t$ and $u$ channels. We will come back to this later, but discuss first the properties focusing on one channel, say $s$. The amplitude has an infinite number of poles in $s$. Using the property of the gamma function:
\begin{equation}	
  \Gamma(z+1) = z\Gamma(z)\,,
\end{equation}
we see that the poles arise from the gamma functions in the numerator of \eqref{BWeq:VirasoroShapiro}. The first pole is at
\begin{equation}
  -1-\alpha's/4 = 0\,,\qquad {\rm or}\qquad s = -\frac4{\alpha}\,.
\end{equation}
This is the mass squared of the tachyon! We conclude that this pole corresponds to a resonance coming from an intermediate tachyon state. Continuing the analysis, there are a lot more poles at the values:
\begin{equation}
  -1-\alpha's/4 = n\,, n\in {\mathbb N}\qquad {\rm or}\qquad s = \frac4{\alpha}(n-1)\,.
\end{equation}
These are exactly the masses of all the states in the bosonic string spectrum, see table \ref{tab:LowestModes} in the introduction. With this in mind, we can reinterpret the string scattering amplitude for four tachyons as an infinite sum over these poles, expressing the intermediate creation of all these particles, see figure \ref{BWfig:FourPoint_BreakingUp}.
%  \begin{equation}
%    A_{S^2}^{(4)}(s,t,u) = \sum_n 
%  \end{equation}\BWcomment{Need figure for this}
\begin{figure}[ht!]
\centering
\begin{picture}(150,0)
\put(0,20){{ $A_{S^2}(k_1,k_2,k_3,k_4)\quad=\quad$} {\huge $\sum_n$} }
\put(180,30){\Large$m_n$}
% \put(170,20){\Huge (	}
% \put(270,20){\Huge )}
\end{picture}
\includegraphics[height=.07\textheight]{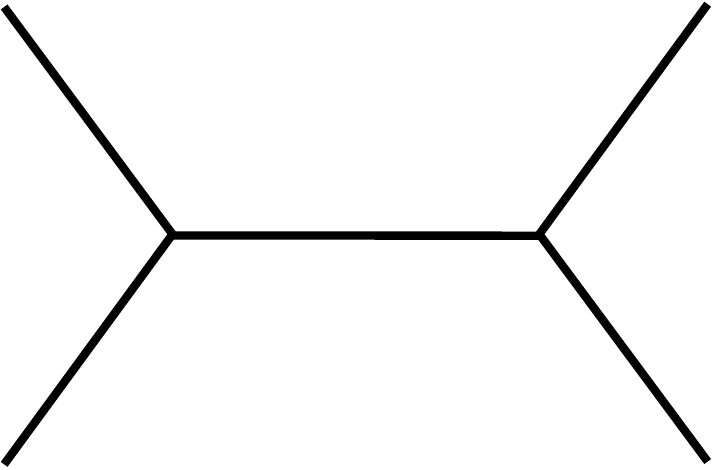}
 \caption{The four-point tachyon string amplitude can be seen as a sum over ordinary field theory amplitudes, one for each state in the string spectrum. The mass of each intermediate state is $(m_n)^2 = 4(n-1)/\alpha'$. (Figure inspired by \cite{Tong:2009np}.) \label{BWfig:FourPoint_BreakingUp}}
\end{figure}

When studying the residues of each pole in the amplitude, one sees that the highest power of momentum is
\begin{equation}
 A_{S^2}(k_1,k_2,k_3,k_4)\sim \frac{t^{2n}}{s - M_n^2}\,.\label{BWeq:4pt_sum_terms}
\end{equation}
The highest power of $t$ entering at each level, is related to the particle in the spectrum at level $n$ with the highest spin. Indeed, a particle $\Lambda$ with spin $J$ has indices $\Lambda_{\mu_1\ldots \mu_J}$. We need a contraction with $J$ derivatives in a cubic interaction, giving $J$ derivatives (or momenta in Fourier space popping op in the numerator of the amplitude) at each vertex. This tells us that the particle with highest spin at level $n$ has $J=2n$. Note that this is what we found earlier for level $n=1$ in the introduction, since there the highest spin particle is the graviton (spin 2). 

Now we return to the symmetry between the $s,t,u$ channels of the 4-tachyon amplitude. In the above discussion, we kept the $s$-channel fixed and wrote the amplitude as a sum over $t$. We could as well have done the opposite: keep $t$ fixed and sum over $s$ (and the same holds for the $u$-channel).
In field theory, we are used to calculating the contribution for all of these channels and summing them up. Somehow, we do not need to do this here, the result already exhibits an equivalence between these channels. In the early string literature, when people were trying to model the strong interactions, they even called their enterprise ``dual resonance model'' because of this $s,t,u$ duality.

Where does this duality come from? From a worldsheet point of view, it is understood easily. Since we do not make a distinction between topologically equivalent worldsheets, all three channels really are equivalent: we can reinterpret the infinite sum over $t$-amplitudes as one over $t$ or $u$-amplitudes. This is illustrated in figure \ref{BWfig:DUalResonance}. 

\begin{figure}[ht!]
\centering
\vspace{.025\textheight}
\begin{picture}(0,10)
%tachyon labels
\put(-15,-10){Tachyon$_4$}           
\put(40,-10){Tachyon$_3$}
\put(-15,85){Tachyon$_1$}
\put(50,100){Tachyon$_2$}
%equality:
\put(85,42){\huge$=$}
%s-channel:
\put(110,8){$k_4$}           
\put(155,8){$k_3$}
\put(110,85){$k_1$}
\put(155,85){$k_2$}
\put(112,-10){($s$-channel)}
%t-channel:
\put(175,8){$k_4$}           
\put(220,8){$k_3$}
\put(175,85){$k_1$}
\put(218,85){$k_2$}
\put(180,-10){($t$-channel)}
%u-channel:
\put(242,17){$k_4$}           
\put(306,17){$k_3$}
\put(242,70){$k_1$}
\put(306,70){$k_2$}
\put(257,-10){($u$-channel)}
\end{picture}
\includegraphics[height=.15\textheight]{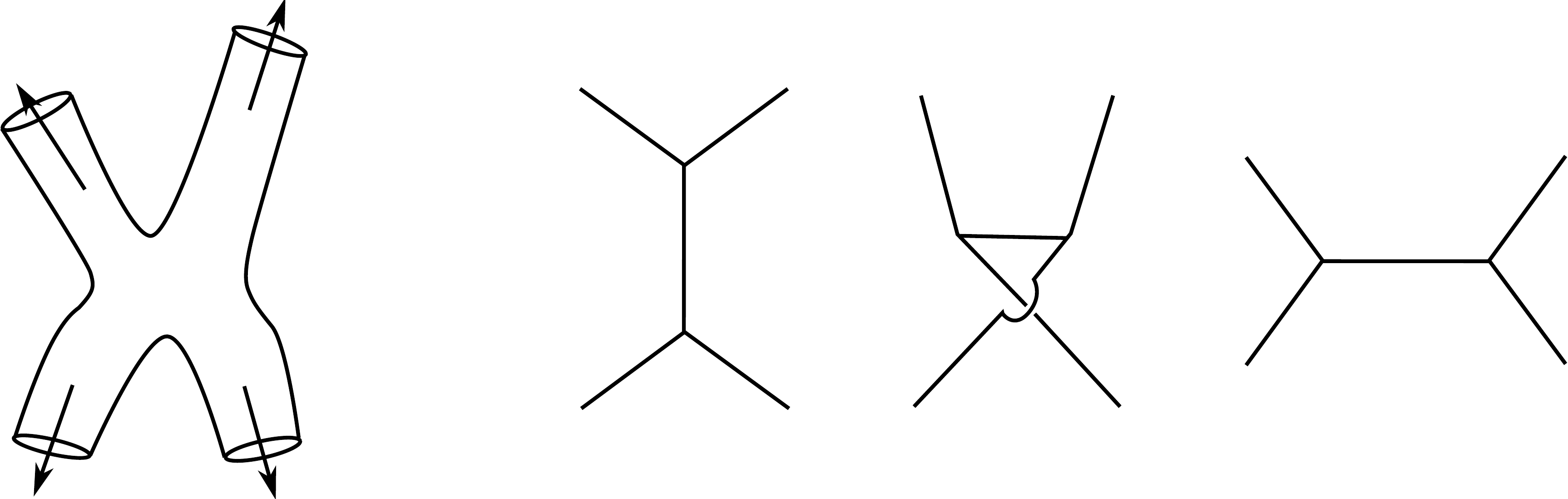}
\vspace{.01\textheight} 
\caption{The 4-tachyon amplitude in string theory on the Riemann sphere (worldsheet with legs to infinity, hence the arrows) corresponds to either choice of $s$, $t$ or $u$ channel, since we do not differentiate between worldsheets which are topologicaly equivalent in the path integral.\label{BWfig:DUalResonance}}
\end{figure}

As a third and final remark, the four-point tachyon amplitude has an improved high energy asymptotic behaviour, as compared to ordinary quantum field theory amplitudes. Let us examine the asymptotic behaviour of the amplitude at $s,t\to \infty$ with fixed ratio $s/t$ (the so-called hard scattering limit). This means we are looking at high energy scattering with the angle between incoming and outgoing states kept fixed. Using the asymptotic behaviour of the gamma function $\Gamma(x) \sim x \BWe^{-x} \sqrt{x}$ for $x \to \infty$ (Stirling's approximation), we see that the dominating behaviour is $\Gamma(x)\sim \BWe^{x\ln x}$ at large $x$. Using this in the 4-point amplitude, we find the following 
\begin{equation}
 A_{S_2}(k_1,k_2,k_3,k_4)|_{s,t\to\infty,\\ s/t {\rm \,fixed}} \sim g_s^2 \delta^4\left({\sum_i k_i}\right) \BWe^{-\alpha' (s\ln s +t\ln t +u\ln u)}
\end{equation}
We see the amplitude has an incredibly fast fall-off, faster than an exponential. This is much better than we could hope for in any field theory for point particles, as the best we can obtain there would be a power-law behaviour.  Even worse, when we take a look at an individual term in the expansion of the four-point amplitude \eqref{BWeq:4pt_sum_terms} in this high energy limit, we see that
\begin{equation} \frac{t^{2n}}{s-M_n^2} \to \infty\,,
\end{equation}
each individual term blows up! For instance, for the $n=1$ term, which corresponds to the graviton contribution, we encounter the well-known bad high energy behaviour that plagues many attempts to quantize gravity. We know that the dimensionless coupling in gravity is $G_N E^2$, with $G_N$ Newton's constant and $E$ a measure of the energy of the process, showing how amplitudes can scale with energy. This behaviour is recovered in the hard scattering limit, as then we have at $n=1$ (graviton in internal propagator): $t^2/s \sim t \sim E^2$. 

In conclusion, the four-point amplitude, as a sum of all these run away terms, is clearly better than its parts: it has a very good asymptotic behaviour. All the infinities seem to cancel each other off. This can be seen as an encouragement, a hint for the finiteness of string theory.

\section{Partition Function\label{BWs:PartitionFunction}}
The generating function $Z[J]$ is a very useful and important object in quantum field theory and string theory, as it enables us to calculate Green's function $G(\sigma, \sigma ')$. When we set the current $J$ to zero, we obtain the quantity $Z[0]$ known as the {\bf partition function},
\begin{eqnarray*}
Z = \sum^{\infty}_{{\rm g} = 0}  \int \frac{ \BWD g \BWD X }{{\cal N}}
e^{-S_P}.
\end{eqnarray*}
At first sight it is not clear why this quantity is so interesting
to study, since it merely comes down to calculating a scattering
amplitude without any vertex operator. Therefore, we shall first
argue why the partition function is interesting, after which we will
calculate the first two terms in the genus-expansion.

\subsection{The Meaning of the string Partition Function}
The name ``partition function" comes from the analogy with statistical mechanics. In statistical mechanics the partition function sums over all different energy levels. The partition function also encodes the distribution of  probabilities over the energy levels. In field theory or string theory the energy levels are represented by the masses of the particles or the  masses in the string spectrum. The string Partition Function can be interpreted as the sum of vacuum
diagrams in the genus-expansion,
\begin{eqnarray}
Z = \sum^{\infty}_{{\rm g} = 0} Z_{\rm g} ,
\end{eqnarray}
where $Z_{\rm g} \equiv \langle  {\bf 1} \rangle_{\rm g}$.
Technically we should only integrate over all inequivalent metric
and embeddings. Physically this corresponds to summing over all possible string excitations (thus
particle states) running in loops. The string partition
function is thus a measure for the vacuum energy density. When one is
interested in non-gravitational physics, the vacuum energy density
is (most of the times) irrelevant, as one uses the partition function as a normalization factor that can be divided out. But since string theory is interpreted as a gravitational theory, we can not ignore the string partition function. The vacuum energy density arising from string theory will couple to gravity and yields a contribution in the Einstein equations similar to a cosmological constant. It is therefore reasonable to interpret the string partition function as a measure for the spacetime cosmological constant.

\subsection{The Sphere}
To obtain $Z_{{\rm g} = 0}$ we should be able to perform the integral
$\BWD g$ over the world-sheet metric $g$. As noted in the previous
chapter \ref{BWc:Fixing}, this integral is not straightforward
due to the possible presence of moduli, CKV's and the overlap
between diffeomorphisms and Weyl-rescalings. In the case of the Riemann
sphere, there are no moduli and 6 CKV's, as pointed out in chapter \ref{BWc:Riemann}. Thus
choosing an appropriate gauge slice reduces $\BWD g$ to,
\begin{eqnarray*}
\BWD g = J \BWD \delta \phi\, \BWD \delta \sigma ,
\end{eqnarray*}
where $J$ represents the Jacobian, $\delta \sigma$ the diffeomorphism
degrees of freedom and $\delta \phi$ the Weyl-rescaling degrees of
freedom. When evaluating the jacobian correctly, we obtain the
following structure,
\begin{eqnarray*}
J \sim \det{}' (P_1^T P_1)^{\frac{1}{2}} \frac{d z_1 dz_2
dz_3}{Vol(CKV)},
\end{eqnarray*}
which together with the constant contribution of the $X$-integral,
section \ref{BWs:AmplitudesGenFunct}, vanishes due to the infinite
volume of the conformal Killing group.
%It is customary to absorb this constant contribution into the string coupling constant.
In conclusion, there is no $S^2$-contribution to the partition
function, which corresponds to the interpretation that classical
value of the cosmological constant is zero.

\subsection{The Torus}
\subsubsection{The smart way}
To calculate the contribution $Z_{{\rm g} =1}$ of the torus, there
exist several equivalent methods which all lead to the same result,
when performed correctly. We shall limit ourselves to only two
methods. The first method which we shall discuss here starts from a
simplified picture. It hinges on some two-dimensional conformal field theory terminology. If these are unfamiliar to you, you can consult Raphael Benichous lecture notes on CFT of the Modave School 2009, or read up in a textbook on string theory, e.g.\ \cite{Polchinski:1998rq}. We can model a closed string running in a
loop as a field theory on a circle with length $2\pi Re(\tau)$,
evolving over a Euclidean periodic time $2\pi Im(\tau)$. Translating
this into operator language, we translate the states along $Re(z)$
over a distance $2\pi Re(\tau)$ and perform a time translation over
time $2\pi Im(\tau)$,
\begin{eqnarray*}
Z_{{\rm g} = 1}(\tau) = Tr\left[ e^{2\pi i Re(\tau)P} e^{-2\pi Im
(\tau ) H} \right].
\end{eqnarray*}
Keeping in mind that we can express the momentum operator $P$ as $P
= L_0 + \bar{L}_0$ and the energy operator $H$ as $H = L_0 + {\bar
L}_0 - \frac{c + \tilde c}{24} $, we can rewrite this expression as,
\begin{eqnarray*}
Z_{{\rm g} = 1}(\tau) = Tr\left[ q^{L_0 - c/24} \bar q^{\bar L_0 -
\tilde c / 24} \right],
\end{eqnarray*}
with $q\equiv e^{2\pi i \tau}$. The trace is taken over the complete
(physical) Hilbert space of the closed bosonic string.
%, which can be
%taken to be the light-cone Hilbert space for convenience. In the
%light-cone Hilbert space only the transverse modes $X^i$, $i \in
%\{1, \ldots, 24\}$, describe physical excitations and therefore only
%the transverse modes will contribute.
Since the left- and right-moving sector of the closed string
decouple, it is sufficient to focus on only one sector in the
Hilbert space. Let us focus on the left-moving sector, then the
right-moving sector will yield similar expressions with barred
quantities. When acting with $L_0$ on a physical state, we have to
permute the oscillators of the physical state with the (normal
ordered) oscillators of $L_0$. Depending on the level number $n$ of
the oscillator, we obtain a contribution of the form,
\begin{eqnarray*}
q^{n N_{i, n}},
\end{eqnarray*}
where $N_{i, n} $ indicates the number of oscillators present in the
physical state with level $n$ and spatial direction $i$. Since we
must sum over all physical states, $N_{i, n} $ can vary from zero to
infinity, which implies an additional sum over the $N_{i, n} $. We
must also pay attention to the zero-mode oscillators. They will
appear as a momentum-integral. The full expression thus reads,
\begin{eqnarray*}
q^{-c/24} \bar q^{- \tilde c/24} V_{26} \int \frac{d^{26}
k}{(2\pi)^{26}} e^{-\alpha' \pi Im (\tau)k^2 }
 \prod_{i=1}^{26} \prod_{n=1}^{\infty} \sum_{N_{i, n}=0}^{\infty}
 q^{n N_{i, n}} \sum_{\bar N_{i, n}=0}^{\infty}\bar q^{n \bar N_{i,
n}},
\end{eqnarray*}
where the spacetime volume $V_{26}$ is a normalization factor,
coming from the transition of a discrete sum over momenta $k$ to a
continuous integral.
%As we have mentioned before {\bf \BWcomment and
%we should do that somewhere!},
The central charges $c$, $\tilde c$ are equal to each other and
count the number of spacetime dimensions, so we have $c = \tilde c =
26$. We can recognize a geometric sum in this expression,
\begin{eqnarray}
\sum_{N_{i, n}=0}^{\infty}q^{n N_{i, n}} = \frac{1}{1-q^n},
\end{eqnarray}
and also the Dedekind eta-funcion $\eta(\tau)$,
\begin{eqnarray}
\eta (\tau) \equiv q^{1/24} \prod_{n = 1}^{\infty} (1-q^n).\label{BWeq:Dedekind}
\end{eqnarray}
Performing the integral over the momenta,
\begin{eqnarray*}
\int \frac{d^{26} k}{(2\pi)^{26}} e^{-\alpha' \pi Im (\tau)k^2 } =
\left(\frac{1}{(4\pi^2 \alpha' Im (\tau))^{1/2}}\right)^{26},
\end{eqnarray*}
we obtain the following expression for $Z_{{\rm g}= 1}(\tau)$,
\begin{eqnarray}
Z_{{\rm g} = 1}(\tau) = V_{26} \frac{1}{(4\pi^2 Im(\tau))^{13}}
|\eta(\tau) |^{-52}. \label{BWeq:almostPart}
\end{eqnarray}
%The constant spacetime volume $V_26$ can be absorbed into the string
%coupling constant and we shall neglect it from now on.
At this point, we are only interested in the $\tau$-dependence of
$Z_{{\rm g} = 1}(\tau)$. Therefore we shall not pay any attention to
the (constant) volume $V_{26}$ of the spacetime. Although our
expression eq.~(\ref{BWeq:almostPart}) seems to be interpretable, we
first need to straighten out some crucial points about this
expression. First, as we have mentioned earlier, we should sum over
all physical states. However, the physical Hilbert space is smaller
than the Hilbert space we have summed over, as only the tranverse
directions of the modes $n>0$ contribute\footnote{This can be explained as follows.
Looking at the Polyakov action we start from 26 scalars $X^\mu$ and a two-dimensional metric $g_{mn}$. In total we have 29 bosonic degrees of freedom. The local symmetries (reparametriations + Weyl) allow us to gauge away the three degrees of freedom of the metric, such that the world-sheet metric is flat. The equations of motion for the metric then reduce to two constraints, that relate the light-cone coordinates $X^\pm$ to the transverse coordinates $X^i$. The two constraints eliminate two more degrees of freedom. Hence, 24 is the total number of physical degrees of freedom. The BRST-formulation offers another way to see this. The gauge degrees of freedom of the world-sheet metric can be translated in ghosts, summarized as the $bc$-system. The total action then reads ${\cal S} = {\cal S}_{X} + {\cal S}_{bc}$. To calculate the vacuum partition function we should also take the contribution from the $bc$-CFT into account, which is exactly $|\eta(\tau)|^4$.
%An underlying explanation of this correction can be found in the BRST-quantization of the string. Besides the contribution of the string $X$-CFT, we should also take a contribution from the $bc$-CFT into account, which is exactly $|\eta(\tau)|^4$.
} to the sum. We can correct this quite easily by multiplying our expression by $ |\eta(\tau)|^4$.
Secondly, to obtain the entire genus 1 contribution we must ensure
to sum only once over every modulus $\tau$. Therefore we must divide
by the volume of the conformal Killing group ($4\pi^2 Im (\tau))$
and integrate over the fundamental domain ${\mathfrak F}$ of the moduli
space.
%The conformal Killing group together with the integr and with metric $\frac{d^2 \tau}{ (Im(\tau))^2}$. 
Finally, we obtain the following expression for the partition
function at genus 1,
\begin{eqnarray}
Z_{{\rm g} = 1} \sim \int_{\mathfrak F} \frac{d^2 \tau}{(Im(\tau))^2}
\frac{1}{(4\pi^2 Im(\tau))^{12}} |\eta(\tau) |^{-48}.\label{BWeq:1loopeasy}
\end{eqnarray}

\subsubsection{The hard way}
The hard way uses the gauge fixing procedure we have elaborated in
the previous chapter \ref{BWc:Fixing}. This means that we have to
choose an appropriate gauge slice such that metric measure $\BWD g$
can be rewritten in terms of the modulus $\tau$, infinitesimal
diffeormophism $\delta \sigma_a$ (non-CKV) and infinitesimal
Weyl-rescaling $ \delta  \phi $,
\begin{eqnarray}
\BWD g = J(\tau)  d^2 \tau \,  \BWD \delta \phi\, \BWD \delta \sigma,
\end{eqnarray}
where $J(\tau)$ expresses the Jacobian relating the old degrees of freedom
with the new. The Jacobian $J(\tau)$ can be fully calculated as in \cite{D'Hoker:1988ta} and the calculation basically comes down to determining $\det ' (\nabla) $ on the torus. We refer to subsection \ref{BWref:torus} for this calculation and limit ourselves to stating the final result for the one-loop string partition function,
%\BWcomment{need to discuss more} we can calculate this Jacobian and
%obtain the result,
\begin{eqnarray}
Z_{{\rm g} = 1} = \int_{\mathfrak F} \frac{d^2 \tau}{8\pi^2 (Im(\tau))^2}
\frac{1}{(4\pi^2 Im(\tau))^{12}} |\eta(\tau) |^{-48}, \label{BWeq:1loophard}
\end{eqnarray}
where we integrate over a fundamental domain ${\mathfrak F}$ of the
moduli space ${\mathfrak M}_{T^2}$.

\subsubsection{Modular invariance}

Let us take a moment to focus on a particular property of the one-loop partition function eqs.~(\ref{BWeq:1loopeasy}) and
(\ref{BWeq:1loophard}). We already noticed in the chapter \ref{BWc:Riemann} that the modulus $\tau$ is determined up to the modular group $PSL(2,\BWIZ)$. This discrete group is generated by a succession of the elements T and S,
\begin{eqnarray*}
T: \tau &\mapsto& \tau + 1,\\
S: \tau &\mapsto& -1/\tau.
\end{eqnarray*}
We shall now investigate how the partition function behaves under the transformations T and S. Let us start with the volume-element of the moduli space $ \frac{d^2 \tau}{(Im(\tau))^2} $, as both T and S leave this volume-element invariant:
\begin{eqnarray*}
\begin{array}{lcl}
& T &\\
\frac{d^2 \tau}{(Im(\tau))^2} & \longrightarrow & \frac{d^2 \tau}{(Im(\tau +1))^2} = \frac{d^2 \tau}{(Im(\tau))^2}\, ,\\
& S &\\
&\longrightarrow & \frac{d \tau}{\tau^2} \frac{d \bar \tau}{\bar{\tau}^2} \frac{1}{(Im(-1/\tau))^2} = \frac{d^2 \tau}{(Im(\tau))^2}\, .
\end{array}
\end{eqnarray*}
Next we investigate how the Dedekind-$\eta$-function transforms. In case of the T-transformation, it suffices to take the definition of the Dedekind-$\eta$ function eq.~(\ref{BWeq:Dedekind}),
\begin{eqnarray*}
\begin{array}{lcl}
&T& \\
\eta (\tau)& \longrightarrow &e^{2\pi i/ 24} q^{1/24} \prod_{n=1}^{\infty} \big(  1- q^n e^{2\pi i n } \big)  = e^{\pi i / 12}\, \eta(\tau).
\end{array}
\end{eqnarray*}
When we take the modulus of the $\eta$-function, this part of the partition function remains invariant under a T-transformation. The S-transformation is somewhat harder to figure out. Here we will use the definition of the Dedekind $\eta$-function in terms of $\theta$-functions, as given in appendix 10.A in \cite{DiFrancesco:1997nk},
\begin{eqnarray}
\big(\eta (\tau) \big)^3 \equiv \frac{1}{2} \theta_2 (\tau) \theta_3(\tau) \theta_4(\tau).\label{BWeq:ThetaFunctions}
\end{eqnarray}
One can show using the Poisson resummation formula that the $\theta$-functions transform as follows under S,
\begin{eqnarray*}
\theta_2 (-1/\tau ) &=& \sqrt{-i \tau}\, \theta_4 (\tau), \\
\theta_3 (-1/\tau ) &=& \sqrt{-i \tau} \,\theta_3 (\tau), \\
\theta_4 (-1/\tau ) &=& \sqrt{-i \tau}\, \theta_2 (\tau).
\end{eqnarray*}
This implies that the Dedekind $\eta$-function will behave as,
\begin{eqnarray*}
\begin{array}{lcl}
&S& \\
\eta (\tau) &\longrightarrow & \sqrt{-i \tau} \, \eta(\tau).
\end{array}
\end{eqnarray*}
As $\eta$ is raised to the power -48 after taking the modulus, it is clear that this part of the partition function will not remain invariant under an S-transformation. Luckily there still is an additional part to consider, namely $\frac{1}{(Im(\tau))^{12}}$. This part is invariant under a T-transformation, but transforms non-trivially under S,
\begin{eqnarray*}
\begin{array}{lcl}
&T& \\
\frac{1}{(Im(\tau))^{12}} & \longrightarrow & \frac{1}{(Im(\tau +1))^{12}} = \frac{1}{(Im(\tau))^{12}},\\
&S& \\
&\longrightarrow & \frac{1}{(Im(-1/\tau))^{12}} = \frac{|\tau|^{24}}{(Im(\tau))^{12}}.
\end{array}
\end{eqnarray*}
Hence, we can show that the one loop partition function is invariant under the modular group. This invariance is called modular invariance. It also implies that the upper half plane is too large to use as integration region. We need to restrict the integration to a fundamental domain ${\mathfrak F}$, see figure \ref{BWref:Funddom}, which is usually taken to be the canonical one,
 \begin{eqnarray*}
{\mathfrak F} \equiv \{ \tau \in {\cal U}  \big| \,-\frac{1}{2} \leq Re(\tau) \leq \frac{1}{2}, \, |\tau| \geq 1 \}\,.
 \end{eqnarray*}
Modular invariance is an extremely important concept in string theory, as it identifies two equivalent tori. Imposing modular invariance then implies that only one of the two tori is taken into account. In the next chapter we will see that modular invariance is also a useful tool to construct consistent string theories, e.g. for the Heterotic string theory.

%\section{Genus zero \label{BWs:AmplitudesGenusZero}}

%\subsection{0,1,2 vertex operators}

%\subsection{3 vertex operators: Koba-Nielsen}

%\subsection{4 vertex operators: Virasoro-Shapiro}

%\section{Genus one\label{BWs:AmplitudesGenusOne}}

%\subsection{Vacuum amplitude on the torus}

%\subsubsection{Field theory on a circle}
%\subsubsection{Die-hard calculation}

%\subsection{General CFTs and applications}

\chapter{Above and Beyond our Horizons\label{BWc:Conclusions}}
{\it In chapter \ref{BWc:Riemann} we have discussed the close connections
between Riemann surfaces and conformal structures. In chapter
\ref{BWc:Fixing} we set up the general ideas to calculate string
scattering amplitudes, where we encountered the problem of gauge
fixing the Polyakov path integral. Choosing an appropriate gauge
slice allowed us to calculate string scattering amplitudes and the
partition function on a sphere (genus 0) and a torus (genus 1) in
chapter \ref{BWc:Amplitudes}. This last chapter is devoted to subjects
based on the main themes of the previous chapters, but whose technical 
analysis is beyond the aim of these lectures. We start this last chapter with
a review of the difficulties one encounters when going to Riemann
surfaces with higher genus in a first section  and when one considers scattering
superstrings instead of bosonic strings in a second section. In a third section we
 briefly discuss how the low energy limit of string theory relates to
an effective field theory. In the last section we say a few words 
about divergences in string theory and how they should be interpreted.}

%%%%%%%%%%%%%%%%%%%%%%%%%%%%%%%%%%%%
\section{Higher genus}
Before discussing the higher genus amplitudes, let us summarise some useful 
facts and concepts from the previous chapters.
%chapters \ref{BWc:Riemann}, \ref{BWc:Fixing}, \ref{BWc:Amplitudes}.
The Riemann Sphere and the torus are very special Riemann surfaces,
in the sense that they are compact Riemann surfaces with a
continuous conformal Killing group. For higher genus the compact
Riemann surfaces have a discrete conformal Killing group. Hence, the
gauge fixing ambiguities related to the conformal Killing vectors
are absent for Riemann surfaces with genus ${\rm g} \geq 2$. By all
means this does not imply that the gauge fixing procedure simplifies. On the
contrary, for every extra handle we introduce, we increase the
number of moduli (describing infinitesimal deformations of different conformal structures on the Riemann surface). Recall that
the dimensionality of the moduli space is related to the
genus g, i.e. $dim_\BWIC\, {\mathfrak M}_{\rm g}= 3 {\rm g} - 3$.
To treat higher order string amplitudes we should have a theory of
moduli spaces for Riemann surfaces at our disposal. The moduli
spaces of Riemann surfaces are subject to mathematical research,
dating back to the great Bernhard Riemann himself. It is not our
intention to discuss the treatment of higher order genus moduli
spaces in detail here. Instead we  give a compact overview of some
attempts to tackle these higher order scattering amplitudes.

Let us first choose appropriate (orthogonal) variables describing
all possible variations of the metric,
\begin{eqnarray}
\delta g_{m n} = \{{\rm Weyl }, \delta\phi\}\oplus \{{\rm Diff_0},
\delta \sigma\} \oplus \{{\rm moduli}, \delta m_\alpha \}.
\end{eqnarray}
In terms of these variables we can write the metric
measure\footnote{An appropriate normalization in the path-integral makes sure that the
measure preserves Weyl-invariance, provided that the number of
spacetime dimensions is 26. The integration of the reparametrizations $\delta \sigma$ and the Weyl-transformations $\delta \phi$ will cancel with this normalization factor for a Weyl-invariant theory. A thorough discussion of the normalization and Weyl-anomaly cancellation can be found
in \cite{D'Hoker:1988ta} and references therein.} (symbolically) as,
\begin{eqnarray}
D g_{mn} = J(m^\BWmodPara) D \sigma D \phi \prod_{\alpha=1}^{6g-6} dm^\BWmodPara \, ,
\end{eqnarray}
where the jacobian $J$ accounts for the coordinate transformation we
have performed. We notice that the metric measure basically reduces to an
integral over the moduli space\footnote{To be more precise, the metric measure reduces to an integral over the Teichm\"{u}ller space, when there are no perturbative (gravitational) anomalies due to the breakdown of diffeomorphism invariance (the so called Diff$_0$ ($M$) anomalies). The large reparametrizations  are represented by the mapping class group $ {\rm Diff}^+(M) / {\rm Diff}_0(M)$ and can cause global (gravitational) anomalies. When also these global anomalies are absent, the metric measure reduces to the integral over the moduli space. In case of the bosonic string reparametrization invariance remains manifest throughout the calculations and it is clear that we should not be worried about anomalies. In case of the superstring one should pay special attention to anomaly cancellation. More information about anomaly cancellation can be found in the review paper of D'Hoker and Phong  \cite{D'Hoker:1988ta} and references therein.}. Therefore we must find suitable coordinates to parameterize
the moduli space. We have seen in chapter \ref{BWc:Riemann} that the
moduli space of a Riemann surface with genus ${\rm g} \geq 2$ is given by
an orbifold space (Teichm\"{u}ller space), a complex manifold moded
out by a non-freely acting discrete group $\Gamma_{\rm g}$ (the
mapping class group). Hence, finding suitable coordinates to cover
this orbifold space is a highly non-trivial task, but indispensable
to perform the integral over the moduli space. We shall not try to
give a review of the many techniques that have been invented to
parameterize the moduli space. Instead we refer to the review paper
of D'Hoker and Phong \cite{D'Hoker:1988ta}, and references therein.

Once we have an appropriate parametrization of the moduli space, we
need to evaluate the jacobian $J$ in these coordinates. This implies
we should calculate various determinants with respect to these
coordinates on the moduli space. Next, we must also
perform the integral over the string embedding functions $X^\mu$.
One notices that the integral reduces to the following form,
\begin{eqnarray}
\int \BWD X\, \BWe^{-S[X, \hat g]} \sim \det \,'(\nabla)^{-d/2}\,
\BWe^{\int\BWd^2 \sigma \BWd^2\sigma' J(\sigma) G'(\sigma,\sigma')
J(\sigma')} ,
\end{eqnarray}
which can be computed once we know the determinant of the laplacian
and the Green function $G(\sigma_1,\sigma_2)$ on the Riemann surface
of genus g, analogous to what we discussed in the case of the
Riemann sphere (${\rm g} =0$) and the torus (${\rm g} =1$). The Riemann surface of
genus ${\rm g}\geq 2$ is conformally equivalent to the upper half plane
modded out by a freely acting (discrete) group (as we know from the
Uniformization theorem). In order to calculate the Green function,
we must find appropriate coordinates covering the Riemann surface
and in which we can express the metric on the Riemann surface. Of
course, there is a close connection between the expression of the
metric on the Riemann surface and the moduli related to this metric.
Choosing suitable coordinates to parameterize the Riemann surface
can be quite helpful to find suitable coordinates to describe the
moduli space. The explicit procedure to calculate the determinant of
the laplacian and the Green function is given in chapter V of
\cite{D'Hoker:1988ta}.

In the literature, a lot of attention was paid to multiloop vacuum
amplitudes\footnote{The generalization to scattering amplitudes is
performed by introducing the appropriate vertex-operators.}. In
\cite{Belavin:1986cy} the general structure of $p$-loop amplitudes for
the closed oriented bosonic string in 26 spacetime dimensions is
investigated. Those authors are able to parameterize the moduli space using
coordinates which are related to the so-called
Beltrami-differentials. The Beltrami-differentials $\mu$ are
complex differentials, which describe the infinitesimal deviation
from a flat metric,
\begin{eqnarray}
ds^2 = \rho | dz + \mu d\bar z|^2.
\end{eqnarray}
As the number of different conformal classes $[g]$ is equal to 3g $-3$,
we can choose a basis $\mu_i$ of 3g $-3$ Beltrami-differentials ($i
\in \{1, \ldots, {\rm 3g-3} \}$). An arbitrary conformal class can
then be presented by a general Beltrami differential of the form,
\begin{eqnarray}
\mu = \sum_{i=1}^{\rm 3g - 3} y_\alpha \mu^\alpha.
\end{eqnarray}
The coordinates $y_\alpha$ parameterize the moduli space. With this
construction the authors \cite{Belavin:1986cy} are able to show that the vacuum partition
function can be written as the modulus of a holomorphic function
which only depends on the coordinates $y_i$. Furthermore, the
authors claim that any multiloop amplitude in any conformal
invariant string theory can be written in terms of purely algebraic
objects (such as holomorphic functions) on the moduli spaces of the
Riemann surfaces. In \cite{Belavin:1986tv, Moore:1986rh,
Kato:1986wj}, the two-loop vacuum amplitude was calculated explicitly
for the closed bosonic string in 26 spacetime dimensions. For genus-2 
Riemann surfaces the moduli can be represented by a special type
of matrices (period matrices), which allows to be more explicit
in the case of two-loop amplitudes. The general statement resulting
from these three papers teaches us that the two-loop vacuum
amplitude can be given in terms of a product of even
$\theta$-functions\footnote{{These are the same $\theta$-functions as in eq.\ \eqref{BWeq:ThetaFunctions}. For more on $\theta$-functions, see e.g.\ appendix 10.A of \cite{DiFrancesco:1997nk}.}}. For
more information about higher genus amplitudes we refer to the above
given references and the references therein. Other approaches to
calculate string amplitudes, such as the formulation with Faddeev-Popov ghosts or methods based on the CFT structure, can be found in the literature, e.g. \cite{Polchinski:1998rq, D'Hoker:1988ta, Friedan:1985ge} and references therein.

%%%%%%%%%%%%%%%%%%%%%%%%%%%%%%%%%%
\section{The Superstring}
In these lectures we have focused on (closed) bosonic string theory,
as it offers the simplest model to explain the problems related to
gauge fixing and the calculations of scattering amplitudes. But bosonic string theory itself raises some issues
which are not favorable from a phenomenological point of view. The
fact that the spectrum contains a tachyon indicates that the bosonic
theory is not quantized around a proper, stable vacuum. Furthermore, fermions do
not appear naturally in the spectrum of the bosonic string. So, there is no
stringy description of scattering fermions (which make up the observed matter), when we restrict to
bosonic string theory. A last phenomenological concern is the number
of spacetime dimensions, i.e. 25 spatial dimensions and one time
dimension. These three issues can actually be handled at the same
time by introducing more symmetry on the string worldsheet, namely
supersymmetry. Supersymmetry is a symmetry which relates bosonic
degrees of freedom with fermionic degrees of freedom.  The introduction 
of supersymmetry implies the existence of fermions. When
quantizing the bosonic and fermionic degrees of freedom, one can
then construct states in the spectrum that correspond to spacetime
fermions. Moreover, a consistent spectrum of the supersymmetric
string (superstring) does not contain any tachyons, as these are
truncated out by the GSO projection \cite{Gliozzi:1976qd}. The presence of fermionic
degrees of freedom on the worldsheet also alters the quantization 
of the string in such a way that the critical dimension recudes from 26 to 10 spacetime 
dimensions. 

As is well-known, five consistent superstring theories in ten-dimensional Minkowski spacetime have 
been constructed: Type I, Type IIA, Type IIB, Type HO and Type HE. 
Every string theory has its own specific spectrum, depending on the construction. 
Type I string theory contains both closed and open (unoriented) strings. 
The other superstring theories only contain (oriented) closed strings. In this 
discussion about string amplitudes we shall continue to focus on closed string 
theories. Type IIA and Type IIB do not differ that much, such that we can talk about 
them together as Type II superstring\footnote{Let us remind the reader that we are considering 
strings propagating in a Minkowski spacetime with the correct critical dimension ($D=10$). 
In that case Type II superstring theory only contains closed strings. About fifteen years ago it was realized \cite{Polchinski:1995mt} that
Type II superstrings can also consist of open strings attached to dynamical hyperplanes, called D-branes. As Type II Superstrings also contain Ramond-Ramond (RR)-fluxes in their spectrum, they should couple to RR-charged objects. The D-branes are the desired objects to which the Type II Superstring couples. This means that we can also consider Type II Superstrings in a background with D-branes. Close to the D-branes the superstrings will couple to the D-branes and their (massless) degrees of freedom will describe the field theory living on the D-branes. However, far away from the D-branes the spacetime is Minkowski and only closed strings will propagate in those regions.}. 
Type HO and Type HE have the same underlying 
construction scheme as well and we can denote them together as the Heterotic string. 
For detailed analysis of the different types of superstring theory we refer to the 
literature. We  briefly review some of the basic properties of the Type II 
superstring and the Type Heterotic string\footnote{More technical information about Type II superstring and Heterotic superstring theory will be postponed to the following section.} and discuss the relation between the 
superstring theories and low effective field theories in the next section.
    
Before we start to discuss the features of superstring amplitudes, we must 
note that there exist other formalisms to describe the superstring theories. 
In the Ramond-Neveu-Schwarz-formalism (RNS) \cite{Ramond:1971gb, Neveu:1971rx, Neveu:1971iv} one introduces supersymmetry on the worldsheet. Target-space supersymmetry is not manifest in this formalism and one should explicitly construct conserved supercharges in the target-space to show spacetime supersymmetry \cite{Friedan:1985ge}.  However, one might prefer spacetime supersymmetry to be manifest. The Green-Schwarz-superstring \cite{Green:1983wt} offers a formalism in which the target space supersymmetry is manifest. On the worldsheet we have the usual local conformal symmetries supplemented with a local fermionic symmetry, so-called $\kappa$-symmetry. Due to this local $\kappa$-symmetry quantizing the Green-Schwarz-
superstring becomes problematic by itself. One can use the local symmetries to make appropriate gauge choices to eliminate the gauge degrees of freedom and perform a consistent quantization. This can only be done in light-cone gauge, by which we have to give up manifest Lorentz-invariance.  Another way to get rid off the $\kappa$-symmetry is to couple additional ghost-fields to the Green-Schwarz-string which break the $\kappa$-symmetry. In ten-dimensional Minkowski spacetime this can certainly be done. The ghost-fields have to satisfy additional constraints, which make them pure spinors, hence the name pure spinor formalism \cite{Berkovits:2000fe}. As it is our intention to briefly discuss superstring amplitudes, it is easiest to generalize the techniques we have set out for the bosonic string for the purpose of superstrings. The RNS-formalism is most suitable to perform this generalization. Nonetheless, string amplitudes can also be calculated in the other two formalisms.

Without further ado let us focus now on the RNS-formalism to describe the superstring.
The supersymmetry on the worldsheet can be made manifest using superspace\footnote{A complete dissertation about the superspace formulation of the RNS superstring can be found in \cite{D'Hoker:1988ta}.}.
This implies we should introduce anti-commuting coordinates ($\theta, \bar \theta$) 
on the string worldsheet. The superstring worldsheet is then seen as a super-Riemann surface. 
%The field theory living on this super-Riemann surface is an N=1 supergravity theory, given %by a supergravity multiplet containing a frame ${E_M}^a$ (also called superzweibein) and a %connection $\Omega_M$ (superconnection), which allows us to define a supercovariant %derivative. To fully specify the super-geometry one should impose additional constraints (on %the torsion and the curvature), which relate $\Omega$ and ${E_M}^a$.
The degrees of freedom living on the super-Riemann surface are the scalar functions $X$ (string embedding functions), spin 1/2 fields $\psi$ (as superpartners of the embedding functions), the worldsheet metric $g_{ab}$ and a spin 3/2 field representing the gravitino $\xi$ (superpartner of the metric)\footnote{There might also be auxiliary fields in order for the supersymmetry algebra to close off-shell. If one imposes the equations of motion, the auxiliary fields will not appear in the component action. We will therefore pay no attention to them.}. The fields $X$ and $\psi$ can be taken together in a superfield $\Phi$, while the metric and the gravitino are combined into a superzweibein ${E_M}^A$ and a super-connection $\Omega_M$, with which one can define a covariant super-derivative $\BWID$. The partition function\footnote{To fully specify the super-geometry of a super-Riemann surface one should also impose additional constraints on the super-torsion and super-curvature. As a consequence of these constraints, the super-connection can be given in terms of the superzweibein. The constraints itself also need to be introduced in the partition function through a delta-function. The torsion constraints complicate the integration over the super-geometries. We shall not go in more detail here and assume that the constraints have been solved such that we are working only with the independent degrees of freedom. } at g loops can then be written (formally) as,
%\begin{eqnarray}
%Z_{\rm g}^{\rm super} = \int \BWD {E_M}^A\,  \BWD \Omega_M\,  \BWD \Phi\,  e^{- {\cal S}_{\rm matter}[\Phi, {E_M}^A]},
%\end{eqnarray}
\begin{eqnarray}
Z_{\rm g}^{\rm super} = \int \BWD {E_M}^A\,  \BWD \Omega_M\,  \BWD \Phi\,  \BWe^{- {\cal S}[\Phi, {E_M}^A]},
\end{eqnarray}
with the action given by,
%\begin{eqnarray}
%{\cal S}_{\rm matter}[\Phi, {E_M}^A] = \frac{1}{4 \pi \alpha'} \int d^2 z d\theta d\bar \theta\,  E \, \BWID^a \Phi^\mu \BWID_a \Phi_\mu + \frac{\lambda}{4 \pi} \int d^2 z \sqrt{g} R.  
%\end{eqnarray}
\begin{eqnarray}
{\cal S}[\Phi, {E_M}^A] = \frac{1}{4 \pi \alpha'} \int d^2 z d\theta d\bar \theta\,  E \, \BWID^a \Phi^\mu \BWID_a \Phi_\mu + \frac{\lambda}{4 \pi} \int d^2 z \sqrt{g} R.  
\end{eqnarray}
We introduced the notation $E$ as the superdeterminant of ${E_M}^A$. The full partition function is then given by summing over 
all possible topologies of the super-Riemann surface,
\begin{eqnarray}
Z^{\rm super} = \sum_{{\rm g} = 0}^{\infty} Z_{\rm g}^{\rm super}.
\end{eqnarray}
Let us first comment about the integration over the superfield $\Phi$, by which we integrate over the bosonic fields $X$ as well as over the fermionic fields $\psi$. Here a first ambiguity arises due to possible spin structures one can define on a Riemann surface. The parallel transport of a spinor along a closed curve renders the same spinor up to a phase shift. Spinors can therefore only be defined on manifolds for which there is no topological obstruction to make a consistent phase choice. The possible topological obstruction is represented by the second Stiefel-Whitney class. Luckily, for oriented surfaces the second Stiefel-Whitney class is zero and we can make two consistent choices for a phase shift. The spin structure with phase shift $\BWe^{i \pi}$ are called odd, while the spin structure with phase shift $\BWe^{i 0}$ is labelled even. Essentially the spin structure represents the periodicity relation of a spinor when transporting the spinor along a closed curve. Let us take the torus as an example, where the torus is defined by the relations $z\sim z +1$ and $z \sim z + \tau$. Then for each periodic direction the spinor can be periodic (odd $\BWe^{i \pi}$) or anti-periodic (even $\BWe^{i 0}$),
\begin{eqnarray}
\begin{array}{ll}
\psi(z+1) = - \psi (z)\, \BWe^{i \pi}, & \psi(z + 1) = - \psi (z)\, \BWe^{i 0} ,\\
\psi(z+\tau) = - \psi (z)\, \BWe^{i \pi}, & \psi(z + \tau) = - \psi (z)\, \BWe^{i 0}.
\end{array}
\end{eqnarray} 
Hence, we have four different spin structures in case of the torus: odd-odd, even-odd, odd-even and even-even.
When we consider  a Riemann surface with genus g, then there exist 2g non-trivial closed curves (cycles). Thus on a Riemann surface with genus g there exist $2^{\rm 2 g}$ spin structures. When integrating over the fermionic fields, one must sum over all spin structures as well. The treatment of even spin structures is different from the one of odd spin structures, since odd spin structures also give rise to zero mode contributions of the Dirac-operator. Both for even and odd structures, the functional integrals give rise to (different) functional determinants, that can be computed if we know the eigenfunctions and eigenvalues of the laplacian and Dirac-operator on the Riemann surface.

Let us now focus on the integration over the metric and the gravitino $\xi$. Similar to the metric in bosonic string theory, the (independent components of the) superzweibein (are) is determined up to local symmetries: reparametrizations, local supersymmetry, Weyl-transformations and super-Weyl transformations. This means that we should make sure not to overcount, when integrating over the superzweibein. Similar to the bosonic case, we should make an appropriate choice for a gauge slice to count every superzweibein only once. We are thus led to the construction of superconformal classes for the superzweibein. When integrating over the superzweibein, we should only count the representative of the superconformal class. The number of  superconformal classes depends on the super-genus of the super-Riemann surface. One can smoothly deform a representative superzweibein to end up in another inequivalent superconformal class. These deformations form the supermoduli space of the super-Riemann surface. The superzweibein (or better the representative of the superconformal class) itself will have an additional invariance under transformations generated by super conformal Killing vectors, which differ according to the genus of the super-Riemann surface.  The variation of the superzweibein can be split into three different pieces\footnote{To be correct, we should also take into account a local $U(1)$ symmetry. However in order not to overload our dissertation with technical details, we will omit this additional local symmetry.},
\begin{eqnarray}
\delta {E_M}^A = \{ {\rm super-diffeo} \} \oplus \{ {\rm super-Weyl}\}  \oplus \{ {\rm super-moduli}\}.
\end{eqnarray}      
% \{ {\rm super-Killing}\}   
While other techniques (such as the BRST-formulation \cite{Verlinde:1987sd}) were much more fruitful in the eighties to discuss the superstring amplitudes, we will stick with the strategy to find an appropriate gauge slice, analogous to our treatment of the bosonic closed string. Hence the calculation of a superstring scattering amplitude reduces to finding an appropriate gauge slice and appropriate coordinates for the supermoduli space. However, pinning down an appropriate gauge slice and appropriate coordinates for the supermoduli space  are more complicated for the superstring, especially when the superstring lives on a Riemann surface with genus 2 or higher. Many proposals have been put forward to tackle the problems of finding an appropriate gauge slice and appropriate coordinates for supermoduli space. And for a long time it remained unclear whether it was possible to find such a gauge slice and whether the resulting amplitudes reveal properties like holomorphicity, modular invariance, etc. It is definitely beyond the reach of these lectures to overview the different proposals to overcome the ambiguities related to this method of gauge-fixing. In recent times D'Hoker and Phong (see e.g. \cite{D'Hoker:2001nj, D'Hoker:2002gw}) proposed a technique to consistently construct an appropriate gauge for Riemann surfaces with genus equal to or higher than 2. The technique works both for vacuum amplitudes and for $n$-point amplitudes.      

Let us for one moment forget the difficulties related to the gauge-fixing and not dwell over the different approaches to calculate the
superstring amplitudes. Then we are able to give a brief overview of some of the superstring amplitudes that have been calculated in the literature. The zeroth and first order amplitudes for the Type II Superstring are worked out explicitly in e.g. \cite{D'Hoker:1988ta} using the covariant formulation of the Polyakov string (and the BRST-calculation). These results are similar to the original results of Green and Schwarz \cite{Green:1981yb}, where they used the operator formalism. The zero, one and two point functions on the (super-)Riemann sphere vanish due to superconformal invariance. Analogous to the conformal Killing group in bosonic string theory, the super-conformal Killing group has an infinite volume. Hence, the amplitude vanishes unless we can fix at least three vertex operators, in which case the freedom of the super-conformal Killing vectors is completely fixed. 
%Once we go to higher orders (g$\geq$1) we notice that the (super-)Riemann surfaces contain non-trivial loops (cycles) and that we %should take the different types of spin structure into account. The torus contains two cycles and thus there are four spin structures, %which can be divided into three even (even-even, even-odd, odd-even) and one odd (odd-odd) spin structure, depending on the %periodicity or anti-periodicity of the worldsheet spinors. 
Remarkably one can argue and show that the first non-trivial one loop amplitude appears for six external legs. Or equivalently, zero, one, two, three, four and five point amplitudes on the torus vanish. For the Heterotic Superstring the zeroth and first order amplitudes are calculated by the original creators in \cite{Gross:1985rr}, in to the operator formalism. It was shown that up to one-loop the partition function vanishes, indicating that the cosmological constant for the heterotic theories vanishes, similar to the Type II Superstring.

%%%%%%%%%%%%%%%%%%%%%%%%%%%%%%%%
\section{String Theory as a Low Energy Effective Field Theory}
In the discussion of four-point tachyon scattering, given in subsection \ref{BWss:>3_Vertex_Ops}, we noticed that the Virasoro-Shapiro amplitude exhibits poles for certain values of $s$ (for a fixed $t$). A closer look at these poles tells us that the poles exactly correspond to the mass states of the closed string. At such a pole we can imagine the amplitude to correspond to a field theory diagram with two cubic vertices and a propagator of the form,
\begin{eqnarray}
\frac{t^{2n}}{s-M_n^2}.
\end{eqnarray}
%The full string amplitude is obtained by summing these tree-level amplitudes over all $n$. When we want to reconstruct this amplitude from a quantum field theory, we should have a closer look at the cubic vertices. We can have for instance a vertex with three tachyons ($T$), which corresponds to an interaction term of the form $\kappa_1 T^3$. At the massless level, the string spectrum contains a graviton (symmetric, traceless tensor), a two-form (anti-symmetric tensor) and a dilaton. The vertex with two tachyonic legs and one graviton (or two-form) can be represented by $s^{\mu \nu} \partial_\mu T \partial_\nu T$ ($a^{\mu \nu} \partial_\mu T \partial_\nu T$) in position space. The vertex with two tachyons and one dilaton ($\phi$) has a similar form as the one with three tachyons, namely $\kappa_2\, \phi\, T^2$. Taking all these terms together with the kinetic part of the tachyon, the graviton, the two-form and the dilaton, the action takes the form,
%\begin{eqnarray}
%{\cal S}& =& {\cal S^{\rm kin}_{\rm grav}} + {\cal S^{\rm kin}_{\rm two-form}} +{\cal S^{\rm kin}_{\rm dilaton}} + \int d^{26}x \Big(\frac{1}{2}\partial^\mu T \partial_\mu T  + \frac{4}{\alpha'} T^2 + \kappa_1 T^3 \nonumber\\ && \qquad+ \kappa_2\, s^{\mu \nu} \partial_\mu T \partial_\nu T + \kappa_2\, a^{\mu \nu} \partial_\mu T \partial_\nu T + \kappa_2\, \phi T^2 + \ldots   \Big)
%\end{eqnarray}
The full string amplitude is obtained by summing these tree-level amplitudes over all $n$. When we want to reconstruct this amplitude from a quantum field theory, we should have a closer look at the cubic vertices. Usually\footnote{A complementary strategy consists of considering the non-linear $\sigma$-model corresponding to a string moving in a general background of coherent superpositions of massless string modes. By requiring conformal invariance for the two-dimensional $\sigma$-model one can obtain the equations of motion for the background fields. In a final step one constructs the action whose variation will yield the same equations of motion.} the corresponding low energy field theory of the massless string modes is constructed by reproducing the three-point string amplitudes from cubic vertices in field theory. Let us start by first considering the tachyonic mode as a scalar field $T$. We can have for instance a vertex with three tachyons, which corresponds to an interaction term of the form $\kappa_1 T^3$. At the massless level, the string spectrum contains a graviton (symmetric, traceless tensor) and a dilaton. The vertex with two tachyonic legs and one graviton can be represented by $s^{\mu \nu} \partial_\mu T \partial_\nu T$ in position space. The vertex with two tachyons and one dilaton ($\phi$) has a similar form as the one with three tachyons, namely $\kappa_2\, \phi\, T^2$. Taking all these terms together with the kinetic part of the tachyon, the graviton, the two-form and the dilaton, the action takes the form,
\begin{eqnarray}
{\cal S}& =& {\cal S^{\rm kin}_{\rm grav}} + {\cal S^{\rm kin}_{\rm two-form}} +{\cal S^{\rm kin}_{\rm dilaton}} + \int d^{26}x  \Big(\frac{1}{2}\partial^\mu T \partial_\mu T  + \frac{4}{\alpha'} T^2 + \kappa_1 T^3 \nonumber\\ && \qquad+ \kappa_2\, s^{\mu \nu} \partial_\mu T \partial_\nu T + \kappa_3\, \phi T^2 + \ldots   \Big)
\end{eqnarray}
To obtain a field theory starting from the string scattering amplitudes, we considered a bottom-up approach by looking at possible cubic vertices to reconstruct the scattering amplitudes. We have considered here standard couplings in field theory, but we should just as well consider less obvious couplings like a momentum-dependent 3-tachyon vertex $ T\, \partial^\mu T\, \partial_\mu T $ or couplings to the anti-symmetric tensor. Furthermore, as the string spectrum also contains massive states, we should also take these states into account and write down couplings to these states. All possible omitted terms are represented in the action by ``$\ldots$''. The different coupling constants $\kappa_1$, $\kappa_2$, $\kappa_3$, ... can be related to the string coupling constant $g_s$ and the $\alpha '$ parameter (and possibly other factor) by matching the field theory amplitudes with the string theory amplitudes.  

We leave the tachyon for now and shift our attention to the massless states in the string spectrum. Thus far we have only considered amplitudes without external legs (vacuum amplitudes) or with tachyons as external legs. Evaluating scattering amplitudes with massless string modes  (such as gravitons, anti-symmetric tensor fields, dilatons, etc) as external legs is an equally crucial step to make contact with field theory. Reconstructing the low energy field theory from a string theory corresponds to taking $\alpha' \rightarrow 0$. In this limit the massive string states become very heavy and we are left with the massless string states. The low energy field theory derived from a string theory thus corresponds to the field theory of the massless coherent string modes. 

For the 26-dimensional closed bosonic string theory the massless string modes correspond to a graviton-like field, an anti-symmetric tensor field and a dilaton. The graviton, anti-symmetric tensor field (two-form) and dilaton can be represented in terms of vertex operators of the form $\partial X \bar \partial X e^{i k .X}$, where the traceless symmetric part of this expression corresponds to the graviton, the anti-symmetric part to the two-form and the trace part to the dilaton. The amplitudes we obtain using the massless excitations of the string also resemble amplitudes characteristic for a quantum field theory. When we try to recover these amplitudes from a field theory action, we will also need the kinetic terms for the graviton, two-form and dilaton (${\cal S^{\rm kin}_{\rm grav}}$, ${\cal S^{\rm kin}_{\rm two-form}} $, ${\cal S^{\rm kin}_{\rm dilaton}}$). The field theory action from which we can obtain the scattering amplitudes for the massless string states, corresponds to the action of a gravity theory coupled to a two-form and a dilaton,
\begin{eqnarray}
{\cal S} \sim \int d^{26} x \sqrt{- G}\, \BWe^{-2 \phi} \left(R - \frac{1}{12} H_{\mu \nu \lambda} H^{\mu \nu \lambda} + 4 \partial^\mu \phi\, \partial_\mu \phi   \right).
\end{eqnarray}
In this action $R$ represents the Ricci-scalar derived from the metric $G_{\mu \nu}$, and $G$ represents $\det(G_{\mu \nu})$. We introduced the three-form $H_{\mu \nu \lambda}$, which can be deduced from the two-form $B_{\mu \nu}$,
\begin{eqnarray}
H_{\mu \nu \lambda} = \partial_\mu B_{\nu \lambda} + \partial_\nu B_{\lambda \mu} + \partial_\lambda B_{\mu \nu}.
\end{eqnarray}
This action is written down in the string frame. It means that one should perform a field redefinition of the metric to obtain the Einstein metric and the usual form of Einstein-Hilbert action. 
%When performing such a field redefinition we can link the constant $\kappa$ to the Newton-constant,
%\begin{eqnarray}
%\kappa = \sqrt{8 \pi G_N}
%\end{eqnarray}
The string amplitudes for e.g. the gravitons can indeed be obtained from this action by expanding the metric as $G_{\mu \nu} = \eta_{\mu \nu} + 2 \kappa e_{\mu \nu} \BWe^{i k.x}$, where $e_{\mu \nu}$ represents the polarization of the graviton.
\\*

We can also have a quick look at superstring theory and its relation to quantum field theory. Type II superstrings are constructed by introducing fermionic degrees of freedom ($\psi^\mu$) in both decoupled left- and right-moving sectors. One can still choose different periodicity conditions in left- and right-moving sector for the fermions, yielding two distinct theories, namely the non-chiral Type IIA (fermions in left-and right-moving sector have a different chirality) and Type IIB (fermions in left-and right-moving sector have the same chirality). The bosonic and fermionic degrees of freedom are related to each other through supersymmetry, which should be made local when the worldsheet metric is no longer flat. Type II superstrings can therefore be described by an ${\cal N}$ = 1 two-dimensional supergravity, consisting of a matter multiplet ($X^\mu$ and $\psi^\mu$) and a gravity multiplet (zweibein $e_m^a$ and a spin 3/2 gravitino field $\chi_m$). There is however enough local symmetry to gauge away the gravity multiplet and to obtain a simplified action in terms of the matter multiplet. The equations of motion for the zweibein and the gravitino lead to additional constraints, which eliminate even more degrees of freedom. An appropriate treatment of these constraints upon quantization yields the Hilbert space for the closed superstring, which consists of products of independently left- and right-moving string excitations in a ten-dimensional Minkowski spacetime. A consistent quantization also requires the GSO projection \cite{Gliozzi:1976qd} which eliminates certain states (e.g. tachyon) from the Hilbert space and truncates the Hilbert space to a consistent physical spectrum. 
%\begin{eqnarray}
%{\cal S}_{\rm matter} &=&\frac{1}{2 \pi \alpha ' } \int_{\BWworldsheet} d^2 z  \sqrt{g} \Big( \frac{1}{2} g^{mn} \partial_m X^\mu 
%\partial_n X_\mu + \psi^\mu \gamma^m \partial_m \psi_\mu - \psi^\mu \gamma^a \gamma^m \chi_a \partial_a X_\mu \nonumber\\
%&& -\frac{1}{4} \psi^\mu \gamma^a \gamma^b \chi_a \chi_b \psi_\mu \Big) \, + \lambda\, \chi(\BWworldsheet)
%\end{eqnarray}

Looking at the massless level of Type II superstrings, we can distinguish two different bosonic sectors: the so-called NS-NS and RR sectors. Both for IIA and IIB the NS-NS sector consists of a graviton-like field, an anti-symmetric tensorfield (two-form) and a dilaton. For both Type II superstrings the RR sector consists of $p$-forms, but the type of p-forms appearing differs. For Type IIA we encounter a one-form and a three-form, for Type IIB a zero-form, a two-form (different from the one in the NS-NS sector) and a four-form. The form of the bosonic action can be split into three parts,
\begin{eqnarray}
{\cal S}_{\rm bosonic} = {\cal S}_{\rm NS} + {\cal S}_{\rm R} + {\cal S}_{\rm CS},
\end{eqnarray}
where ${\cal S}_{\rm NS}$ describes the coupling between the graviton, the NS-NS two-form and the dilaton,
\begin{eqnarray}
{\cal S}_{\rm NS}= \frac{1}{2 \kappa^2} \int d^{10}x \sqrt{-G} \BWe^{-2 \phi} \left( R -\frac{1}{12} H_{\mu \nu \lambda} H^{\mu \nu \lambda}  + 4 \partial^\mu \phi \partial_\mu \phi  \right).
\end{eqnarray}
$ {\cal S}_{\rm R}$ describes the coupling of the RR-forms to the graviton and ${\cal S}_{\rm CS}$ represents the Chern-Simon action. These last two terms depend on the type of superstring theory. $\kappa$ represents the ten-dimensional gravitational coupling constant. We shall not go into further detail. More information about type II Supergravity can be found in \cite{Polchinski:1998rr} and references therein.
%For Type IIA we obtain,
%\begin{eqnarray}
%{\cal S}_{\rm R}^{\rm IIA} = -\frac{1}{4 \kappa^2} \int d^{10}x \sqrt{-G} \left( F_{\mu \nu} F^{\mu \nu} \right)
%\end{eqnarray}

The Heterotic (super)string \cite{Gross:1985fr} is constructed in a different manner. The heterotic string can be seen as a hybrid string theory consisting of one chiral half of  the Type II string and one half of closed bosonic string theory compactified on a 16-dimensional torus. The right-moving sector of the Heterotic string consists of bosonic degrees of freedom $X_R^\mu (\tau - \sigma)$ and fermionic degrees of freedom $\psi^\mu (\tau - \sigma)$. The left-moving sector then consists of bosonic degrees of freedom which are divided into two sets: target space coordinates $X_L^\mu (\tau + \sigma)$ and internal coordinates $X_L^I (\tau + \sigma)$. $X^\mu_R$ and $X_L^\mu$ combine to the target space coordinates $X^\mu$, where $\mu \in \{ 0, \dots, 9\}$, such that it describes the position of the string in the ten-dimensional Minkowski spacetime. The right-moving bosonic and fermionic degrees of freedom are related to each other by supersymmetry. It is possible to consider such a construction as the left-moving and right-moving modes decouple for a closed string theory. The only constraints, which relates the right- and left-moving degrees of freedom, are found in the mass-relation of the spectrum and the origin-relation of the closed string. The first relation describes the mass of the states in terms of the excitation modes of the string. The latter relation expresses that it is impossible to choose a specific origin on a closed string, as all points on the closed string are indistinguishable. It is this relation which prohibits the presence of a tachyon and therefore also the Heterotic string is tachyon-free.  

The additional left-moving 16 dimensions $X_L^I$, $I \in \{1, \ldots, 16 \}$, are considered to parameterize an internal compact space. This implies that the momenta are quantized and form a tower of Kaluza-Klein momenta ($m_I/R_I$). The compact space itself is seen as a sixteen-dimensional torus, represented as $\BWIR^{16}/\Gamma$, around which strings can also be winded. The winding of the strings results in a quantized winding number ($n_I R_I$). The quantized number $m_I$ and $n_I$ allow to construct more massless states with respect to the uncompactified version, when all the radii $R_I$ take the special value $R_I = R =\sqrt{\alpha '}$. In that case we can use some of the oscillators in $X^I_L$,
\begin{eqnarray}
X^I_L (\tau + \sigma) = x^I + p^I (\tau + \sigma) + {\rm oscillators},
\end{eqnarray}
as well to construct massless vector bosons in the adjoint representation of a Lie-group $G$. The quantized momenta $p^I$ take value on a lattice $\Gamma$, which coincides with the root lattice of the Lie-group. The lattice itself is not arbitrary, but specified by constraints. The first constraint arises as an allowed momentum should coincide with an allowed winding number, which defines the dual lattice $\tilde \Gamma$. A second constraint arises due to the origin-relation we discussed above, and reads that the allowed momenta should be even integers. In summary, the lattice is a sixteen-dimensional even self-dual lattice. In sixteen dimensions there exist only two lattices, which are even and self-dual, namely $\Gamma_8 \times \Gamma_8$ and $\Gamma_{16}$. The lattice $\Gamma_8$ corresponds to the root lattice of $E_8$ and thus we obtain a first Heterotic string theory for which the massless vector bosons are in the adjoint representation of $E_8\times E_8$. This Heterotic string theory is called Type HE. The other lattice, $\Gamma_{16}$ represents the root lattice of $Spin(32)/\BWIZ_2$. Hence, we find a second Heterotic string theory, called Type HO, which is linked to the group $Spin(32)/\BWIZ_2$. A complementary argument \cite{Gross:1985rr} for these two groups can be found by studying the vacuum partition function for the Heterotic string. Imposing modular invariance of the vacuum partition function constrains the 16-dimensional lattice to be even and self-dual. We thus find again the groups $E_8\times E_8$ and $Spin(32)/\BWIZ_2$.

From the spacetime perspective the Heterotic string is a ten-dimensional Lorentz-invariant, $N=1$ supersymmetric theory. The spectrum of the heterotic string does not contain tachyons. Thus the first states in the spectrum are the massless string modes, which consist of a ten-dimensional $N=1$ supergravity multiplet and ten-dimensional $N=1$ super-Yang-Mills gauge bosons with fermionic superpartners. The supergravity action to start with is given by the Chapline-Manton action, where ten-dimensional $N=1$ supergravity is coupled to $N=1$ Yang-Mills and an anti-symmetric tensor field strength $H_{\mu \nu \lambda}$.  However to reproduce \cite{Gross:1985rr} all trilinear couplings we should also include gauge and Lorentz Chern-Simons terms in the field strength $H_{\mu \nu \lambda}$. The full bosonic action \cite{Gross:1985rr} is therefore given by,
\begin{eqnarray}
{\cal S}_{\rm bosonic}&=& \frac{1}{2 \kappa^2} \int d^{10}x \sqrt{-G} \BWe^{-2 \phi} \left(R - \frac{1}{12} H_{\mu \nu \lambda} H^{\mu \nu \lambda}  + 4 \partial^\mu \phi \partial_\mu \phi \right.\\
&& \left. -\frac{1}{4}Tr(F_{\mu \nu} F^{\mu \nu})  \right),
\end{eqnarray}
with,
\begin{eqnarray}
H_{\mu \nu \rho} &=&  \Big[ \partial_\mu B_{\nu \rho}  + \frac{1}{4} tr \left( A_\mu F_{\nu \rho} - \frac{1}{3} A_\mu A_\nu A_\rho  \right) \nonumber\\
&& - \frac{1}{4} tr\left( \omega_\mu R_{\nu \rho} - \frac{1}{3} \omega_\mu \omega_\nu \omega_\rho \right) + {\rm cyclic\, permutations}\Big].
\end{eqnarray}
Anomaly cancellation requires the additional terms in the $H$-field and constrains the gauge field $A_\mu$ and spin connection $\omega_\mu$ to be representations of the group $SO(32)$ or $E_8 \times E_8$. $F_{\mu \nu}$ is the field strength derived from the gauge field $A_\mu$, while $R_{\mu \nu}$ is the field strength derived from the spin connection $\omega_\mu$.

%\begin{eqnarray}
%{\cal S}_{\rm bosonic}&=& \int d^{10}x \, e \Big\{ -\frac{1}{2 \kappa^2} R -\frac{1}{8 g_{10}^2} \BWe^{-\kappa \phi/ \sqrt{2}} \left(tr F_{\mu \nu} F^{\mu \nu} - tr R_{\mu \nu} R^{\mu \nu} \right)\nonumber\\
%&&\qquad - \frac{1}{2} \partial_\mu \phi \partial^\mu \phi  -\frac{3 \kappa^2}{2 g_{10}^4} \BWe^{-\kappa \phi \sqrt{2}} H_{\mu \nu \rho} H^{\mu \nu \rho} \Big\},
%\end{eqnarray}  
%with,
%\begin{eqnarray}
%H_{\mu \nu \rho} &=& \frac{1}{3} \Big[ \partial_\mu B_{\nu \rho}  - \frac{1}{4} tr \left( A_\mu F_{\nu \rho} - \frac{2}{3} A_\mu A_\nu A_\rho  \right) \nonumber\\
%&& + \frac{1}{4} tr\left( \omega_\mu R_{\nu \rho} - \frac{2}{3} \omega_\mu \omega_\nu \omega_\rho \right) + {\rm cyclic\, permutations}\Big].
%\end{eqnarray}

% In this way the heterotic string also lives in ten spacetime dimensions. It is described by bosonic degrees of freedom , lefthanded %worldsheet spinors () and a number of degrees of freedom representing internal degrees of freedom (which can be chosen to be %bosonic or fermionic). 

%%%%%%%%%%%%%%%%%%%%%%%%%%%%%%%%
\section{Divergences in String Theory}
We have indicated in the previous section how scattering strings resemble and differ from their point particle equivalent in field theory. Most field theories should be seen as effective field theories, describing interactions between low energy modes, and ought to be altered at higher energies\footnote{It is unclear to us whether field theory remains a sensible formalism up to the Planck scale. As the importance of gravity increases at high energies we imagine that the usual spacetime description in terms of a metric becomes invalid. Adding fields and interactions would not help at that point and a new formalism to describe (quantum) gravity is required.} (by introducing additional interactions and possibly additional fields). In string theory however, we are led to believe that we integrate over  all possible energy modes of the string. Adding additional interactions to the string action breaks local symmetries on the worldsheet and adding additional fields changes the string theory drastically. Due to the huge amount of symmetries there is however a good chance that string scattering amplitudes yield finite and sensible results without having to resort to extra fields or interactions. 
%After all the machinery we introduced to calculate the string amplitudes, 
%With all the machinery from the previous chapters, 
Therefore, we would like to take a brief moment to discuss possible divergences arising in the string amplitudes. Following the literature we  spend most of our time to the vacuum partition function, as these form a good indicator when infinities might appear or vanish. There are two possible ways to discuss the finiteness of a string theory. A first method consists of calculating the string amplitudes order by order in the genus-expansion and checking where possible divergences might appear. The second method tries to encapsulate the general form of the scattering amplitude at any order and to discuss the vanishing of possible divergences using arguments like modular invariance and holomorphic properties of the amplitudes.

Let us first see how these methods can teach us something more about the bosonic string theory in 26 spacetime dimensions. We have seen that at zeroth order ( the Riemann sphere) the vacuum partition function vanishes. At first order (the torus) the vacuum 
partition function reduces to an integral over the complex upper  half plane. But due to modular invariance this integral can be even more reduced to the integral over a fundamental domain, thus evading the dangerous point $Im (\tau) \rightarrow 0$. However we are not yet out of the danger zones. Another perilous point is $Im (\tau ) \rightarrow \infty$, which corresponds to an infinitely long torus. In that case the Dedekind $\eta$-function, which was introduced in chapter \ref{BWc:Amplitudes}, behaves as,
\begin{eqnarray}
| \eta (\tau)  |^{-48} \sim \BWe^{4 \pi Im (\tau)} + \ldots\, .
\end{eqnarray}
% $e^{- \pi \alpha' \, m^2 Im (\tau)}$
This divergence arises due to the presence of the tachyon\footnote{We can compare this exponential power to its field theory equivalent  $e^{- m^2 l/2}$, where $l$ is the modulus of the loop in which the particle with mass $m$ is running. It is clear that the negative mass of the tachyon renders the divergent term.}. We notice that the asymptotic  behavior is controlled by the lightest string states and that there are no regions in the moduli space of the torus which lead to UV divergences (i.e. the very massive string states do not cause divergent terms). We expect that the tachyon divergence\footnote{As always a tachyon indicates the expansion around a wrong vacuum background. Minkowski spacetime is not a solution to the bosonic string equations of motion.} disappears in superstring theories and that this divergence is therefore an artifact of bosonic string theory. Even without the divergence due to the tachyon, it is already remarkable that the cosmological constant does not vanish at first order.  The question then rises whether there exist other types of amplitudes which vanish at tree-level, but have a non-zero value at first order. Our attention is immediately drawn to the dilaton tadpole diagram given by,
\begin{eqnarray}
\langle 0 | \phi  | 0 \rangle \sim \sum_{{\rm g}=0}^{\infty} \int  \frac{\BWD X \BWD g}{\cal N}\, \BWe^ {-S_X -\lambda \chi}  \int d^2 \sigma \sqrt{g}\,   (\partial X )^2.
\end{eqnarray}
At zeroth order (sphere) this amplitude vanishes as it can be seen as a one-point function on the sphere. The conformal Killing group is infinite and one inserted vertex operator does not fix this finite group completely. At first order (torus) the dilaton tadpole diagram does not vanish. This process then corresponds to an on-mass-shell scalar particle being absorbed by the vacuum. The fact that $\langle 0 | \phi | 0 \rangle \neq 0$ indicates a vacuum instability\footnote{It was pointed out \cite{Fischler:1986ci, Fischler:1986tb} that the non-vanishing dilaton tadpoles implies an expansion around the wrong background.  Instead one should use a de Sitter-vacuum as exact string background, which agrees with the fact that the cosmological constant does not vanish in bosonic string theory. One can then show that when the value of the cosmological constant is equal to the non-vanishing value of the dilaton tadpole, de Sitter spacetime is a solution of the string equations of motion up to first order in $\alpha '$.} for the bosonic string theory.

 The second order for the vacuum amplitude and the dilaton tadpole was investigated explicitly in e.g.  \cite{Belavin:1986tv, Moore:1986rh, Kato:1986wj}. It was realized that two types of divergences can arise: a divergence due to the presence of tachyons and another divergence due to the presence of a dilaton. These divergences are mathematically related to the integration over regions of the moduli space where the Riemann surfaces degenerate. The degeneration of Riemann surfaces is geometrically interpreted as non-trivial closed loops that are pinched down to zero size. Physically we obtain a non-vanishing cosmological constant and a non-vanishing dilaton tadpole. 
 
The moduli spaces for Riemann surfaces with genus ${\rm g} \geq 3$ are far less understood. Hence, higher loop calculations become less straightforward and we are forced to employ the second method. This was done in e.g. \cite{Gava:1986ei}. There it was realized that also for arbitrary genus Riemann surfaces the divergences arise due to the presence of the dilaton tadpole (one-point function of the dilaton) and due to the presence of the tachyon. It was pointed out that from a mathematical point of view the divergences arise when we consider particular boundary regions of the moduli space. These boundaries of the moduli space correspond to the shrinking of one of the moduli and thus to the shrinking of one of the non-trivial closed loops. From the second method we can conclude as well that the cosmological constant and the dilaton tadpole do not vanish.    
\\*

The tachyon divergence from bosonic string theory is not supposed to appear in superstring theory, as the tachyon is projected out. Therefore, we can pay all our attention to the vacuum amplitude and the dilaton tadpole amplitude and ask ouselves whether they vanish or not. If they both vanish (to all orders), we can conclude that Minkowski spacetime is an appropriate and exact background for superstring theory. Although calculating superstring amplitudes is a highly non-trivial task, it did not prevent the Verlinde twins to come up with a formula that allows to investigate the divergences in superstring theory. In \cite{Verlinde:1987sd} they use the BRST-formalism to write down the superstring multiloop amplitudes in terms of theta functions. Their analysis reveals that the partition function can be expressed as a total derivative on the moduli space. This immediately implies that possible divergences in superstring theory are due to contributions from the boundary of the moduli space. Similar to the bosonic string, the boundary of the moduli space corresponds to those Riemann surfaces with genus $\rm g$ that break up into two Riemann surfaces with lower genus, or Riemann surfaces with genus ${\rm g} -1$ to which an infinitely long handle is attached (dilaton tadpole). For Type II Superstring theory this boundary term was shown to vanish for genus two surfaces in \cite{Atick:1987rk}. The same authors showed in \cite{Atick:1987zt} that the boundary term vanishes for heterotic theories order by order using an inductive reasoning. For superstrings one can relate the g-loop vacuum amplitude to the one-loop vacuum amplitude. Since the one-loop vacuum amplitude vanishes so does the g-loop vacuum amplitude.

As superstring amplitudes are harder to calculate, the second method to deal with divergences is often more illuminating as it evades difficulties with calculations  and emphasizes the general structure and properties of the amplitudes. In \cite{Martinec:1986wa} general considerations about the supersymmetric properties of the superstring theories lead to the conclusion that the vacuum amplitude vanishes and that the dilaton tadpole vanishes as well. %Due to the supersymmetry one can argue nonrenormalization theorems for the external legs. 
Furthermore it is pointed out that there are no UV divergences, if the amplitudes are modular invariant. 
%Possible UV divergences correspond to bad parameterizations of the moduli space and due to the absence of anomalies in %superstring theories one can always globally change the parametrization to a more appropriate one. 
Another approach is taken in \cite{Gliozzi:1987aw}. If one assumes that only the transverse degrees of freedom are dynamically relevant (i.e. one starts from a light-cone quantization perspective) and the multiloop amplitudes are modular invariant, one can construct relations between the theta-functions in which the multiloop-amplitudes are expressed. These relations rely on representation properties of the modular group and yield a vanishing vacuum amplitude, if the string spectrum contains spacetime fermions. A more analytic approach to discuss the general form of the scattering amplitudes (both bosonic as supersymmetric) is used in \cite{Mandelstam:1991tw}. Also finiteness of the superstring is discussed in terms of the vanishing of the dilaton tadpole amplitude as a combined consequence of supersymmetry and dynamics. 
\\*

This brief overview tells us that bosonic string theory in a 26-dimensional Minkowski spacetime does have divergences due to the presence of a tachyon and seems to be inconsistent due to a non-vanishing cosmological constant and a non-vanishing dilaton tadpole. We are led to believe that Minkowski spacetime is not a solution to the equations of motion of the closed bosonic string and that it can not serve as a consistent background for the closed bosonic string. However, we do notice that modular invariance implies that UV divergences due to massive string states do not appear. In superstring theory there is no tachyon present, therefore this divergence is not present. The cosmological constant and the dilaton tadpole in superstring theories vanish by virtue of the supersymmetric properties of the two-dimensional super-conformal field theory. 
\chapter*{Acknowledgments}

We would like to thank Raphael Benichou, Ben Craps, Alexei Koshelev, Amitabh Virmani and Walter Troost for stimulating discussions and the Modave organizers for the opportunity to give these lectures and for use of the Baratin. W.S.\ is and B.V.\ was, for the largest part of this work, supported by an aspirant fellowship from FWO-Vlaanderen. The work of W.S. was supported in part by the Belgian Federal Science Policy Office through the Interuniversity Attraction Pole P6/11, and in part by the FWO-Vlaanderen through project G.0428.06. The work of B.V. was supported by the Federal Office for Scientic, Technical and Cultural Affairs through the Interuniversity Attraction Poles Programme Belgian Science Policy P6/11-P and by the project G.0235.05 of the FWO-Vlaanderen and by DSM CEA-Saclay, by the ANR grant 08-JCJC-0001-0, and by the ERC Starting Independent Researcher Grant 240210 - String-QCD-BH.
%\include{Bibliography}

% \addcontentsline{toc}{chapter}{Appendix}
% \setcounter{chapter}{0}
\appendix
\renewcommand{\theequation}{\Alph{chapter}.\arabic{equation}}
\renewcommand{\thechapter}{\Alph{chapter}}
\pagestyle{plain}
\chapter{The path integral measure\label{BWapp:Gaussians}}

\emph{We have not properly defined the integration measures appearing in the path integral. The problem lies in the functional integration, which comes down to an infinite limit of a number of ordinary integrations. In fact, the measure could be traced back from the careful construction of the path integral. But there is an easier way. We determine the normalization of the measure in an indirect manner, and fix it completely by demanding functional integration obeys the same integration rules as ordinary integration.}\\[12pt]

Therefore, consider the ordinary Gaussian integral
\begin{equation}
 \int \BWd x \,\BWe^{-\frac12  x^2 } = \sqrt{{\pi}}\,.\label{BWeq:xintegral}
\end{equation}
We can indirectly define the measure by fixing the value of this integral. When integrating using the ordinary Lebesgue measure, we would obtain as a result $\sqrt{\pi}$. However, for convenience we can pick the value of the integral to be 1, thus fixing the normalization of the measure in an indirect way:
\begin{equation}
\int \BWd x \,\BWe^{-\frac12  x^2} \equiv 1\,.
\end{equation}
In short, a norm on $\mathbb{R}$, in this case $||x||^2 = x^2$, provides us with a natural measure.

For a functional integral, we can take an analogous path. The measures we are considering are  $\BWD g$ and $\BWD X$. We also need to consider measures on $\BWgauge$. In the following, we write infinitesimal generators of these groups in terms of vector fields (Diff: $\sigma^a \to \sigma^a + v^a$) and a scalar function (Weyl: $g_{ab} \to g_{ab}+\phi(\sigma) g_{ab}$). And finally, from the discussion of the gauge slice, we know that we will end up with an integral over the moduli. The latter are real functions, so for these we can use the Lebesgue measure. The natural generalization of the above Gaussian integral to symmetric two-tensor $h_{ab}$, spacetime vectors $X^\mu$, worldsheet vector fields $v^a$ and a scalar $\phi$ is given by:
\begin{align}
1 &= \int \BWD h \,\BWe^{-\tfrac12||h||^2}\nonumber\\
1 &= \int \BWD X \,\BWe^{-\tfrac12||X||^2}\nonumber\\
1 &= \int \BWD v \,\BWe^{-\tfrac12||v||^2}\nonumber\\
1 &= \int \BWD \phi \,\BWe^{-\tfrac12||\phi||^2} \label{eq:Path_Gaussian}
\end{align}
As before, the integration measure is determined indirectly from some notion of length on the integration space. The (quite natural) norms we consider are written in terms of integrals over the worldsheet as:
\begin{align}
  ||h||^2&=\int \BWd^2\sigma \sqrt{g} g^{ac}g^{bd} h_{ab}h_{cd}\,,\nonumber\\
  ||X||^2&=\int \BWd^2\sigma \sqrt{g}X^\mu X_\mu\,,\nonumber\\
  ||v||^2&=\int \BWd^2\sigma \sqrt{g}g_{ab}v^a v^b\,,\nonumber\\
  ||\phi||^2&=\int \BWd^2\sigma \sqrt{g}\phi^2\,.\label{eq:Path_Scalar product}
\end{align}
These notions of length also define a natural inner product, which we denote $\langle\cdot,\cdot\rangle$. 
% We will take a slightly more general scalar product for symmetric two-tensors $h$, by writing it as an sum of the symmetric traceless part and a trace part:
% \begin{align}
%     ||h||^2&=\int \BWd^2\sigma \sqrt{g} (G^{abcd} h_{ab}h_{cd} + \lambda g^{ab} g^{cd} h_{ab} h_{cd})\,,\nonumber\\
%     G^{abcd}&=g^{ac}g^{bd} + g^{ad}g^{bc} -g^{ab}g^{cd}\,,
% \end{align}
% where we have allowed for an arbitrary real parameter $\lambda$ in front of the (orthogonal) trace part.
For path integration, we then take on the following three conditions:
\begin{itemize}
\item The rules of path integration (linearity, change of variables\ldots) are the same as for ordinary differentiation
\item The measures obey the condition \eqref{eq:Path_Gaussian} for Gaussian integration
\item The normalizations of the measures are fixed by eq. \eqref{eq:Path_Scalar product}
\end{itemize}

These three conditions define and fix the measure completely and make it possible to define path integration for Gaussian integrands \cite{D'Hoker:1985pj,D'Hoker:1988ta,Polchinski:1985zf}. We then have the result for matrices in the exponential:
\begin{align}
(\det M)^{-1/2} &= \int \BWD X \BWe^{-\tfrac12\langle X, M X\rangle}
\end{align}
and likewise for the other fields. As an example, we have the following result for Gaussian integrals of $X^\mu$:
\begin{align}
  \int \BWD X &\exp \left(\int \BWd^2 \sigma \,A X^\mu(\sigma)X_\mu(\sigma) + \int \BWd^2 \sigma J_\mu(\sigma) X^\mu(\sigma)\right)\nonumber\\
&=\prod_\mu \int \BWD X^\mu \exp \left(\int \BWd^2 \sigma \,A (X^\mu(\sigma)+J^\mu)(X_\mu(\sigma)+J) -\int \BWd^2 \sigma J_\mu(\sigma) X^\mu(\sigma)\right)\nonumber\\
&=\left(\frac1A\right)^{D/2} \exp\left(-A^{-1} \int \BWd^2 \sigma J^\mu J_\mu\right)
\end{align}
where $D$ is the dimensionality of spacetime (the number of $X^\mu$ components), we completed the squares in the second line and performed a Gaussian integration in the last line.

Finally, note that the scalar products and norms \eqref{eq:Path_Scalar product} are not all Weyl and Diff invariant. In particular,  the norm of Weyl transformations is not invariant under Weyl transformations, as $\sqrt{g}$ not Weyl invariant. As a consequence, the path integral measure $\BWD\phi$ implicitly defined through the above norm $||\phi||^2$ cannot be Weyl invariant either and the Polyakov path integral has a Weyl anomaly -- the Weyl symmetry is not pertained at the quantum level, see section \ref{ss:RemarksMeasure}. As discussed there, only when $D=26$, the anomaly disappears and the symmetries of the classical theory carry over to the quantum string theory.

\chapter{Some details on the Faddeev-Popov procedure}\label{BWapp:FaddeevPopov}

\emph{
The main idea in using the Faddeev-Popov trick, is to represent the so-called Faddeev-Popov determinant (the analogue of the Jacobian we discussed before) as a path integral over ghost fields. We first give some insight on the appearance of ghost fields with the wrong spin-statistics, and then show how the Faddeev-Popov can be applied in detail. 
}\\[12pt]

Consider the complex integral
\begin{equation}
 \int \BWd z \BWd \bar z \BWe^{-\frac12 a z \bar z } = \frac{\pi}{a} \sim \frac 1 a\,.\label{BWeq:zintegral}
\end{equation}
as a toy example. The result ($\frac 1 a$ up to a numerical prefactor) can be inverted with a simple trick. Write down the same integral, but now over Grassmannian variables $\theta, \bar \theta$. Using the rules of Grassmannian integration (i.e.\ it acts as a differentiation), we get:
\begin{equation}
 \int \BWd \theta \BWd \bar \theta \BWe^{-\frac12 a \theta \bar \theta } = a/2 \sim a\,,
\end{equation}
we get exactly the inverse of the bosonic integral \eqref{BWeq:zintegral}, up to a prefactor we can always absorb in the measure. This result generalizes to matrices (you can see this by going to a basis of eigenvectors):
\begin{align}
  \int \prod_{i=1}^N \BWd z_i \BWd \bar z_i \BWe^{-\frac12 z_i A_{ij} \bar z_j } &\sim (\det A)^{-1} \label{BWeq:app-FP_Gaussian_integral} \\
\int \prod_{i=1}^N \BWd \theta_i \BWd \bar \theta_i \BWe^{-\frac12  \theta_i A_{ij}\bar \theta_j } &= \det A
\end{align}
and we can play the same game for operators in functional differentiation, by expanding the fields in the functional integral in terms of eigenmodes of the operators in the exponentials. We conclude that in principle, we should be able to represent the Jacobian determinant we put forward in chapter \ref{BWc:Fixing} by an integral over Grassmann numbers. Because the integration variables are associated to the generators of bosonic symmetries (i.e.\ diffeomorphisms on the world sheet and conformal Killing vectors), they obey the wrong spin-statistics: we will have anticommuting vectors and tensors. The ghost fields for the bosonic string are the so-called $b$ and $c$ ghosts. The field $b$ is a symmetric, traceless tensor, related to the gauge fixing that gets rid of the path integral over the metric $g_{ab}$, while the $c$ ghosts are related to the gauge fixing that fixes a certain number of vertex operator positions.

The above trick is applied to the Faddeev-Popov procedure by means of a delta-functional. Consider the following property of the Dirac delta function over the real numbers:
\begin{equation}
 \delta(x-x_0)=|f'(x_0)|\delta(f(x)) \,,
\end{equation}
where $x_0$  is a zero of the function $f$ (i.e.\ $f(x_0)=0$) and a prime denotes the derivative. When there are multiple zeroes, one should consider a sum over all zeroes of the function $x_0$. We see that the proportionality factor between these two functions is given by the first order derivative.  Taking the integral on both sides give $1 = |f'(x_0)| \int  \BWd x\delta(f(x)) $. The result immediately applies to vector  functions of $\mathbb{R}^n \to \mathbb{R}$. Taking the integral on both sides would then yield:
\begin{equation}
 1 = \int  \left(\prod_i \BWd x^i \right) J \delta(f(\vec x))\,,
\end{equation}
where $J$ is the Jacobian of the transformation $f(x)$:
\begin{equation}
 J = \left|\frac{\partial f^i(\vec x)}{\partial x^j}\right|\,.
\end{equation}
By analogy, one extends this result to functional integration. Say we have a field configuration $\phi$ (possibly multi-dimensional) and some gauge transformations which we denote by $\Omega$. Then the above results generalizes to:
\begin{equation}
 1 = \Delta_{FP}(\phi_0)\int \BWd t\BWD \Omega \delta(F(\phi^\Omega(t)))\,,\label{BWeq:FP}
\end{equation}
where the integral is taken over the gauge group. The generalization of the Jacobian determinant $J$, denoted $\Delta_{FP}$, is the Faddeev-Popov determinant one would like to pin down. The function $F$ is a gauge-fixing constraint. The gauge-fixing value for the fields $\phi^\Omega(t)$. It depends on the gauge choice $\Omega$ and possibly on moduli $t$ that can vary along the gauge slice. The Faddeev-Popov trick is now simply to insert the function \eqref{BWeq:FP} into the path integral. Since it equals one, we are not changing the result. But inserting this function allows one to easily decouple the integrals over the gauge group and the moduli space, while the integrand is automatically restricted to lie on the gauge slice through the delta functional. So far so good: we have an easier way of restricting the integral to the gauge slice and to decouple the integral over the gauge group. But how to calculate the ``Jacobian'' of this transformation, the Faddeev-Popov determinant $\Delta_{FP}$? This is done by representing the delta functional as a path integral. One can rewrite \eqref{BWeq:FP} as 
\begin{equation}
  \Delta_{FP}^{-1}(\phi_0) = \int \BWd t\BWD \Omega \delta(F(\phi^{\Omega}(t))\,.
\end{equation}
Using standard path integral methods, one can rewrite the delta functional on the right-hand side as a (Gaussian) path integral over bosonic fields (commuting numbers). To invert the determinant, one simply writes down the same path integral, but now over anticommuting fields. In general, the fields in the path integral correspond to the gauge symmetries one considers in the gauge-fixing procedure.

In conclusion, we see that the Faddeev-Popov procedure allows one to replace the Jacobian determinant we had before in terms of a path integral over ghost fields. The restriction of the regularized path integral to a chosen gauge slice is obtained by means of the function $F(\phi^{\Omega}(t))$. The procedure simplifies the calculation and allows for a convenient characterization of the superstring in particular. Also in the bosonic string, the Faddeev-Popov procedure can shed light on the structure of the theory.  We do not consider this method further, as we are only interested in aspects of the bosonic string that do not need the introduction of ghost fields.

\cleardoublepage
%\clearpage
\pagestyle{plain}
\def\href#1#2{#2}
\bibliographystyle{BibliographyStyle}
\addcontentsline{toc}{chapter}{\sffamily\bfseries Bibliography}
%\bibliography{BiblioDP}

\bibliography{StringAmplitudes}

\end{document}